\pgfplotsset{compat=1.18}
\pgfplotsset{compat=newest}
\DeclareAcronym{SCM}{short=\textsc{SCM}, long = Sample Covariance Matrix}
\DeclareAcronym{InSAR}{short=InSAR, long = Interferometric Synthetic Aperture Radar}
\DeclareAcronym{PO}{short=\textsc{PO}, long = Phase-Only Sample Covariance Matrix}
\DeclareAcronym{KL}{short=\textsc{KL}, long = Kullback-Leibler}
\DeclareAcronym{PL}{short=\textsc{PL}, long = Phase-Linking}
\DeclareAcronym{MLE}{short=\textsc{MLE}, long = Maximum Likelihood Estimator}
\DeclareAcronym{SAR}{short=\textsc{SAR}, long = Synthetic Aperture Radar}
\DeclareAcronym{CCG}{short=\textsc{CCG}, long = Complex Circular Gaussian}
\DeclareAcronym{IPL}{short=\textsc{IPL}, long = Interferometric Phase Linking}
\DeclareAcronym{COFI-PL}{short=\textsc{COFI-PL}, long = COvariance Fitting Interferometric Phase Linking}
\DeclareAcronym{MM}{short=\textsc{MM}, long = Majorization-Minimization}
\DeclareAcronym{S-COFI-PL}{short=\textsc{S-COFI-PL}, long = Sequential COvariance Fitting Interferometric Phase Linking}
\DeclareAcronym{MSE}{short=\textsc{MSE}, long = Mean Squared Error}
\DeclareAcronym{RMSE}{short=\textsc{RMSE}, long = Root Mean Squared Error}
\DeclareAcronym{RGD}{short=\textsc{RGD}, long = Remannian Gradient Descent}
\DeclareAcronym{COMET}{short=\textsc{COMET}, long = COvariance Matching Estimation Technique}
\DeclareAcronym{FN}{short=\textsc{FN}, long = Frobenius Norm}
\DeclareAcronym{CES}{short=\textsc{CES}, long = Complex Elliptically Symmetric}
\DeclareAcronym{MT-InSAR}{short=MT-InSAR, long = Multi-Temporal Interferometric Synthetic Aperture Radar}
\DeclareAcronym{PSI}{short=\textsc{PSI}, long = Permanent Scatterer Interometry}
\DeclareAcronym{DSI}{short=\textsc{DSI}, long = Distributed Scatterer Interferometry }
\DeclareAcronym{PS}{short=\textsc{PS}, long = Permanent Scatterer}
\DeclareAcronym{DS}{short=\textsc{DS}, long = Distributed Scatterer}
\DeclareAcronym{SBAS}{short=\textsc{SBAS}, long = Small BAseline Subset}
\DeclareAcronym{ILS}{short=\textsc{ILS}, long = Integer Least Square}
\DeclareAcronym{EMI}{short=\textsc{EMI}, long = Eigendecomposition based Maximum likelihood estimator of Interferometric phase}
\DeclareAcronym{EVD}{short=\textsc{EVD}, long = EigenValue Decomposition}
\DeclareAcronym{PCA}{short=\textsc{PCA}, long = Principal Component Analysis}
\DeclareAcronym{NRT}{short=\textsc{NRT}, long = Near Real Time}
\DeclareAcronym{MLEPL}{short=\textsc{MLE-PL}, long = Phase Linking based on Maximum Likelihood Estimation}
\DeclareAcronym{SMLEPL}{short=\textsc{S-MLE-PL}, long = Sequential Phase Linking based on Maximum Likelihood Estimation}
\DeclareAcronym{i.i.d}{short=i.i.d, long = independent and identically distributed}
\DeclareAcronym{SVD}{short=\textsc{SVD}, long = Singular Value Decomposition}
\DeclareAcronym{BCD}{short=\textsc{BCD}, long = Block Coordinate Descent }
\DeclareAcronym{SK-PO}{short=\textsc{SK-PO}, long = Shrinking PO to identity}
\DeclareAcronym{SK-SCM}{short=\textsc{SK-SCM}, long = Shrinking SCM to identity}
\DeclareAcronym{BW-PO}{short=\textsc{BW-PO}, long = PO tapering}
\DeclareAcronym{BW-SCM}{short=\textsc{BW-SCM}, long = SCM tapering}
\DeclareAcronym{ERGAS}{short=\textsc{ERGAS}, long =  Relative Adimensional Global Error in Synthesis}
\DeclareAcronym{SAM}{short=\textsc{SAM}, long= Spectral Angle Mapper}
\DeclareAcronym{SCC}{short=\textsc{SCC}, long = Spatial Correlation Coefficient}
\DeclareAcronym{UQI}{short=\textsc{UQI}, long = Universal image Quality Index}
\DeclareAcronym{SSIM}{short=\textsc{SSIM}, long = Structural SIMilarity index}
\begin{document}
\title{Sequential Covariance Fitting \\
    for InSAR Phase Linking}
\author{Dana~EL~HAJJAR,~\IEEEmembership{Student~Member,~IEEE,}
        Guillaume~GINOLHAC,~\IEEEmembership{Senior~Member,~IEEE,}
        Yajing~YAN,~\IEEEmembership{Member,~IEEE,}
        and~Mohammed~Nabil~EL~KORSO 
\thanks{D. EL HAJJAR is with LISTIC (EA3703), University Savoie Mont-Blanc, and with L2S, CentraleSupélec, University Paris-Saclay. G. GINOLHAC and Y. YAN are with LISTIC (EA3703), University Savoie Mont-Blanc. M.N. EL KORSO is with L2S, CentraleSupélec, University Paris-Saclay. The source code is available on GitHub at the following address: \href{https://github.com/DanaElhajjar/S-COFI-PL}{https://github.com/DanaElhajjar/S-COFI-PL}}}

\markboth{Journal of \LaTeX\ Class Files} 
{Shell \MakeLowercase{\textit{et al.}}: Bare Demo of IEEEtran.cls for Journals}

\maketitle

\begin{abstract}
    Traditional \ac{PL} algorithms are known for their high cost, especially with the huge volume of \ac{SAR} images generated by Sentinel-$1$ \ac{SAR} missions. Recently, a \ac{COFI-PL} approach has been proposed, which can be seen as a generic framework for existing \ac{PL} methods. Although this method is less computationally expensive than traditional PL approaches, \ac{COFI-PL} exploits the entire covariance matrix, which poses a challenge with the increasing time series of \ac{SAR} images. However, \ac{COFI-PL}, like traditional \ac{PL} approaches, cannot accommodate the efficient inclusion of newly acquired \ac{SAR} images. This paper overcomes this drawback by introducing a sequential integration of a block of newly acquired \ac{SAR} images. Specifically, we propose a method for effectively addressing optimization problems associated with phase-only complex vectors on the torus based on the \acl{MM} framework.
\end{abstract}

\begin{IEEEkeywords}
Interferometric  Phase  Linking, sequential, newly acquired \ac{SAR} images, covariance fitting problem
\end{IEEEkeywords}
\IEEEpeerreviewmaketitle

\section*{Notations}
    \begin{table}[htbp]
        \centering
        \begin{tabularx}{\columnwidth}{l X}
            \textbf{Notations} & \textbf{Description} \\
            $\circ$ & element-wise (Hadamard) product \\
            $.^H$ & transposed and complex conjugated operator (Hermitian) \\
            $.^T$ & transposed operator \\
            $\widetilde{.}$ & tilde accentuation representing the entire dataset \\
            $\Bar{.}$ & bar accentuation representing the new items \\
            $.^*$ & complex conjugated operator \\
            $\textit{Re}(.)$ & complex number real component \\
            $\text{tr}(.)$ & trace matrix operator \\
            $\text{log}(.)$ & logarithm operator \\
            $\mathbb{E}(.)$ & first-order moment operator \\
            $||.||^2_F$ & frobenius norm \\
            $|.|$ & modulus operator \\
        \end{tabularx}
    \end{table}

\section{Introduction}
     \IEEEPARstart{I}{\MakeLowercase{n}SAR} techniques have seen a great development and improvement with the immense number of available \acl{SAR} (\ac{SAR}) images \cite{osmanouglu2016time}, thanks to ongoing remote sensing missions such as Sentinel$-1$, Terra\ac{SAR}-X, etc.  \acl{InSAR} (\acs{InSAR}) is widely used to monitor land movements for various applications such as volcano monitoring, natural disaster damages, urban development. The first \acs{InSAR} techniques involved using two \acs{SAR} images acquired at two different dates ($2$-pass \acs{InSAR}) to estimate ground movements. However, when the dates are far apart, the coherence between the two images 
     
    \section*{Acronyms}
    \begin{table}[!h]
        \centering
        \begin{tabularx}{\columnwidth}{l X}
            \acs{CCG} & \acl{CCG} \\
            \acs{COFI-PL} & \acl{COFI-PL} \\
            \acs{COMET} & \acl{COMET} \\
            \acs{DS} & \acl{DS} \\
            \acs{DSI} & \acl{DSI} \\
            \acs{EVD} & \acl{EVD} \\
            \acs{ILS} & \acl{ILS} \\
             \acs{InSAR} & \acl{InSAR} \\
            \acs{KL} & \acl{KL} \\
            \acs{MLE} & \acl{MLE} \\
            \acs{MLEPL} & \acl{MLEPL} \\
            \acs{MM} & \acl{MM} \\
            \acs{MSE} & \acl{MSE} \\
            \acs{MT-InSAR} & \acl{MT-InSAR} \\
            \acs{PCA} & \acl{PCA} \\
            \acs{PL} & \acl{PL} \\
            \acs{PO} & \acl{PO} \\
            \acs{PS} & \acl{PS} \\
            \acs{PSI} & \acl{PSI} \\
            \acs{RGD} & \acl{RGD} \\
            \acs{SAR} & \acl{SAR} \\
            \acs{SBAS} & \acl{SBAS} \\
            \acs{SCM} & \acl{SCM} \\
            \acs{S-COFI-PL} & \acl{S-COFI-PL} \\
            \acs{SMLEPL} & \acl{SMLEPL} \\
        \end{tabularx}
    \end{table}
    
    \noindent weakens, resulting in a significant information loss. As a result, leveraging the vast availability of \ac{SAR} images from various missions, time-series-based approaches have been proposed to exploit the temporal correlation among these images. This concept can be summarized as \acl{MT-InSAR} (\acs{MT-InSAR}). 
    An essential family in \ac{MT-InSAR} analysis is called \acl{DSI} (\ac{DSI}) which exploits groups of homogeneous scatterers called \acl{DS} (\acs{DS}). A famous approach in \ac{DSI}, called \acl{PL} (\acs{PL}) \cite{ferretti2011new, ansari2018efficient, guarnieri2008exploitation, vu2022new, 9763551}, consists to retrieve \ac{SAR} image phases from the full \ac{SAR} images time series. 
    For an overview of existing \ac{PL} algorithms, refer to \cite{10261889, cao2015mathematical}.
    The general principle consists of two steps: first, estimating the \acl{SCM} (\ac{SCM}) and using its modulus as a plug-in of the unknown coherence matrix, and second, estimating the phases vector of \ac{SAR} images. However, this plug-in is non optimal due to the fact that it does not represent the \acl{MLE} (\ac{MLE}) of the coherence matrix, since the modulus operator is not holomorphic.
    The second approach, introduced in \cite{vu2022new, 9763551, vu2023robust}, presents the true \ac{MLE} of the coherence matrix jointly with the phases vector and will be referenced as \ac{MLEPL}, for \acl{MLEPL}, in the remaining of the paper.
    
    \noindent
    For both approaches, the process is based on the optimization of the negative log-likelihood function of the data following a zero mean \acl{CCG} (\acs{CCG}) distribution, which may not always fit the data.
    Studies such as \cite{mian2018new, vu2023robust} have shown that \ac{SAR} images are better modeled by a heavy-tailed distribution than a Gaussian distribution.
    Some works developed versions of the \ac{PL} based on a Non-Gaussian distribution \cite{vu2023robust} which demonstrated an improvement in the results. 
    These  algorithms are computationally intensive as they rely on the full covariance matrix estimation and consist of an iterative algorithm with two steps \cite{vu2022new, vu2023robust}: \textit{i)} estimating the phases vector, with an iterative \acl{MM} (\ac{MM}) algorithm, and \textit{ii)} estimating the coherence matrix. These two steps are also part of an iterative algorithm called \acl{BCD} (\ac{BCD}). Additionally, in the non-Gaussian case, the estimation of the covariance matrix is computationally intensive, as it involves Tyler-type estimators \cite{vu2023robust}, for which the estimation process itself is recursive.
    
    \noindent
    From above, it is clear that the estimation of the covariance matrix plays an important role. 
    An alternative of the \ac{MLE} approach was proposed in \cite{ottersten1998covariance}, called \acl{COMET} (\ac{COMET}), often requiring less computational effort while maintaining similar asymptotic properties. They consist of a set of statistical methods used for parameter estimation by matching the empirical and theoretical covariance matrices. They gained attention in the signal processing field due to their efficiency in solving various estimation problems. Covariance fitting approach was used in \acs{InSAR} in \cite{vu2024covariance}, namely \acl{COFI-PL} (\acs{COFI-PL}). 
    The main idea is to refine the structure of a covariance matrix estimator by minimizing a projection criterion. \ac{COFI-PL} can be seen as a generalization of most of the existing \ac{PL} problems \cite{vu2024covariance} while maintaining a lower cost than traditional \ac{PL} methods. The problem centers on two key components: the chosen method for covariance matrix estimation and the selected matrix distance. These factors can significantly influence the overall performance. Various covariance matrix plug-in can be used such as the \ac{SCM}, the \acl{PO} (\ac{PO}), Tyler estimator, etc.  \, Additionally, regularization techniques can also be adopted.  Although theoretically less effective, \acs{COFI-PL} has demonstrated strong and promising results in practice with real-world data \cite{vu2024covariance}. A key advantage of this approach is its robustness, which stems from to the choice of the covariance matrix plug-in. 

    \noindent
    However, when a new image or several new images are added, traditional methods become computationally intensive because they depend on the estimation of the full covariance matrix. This matrix increases with each new \ac{SAR} image, necessitating a complete re-estimation every time a new image is introduced. As a result, traditional methods struggle to manage this growth, imposing significant constraints on real-time processing of \acs{SAR} image time series. This often leads to inefficiencies and may fall short of operational requirements.
    
    \noindent
    To the best of our knowledge, only few studies addressed the aforementioned issue. Specifically, a sequential \ac{PL} approach was proposed in \cite{elhajjar2024}, namely, \acl{SMLEPL} (\acs{SMLEPL}), which incorporates each newly acquired \ac{SAR} image with a much lower computational cost while keeping the same performances as \ac{MLEPL}. Despite the good performance of this approach and the reduction in computation time, it incorporates new \ac{SAR} images one by one within a Gaussian model framework. 
    A sequential approach, proposed in \cite{ansari2017sequential}, is able to incorporate a bloc of new images. It is based on treating each mini-stack by applying the classic \ac{PL} and then compressing it into a single virtual image through \acl{PCA} (\ac{PCA}). Each obtained virtual image is then connected to the next mini-stack. This method uses a partial coherence matrix, which can limit the amount of information available for analysis and reduce the accuracy of the estimates.

    In this paper, we aim to develop a sequential approach that incorporates a stack of new \ac{SAR} images, based on \ac{COFI-PL} framework. We employ two matrix distances : \acl{KL} (\acs{KL}) divergence and the Frobenius Norm, along with various covariance matrix estimators and potential regularization techniques. 
    For each problem, we propose a \ac{MM} algorithm to solve it. Results from simulations and real data demonstrate comparable outcomes to the benchmark, namely the offline approach \ac{COFI-PL}.
    
    In the following, a brief overview of \ac{COFI-PL} is proposed in Section \ref{section:cov_fitting_offline}.  We present two matrix distances in Section \ref{subsec:mat_dist}, different estimators of the covariance matrix in Section \ref{subsec:plug_in} and various ways to solve each problem in Section \ref{subsec:algo}. The description of our contribution starts at section \ref{section:seq_cov_fitting}.
    Validation with experiments and real data are presented in Sections \ref{section:exp} and \ref{section:real_data}.

\section{Generic covariance fitting problem}
\label{section:cov_fitting_offline}
Covariance fitting problems are prevalent across various domains, and several works have proposed different versions to address them \cite{meriaux2019recursions, meriaux2019robust, hu2010psf, werner2008estimation, meriaux2017robust, meriaux2020matched, vu2024covariance}.
\acs{COMET} consists of two steps: first, to estimate a plug-in of the covariance matrix, and second, to use a fitting criterion between the computed plug-in and a particular structure of the covariance matrix.
 In \ac{InSAR}, \cite{bai2023lamie, vu2024covariance} proposed covariance fitting approaches to estimate phases while respecting the phase closure property.

\subsection{\ac{InSAR} model and optimization problem}
For a given stack of $l$ \ac{SAR} images, we denote $\{\mathbf{\widetilde{x}}^i\}_{i=1}^n$ the local homogeneous spatial neighborhood of size $n$ for each pixel, where $\mathbf{\widetilde{x}}^i \in \mathbb{C}^{l}$, for all $i \in [\![1,n]\!]$. 
To ensure the temporal phase closure property, and based on standard physical properties \cite{guarnieri2008exploitation}, the covariance matrix of $\{\mathbf{\widetilde{x}}^i\}_{i=1}^n$ can be formulated as 
\begin{equation}
   \mathbf{\widetilde{\Sigma}} = \mathbf{\widetilde{\Psi}} \odot \mathbf{\widetilde{w}}_{\mathbf{\theta}}\mathbf{\widetilde{w}}_{\mathbf{\theta}}^H
   \label{cov_mat_struc}
\end{equation}
where $\mathbf{\widetilde{\Psi}}$ is the real core of the covariance matrix ($\mathbf{\widetilde{\Psi}} = |\mathbf{\widetilde{\Sigma}}|$), i.e. the coherence matrix scaled by the variances coefficients, $\mathbf{\widetilde{w}}_{\mathbf{\theta}}$ denotes the vector of the exponential of the complex phases. %

\noindent
In \ac{InSAR}, the goal is to estimate the phases vector $\mathbf{\widetilde{w}}_{\mathbf{\theta}}$. Using the structure of the covariance matrix in equation (\ref{cov_mat_struc}), the vector $\mathbf{\widetilde{w}}_{\mathbf{\theta}}$ can be estimated with \ac{COMET}. At this point, the optimization problem is represented as 
\begin{equation}
    \begin{array}{c l}
    \underset{\mathbf{\widetilde{w}}_{\theta}}{\rm minimize}
    & 
    f^d_{\mathbf{\widetilde{\Sigma}}} (\mathbf{\widetilde{\Sigma}}, \mathbf{\widetilde{\Psi}} \circ \mathbf{\widetilde{w}}_{\theta} \mathbf{\widetilde{w}}_{\theta}^H)
        \\
        {\text{ subject~to} }
    & \theta_1 = 0
    \\
     & \mathbf{\widetilde{w}}_{\theta} \in \mathbb{T}_l \\
     
    \end{array}
    \label{eq:prob_optim_fitting_offline}
\end{equation}
with $\mathbb{T}_l = \{  \mathbf{\widetilde{w}}_{\theta} \in \mathbb{C}^l  \, | \, |[\mathbf{\widetilde{w}}_{\theta}]_i| = 1, \forall i \in [1, l] \}$. 
The objective function of the problem (\ref{eq:prob_optim_fitting_offline}) is a matrix distance between the plug-in of the covariance matrix (possibly regularized) and its projection where the subscript $\mathbf{\widetilde{\Sigma}}$ consists of the covariance matrix plug-in, the superscript $\textit{d}$ denotes the chosen distance. 
Based on \cite{vu2024covariance}, we consider two distances, the \acl{KL} (\ac{KL}) divergence and the Frobenius norm. For the plug-in, the classic \ac{SCM} as well as a robust version will be used. For regularization, we consider the shrinking to identity and the tapering. The primary objective is to demonstrate the sequential adaptation of the approach rather than evaluating the efficiency of each distance and each estimator. The corresponding distances are presented in the following section.

\subsection{Matrix distances}
\label{subsec:mat_dist}
In this section, we introduce the \ac{KL} divergence and the Frobenius norm. 

\vspace{0.5cm}

\subsubsection{\texorpdfstring{\acl{KL} divergence}{KL divergence}}
\hfill

The \ac{KL} divergence 
measures the similarity between two probability density functions. 
Between two centered Gaussian distributions, the \ac{KL} divergence has the following form
\begin{equation}
    \begin{aligned}
        KL(\mathcal{CN}(0, \mathbf{\Sigma}_1) & \parallel \mathcal{CN}(0, \mathbf{\Sigma}_2)) \\
        & = \text{tr}(\mathbf{\Sigma}_2^{-1} \mathbf{\Sigma}_1) + \log(|\mathbf{\Sigma}_2 \mathbf{\Sigma}_1^{-1}|) - l
    \end{aligned}
\end{equation}

\noindent
In our context, $\mathbf{\Sigma}_1$ represents the plug-in of the covariance matrix, and $\mathbf{\Sigma}_2$ corresponds to the covariance matrix structure in equation (\ref{cov_mat_struc}), leading to the following structure of the objective function
\begin{equation}
\label{cost_fct_KL}
    f^{\text{KL}}_{\mathbf{\widetilde{\Sigma}}}(\mathbf{\widetilde{w}}_{\theta}) = \mathbf{\widetilde{w}}_{\theta}^H (\mathbf{\widetilde{\Psi}}^{-1} \circ \mathbf{\widetilde{\Sigma}}) \mathbf{\widetilde{w}}_{\theta} \\
\end{equation}
The \ac{KL} divergence, shown in equation (\ref{cost_fct_KL}), between the Gaussian distribution with \ac{SCM} as a plug-in for the covariance matrix and the Gaussian distribution with the structure of the covariance matrix in equation (\ref{cov_mat_struc}), corresponds to the objective function of the classic \ac{PL} \cite{guarnieri2008exploitation}.

\vspace{0.5cm}

\subsubsection{Frobenius norm}
\hfill

The Frobenius norm (also called Euclidean distance) of a matrix is defined as the sum of the absolute squares of its elements. The Frobenius norm between two symmetric matrices is defined as
\begin{equation}
    d^2_F(\mathbf{\Sigma}_1, \mathbf{\Sigma}_2) = || \mathbf{\Sigma}_1 - \mathbf{\Sigma}_2 ||^2_F
\end{equation}
By setting $\mathbf{\Sigma}_1 = \mathbf{\widetilde{\Sigma}}$ and $\mathbf{\Sigma}_2 = \mathbf{\widetilde{\Psi}} \odot \mathbf{\widetilde{w}}_{\theta}\mathbf{\widetilde{w}}_{\theta}^H$, the simplified form of the corresponding objective function is obtained as
\begin{equation}
\label{cost_fct_LS}
    f^{\text{FN}}_{\mathbf{\widetilde{\Sigma}}}(\mathbf{\widetilde{w}}_{\theta}) = - 2 \mathbf{\widetilde{w}}_{\theta}^H (\mathbf{\widetilde{\Psi}} \circ \mathbf{\widetilde{\Sigma}}) \mathbf{\widetilde{w}}_{\theta} \\
\end{equation}
As it appears, the main interest between the two objective functions is the absence of the covariance matrix inversion in equation (\ref{cost_fct_LS}).

\subsection{Covariance matrix plug-in}
\label{subsec:plug_in}

In this section, we present the various covariance matrix plug-in (Section \ref{subsubsect:unstruct_plugin}) and the possible regularization (Section \ref{subsubsect:regul_plugin}).
\
\vspace{0.5cm}

\subsubsection{Unstructured covariance matrix}
\label{subsubsect:unstruct_plugin}
\hfill

In this section, we present two estimators: the \acl{SCM} (\ac{SCM}) and the \acl{PO} (\ac{PO}). In most \ac{PL} approaches, the \ac{SCM} is used and yields efficient results. In \cite{vu2024covariance}, the \ac{PO} estimator was proposed and showed an improvement compared to other plug-in. 
In the remainder of the paper, the superscript index of a matrix represents the choice of the plug-in method ($U$ : unconstrained, $SK$ : shrinkage to identity regularization and $BW$ : bandwidth (tapering regularization)).

\hfill

\paragraph{\acl{SCM} (\ac{SCM})}

\ac{SCM} is a common plug-in for the covariance matrix in \ac{InSAR} processing \cite{guarnieri2008exploitation, ferretti2001permanent, fornaro2014caesar, ferretti2011new, cao2015mathematical}. This choice comes from the \ac{MLE} of the covariance matrix $\mathbf{\widetilde{\Sigma}}$ according to a centered \ac{CCG} model ($\mathbf{\widetilde{x}} \sim \mathcal{CN}(0, \mathbf{\widetilde{\Sigma}})$)
\begin{equation}
    \mathbf{\widetilde{\Sigma}}^{U} = \frac{1}{n} \sum_{i=1}^n \mathbf{\widetilde{x}}^i \mathbf{\widetilde{x}}^{iH} 
    \label{eq:SCM}
\end{equation}
\ac{SCM} plug-in for the covariance matrix yields good results, however it is not robust when the data deviate from Gaussianity.
\ac{SCM} plug-in for the covariance matrix constitutes an efficient estimator when the data follows a Gaussian distribution. When the data deviate from Gaussianity or present outliers, a major deterioration in performances is observed  \cite{vu2023robust, vu2024covariance}.

\hfill

\paragraph{\acl{PO} (\ac{PO})}

Let $\Phi_{\mathbb{T}}: x = r e^{i \theta} \xrightarrow \; e^{i \theta}$ and $\mathbf{\widetilde{y}} = \Phi_{\mathbb{T}}(\mathbf{\widetilde{x}})$ represents the vector obtained by applying the transformation $\Phi_{\mathbb{T}}$ to the vector $\mathbf{\widetilde{x}}$. The \ac{PO} plug-in is defined as follow
\begin{equation}
    \mathbf{\widetilde{\Sigma}}^{U} = \frac{1}{n} \sum_{i=1}^n \mathbf{\widetilde{y}}^i \mathbf{\widetilde{y}}^{iH} 
    \label{eq:PO}
\end{equation}
The advantage of the corresponding estimator is the mitigation of amplitudes variations among \ac{SAR} images \cite{cao2015mathematical}, unlike the \ac{SCM}. 
Additionally, it offers a balance between robustness and computation time, unlike the Tyler estimator, which is significantly more costly to compute.

\vspace{0.5cm}

\subsubsection{Covariance matrix regularization}
\label{subsubsect:regul_plugin}
\hfill

\ac{PL} faces several challenges including the quality of the covariance matrix estimation. In fact, the accuracy of the covariance matrix estimation depends on the sample size $n$, which must be more than twice the length of the time series. For long time series, in situations such as $n \approx l$ or $n < l$, the covariance matrix estimation is deemed inaccurate. Additionally, in such cases, the matrix inversion may also pose challenges due to the matrix being either ill-conditioned or singular. Moreover, matrix inversion is a required step in all the algorithms \cite{guarnieri2008exploitation, vu2022new, vu2023robust, vu2024covariance}.
A popular strategy to improve the covariance matrix estimation, is to use a regularization form \cite{ledoit2004well, chen2010shrinkage, rucci2010skp, chen2012shrinkage, pascal2014generalized, ollila2022regularized}. 
The introduction of this regularization in \ac{IPL} was discussed in \cite{even2018insar}, and several works proposed such regularization \cite{zwieback2022cheap, bai2023lamie, vu2024covariance, liang2024coherence, zhao2024regularized}

\hfill

\paragraph{Shrinkage to identity}

As known, most covariance matrix estimators are unsatisfactory for large covariance estimation problems since they require huge sample support. In standard \ac{PL} approaches, there is a significant chance of needing to invert a matrix such as \ac{SCM}. This requires a large sample size to obtain an accurate estimation of the inversion. Additionally, the sample size increases with the length of the time series. The most common strategy for dealing with the problem of few samples is to shrink the estimator in question to a target matrix. An attractive case is to shrink the estimator to a scaled identity matrix \cite{chen2010shrinkage, even2018insar}. In other words, it is a combination between the covariance matrix estimator and the target matrix. The shrinkage parameter $\beta$ has a closed form under an assumption of Gaussian model \cite{chen2012shrinkage}.
\begin{equation}
    \mathbf{\widetilde{\Sigma}}^{SK} = \beta \mathbf{\widetilde{\Sigma}}^U + (1 - \beta) \frac{ \text{tr}(\mathbf{\widetilde{\Sigma}}^U)}{l} \mathbf{I}_l 
    \label{eq:cov_mat_shrink}
\end{equation}
with  $\beta \in [0, 1]$. Several combinations can be formed as in \cite{chen2012shrinkage, chen2011robust, ollila2022regularized}.

\hfill

\paragraph{Covariance matrix tapering}

\ac{InSAR} approaches based on \ac{PL} exploit all possible combinations between \ac{SAR} images through the use of the covariance matrix. However, long \ac{SAR} images time series suffer from target decorrelations as time progresses, resulting in near zero coherence among image pairs. 
The idea of the covariance matrix tapering regularization is to exclude pairs of images suffering from low coherence, which is done through the bandwidth $b$ of the tapering matrix. Several techniques can be used to choose the value of $b$ \cite{bickel2008regularized, bickel2008covariance, ollila2022regularized}. The interest of such regularization was discussed in \cite{bai2023lamie, vu2024covariance, liang2024coherence, zhao2024regularized}. 
The covariance matrix tapering has the following form 
\begin{equation}
    [\mathbf{\widetilde{W}}(b)]_{ij} = 
    \begin{cases}
    1 & \text{if} \quad |i - j| \leq b \\
    0 & \text{otherwise}
    \end{cases}
    \label{eq:tap}
\end{equation}
Using (\ref{eq:tap}), the covariance matrix tapering has the following form:
\begin{equation}
    \mathbf{\widetilde{\Sigma}}^{BW} = \mathbf{\widetilde{W}}(b) \circ \mathbf{\widetilde{\Sigma}}^U
\end{equation}

\noindent
A combination of shrinkage regularization and tapering was proposed in \cite{ollila2022regularized}. 
\subsection{Optimization method} 
\label{subsec:algo}
We propose a \acs{MM} algorithm to solve the optimization problem in (\ref{eq:prob_optim_fitting_offline}). We note that this problem can also be solved using \acl{RGD} (\ac{RGD}), which can be advantageous for cost functions that cannot be expressed in a quadratic form \cite{vu2024covariance}, making it unsuitable for the \ac{MM} algorithm. The benefit of Riemannian optimization on the manifold lies in its ability to automatically compute the euclidean gradient of any cost function, facilitated by Python's JAX package. However, this method requires identifying a hyper-parameter that will serve as the step size in the gradient descent. Each of these two methods has its advantages, but importantly, both ultimately lead to similar results \cite{vu2024covariance}, highlighting their effectiveness in solving the optimization problem. That said, the \ac{MM} algorithm has the additional advantage of being computationally faster than \ac{RGD} \cite{vu2024covariance}. Given that we use the \acs{KL} divergence and the Frobenius norm (with their ability to take a quadratic form), the \ac{MM} algorithm is sufficient.

\ac{MM} framework is composed of two main steps. The first step is called "Majorization", where the objective is to find a surrogate function $g(.|\mathbf{w}_{\theta}^{(t)})$ that majorizes the function $f(\mathbf{w}_{\theta})$ and whose form depends on $\mathbf{w}^{(t)}$
\begin{equation}
    f(\mathbf{w}_{\theta}) \leq g(\mathbf{w}_{\theta}|\mathbf{w}_{\theta}^{(t)}), \quad \forall \, \mathbf{w}_{\theta} \in \mathbb{T}_l
\end{equation}
In other words, $g(\mathbf{w}_{\theta}|\mathbf{w}_{\theta}^{(t)})$ is tangent to $f(\mathbf{w}_{\theta})$ at the point $\mathbf{w}_{\theta} = \mathbf{w}_{\theta}^{(t)}$.
For the \ac{KL} divergence and the Frobenius norm, the following Lemmas are used respectively to find the corresponding surrogate.
\begin{lemma}
    The convex quadratic form 
    \begin{equation}
        f : \mathbf{w}_{\theta} \longrightarrow \mathbf{w}_{\theta}^H \mathbf{H} \mathbf{w}_{\theta}
    \end{equation}
    is majored on $\mathbb{T}_l$ by : 
    \begin{equation}
        g(\mathbf{w}_{\theta}|\mathbf{w}_{\theta}^{(t)}) = 2 \textit{Re} \left( \mathbf{w}_{\theta}^H (\mathbf{H} - \lambda^{\mathbf{H}}_{max} \mathbf{I}) \mathbf{w}_{\theta}^{(t)} \right) + \text{const}
    \end{equation}
    with equality at point $\mathbf{w}_{\theta}^{(t)}$. $\lambda^{\mathbf{H}}_{max}$ corresponds to the largest eigenvalue of $\mathbf{H}$ \cite{vu2024covariance}.
    \label{lemma1}
\end{lemma}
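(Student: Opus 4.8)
The plan is to turn $f$, which is convex and therefore admits no useful global upper bound on its own, into an \emph{affine} function of $\mathbf{w}_{\theta}$ by combining the spectral inequality $\mathbf{H}\preceq\lambda^{\mathbf{H}}_{max}\mathbf{I}$ with the torus constraint. First I would introduce $\mathbf{M}=\lambda^{\mathbf{H}}_{max}\mathbf{I}-\mathbf{H}$, which is Hermitian positive semidefinite since, by the spectral theorem, its eigenvalues are $\lambda^{\mathbf{H}}_{max}-\lambda\ge 0$ for every eigenvalue $\lambda$ of $\mathbf{H}$. Then, because $\mathbf{w}_{\theta}\in\mathbb{T}_l$ forces $\mathbf{w}_{\theta}^H\mathbf{w}_{\theta}=l$, I can rewrite
\begin{equation}
    f(\mathbf{w}_{\theta})=\mathbf{w}_{\theta}^H\mathbf{H}\mathbf{w}_{\theta}=\lambda^{\mathbf{H}}_{max}\,l-\mathbf{w}_{\theta}^H\mathbf{M}\mathbf{w}_{\theta}.
\end{equation}
This identity is the heart of the argument: it trades the convex quadratic $\mathbf{w}_{\theta}^H\mathbf{H}\mathbf{w}_{\theta}$ for a \emph{concave} quadratic $-\mathbf{w}_{\theta}^H\mathbf{M}\mathbf{w}_{\theta}$ up to an additive constant, and concave functions are readily majorized.

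Second, I would majorize the concave term by its supporting hyperplane at $\mathbf{w}_{\theta}^{(t)}$. Concretely, positive semidefiniteness of $\mathbf{M}$ gives $(\mathbf{w}_{\theta}-\mathbf{w}_{\theta}^{(t)})^H\mathbf{M}(\mathbf{w}_{\theta}-\mathbf{w}_{\theta}^{(t)})\ge 0$; expanding this, using $\mathbf{M}^H=\mathbf{M}$, and isolating $\mathbf{w}_{\theta}^H\mathbf{M}\mathbf{w}_{\theta}$ yields
\begin{equation}
    -\mathbf{w}_{\theta}^H\mathbf{M}\mathbf{w}_{\theta}\le -2\,\textit{Re}\!\left(\mathbf{w}_{\theta}^H\mathbf{M}\mathbf{w}_{\theta}^{(t)}\right)+\mathbf{w}_{\theta}^{(t)H}\mathbf{M}\mathbf{w}_{\theta}^{(t)}.
\end{equation}
Substituting this bound into the identity of the first step and replacing $\mathbf{M}=\lambda^{\mathbf{H}}_{max}\mathbf{I}-\mathbf{H}$, so that $-2\,\textit{Re}(\mathbf{w}_{\theta}^H\mathbf{M}\mathbf{w}_{\theta}^{(t)})=2\,\textit{Re}(\mathbf{w}_{\theta}^H(\mathbf{H}-\lambda^{\mathbf{H}}_{max}\mathbf{I})\mathbf{w}_{\theta}^{(t)})$, I would collect every $\mathbf{w}_{\theta}$-independent quantity — namely $\lambda^{\mathbf{H}}_{max}\,l+\mathbf{w}_{\theta}^{(t)H}\mathbf{M}\mathbf{w}_{\theta}^{(t)}$ — into the symbol $\text{const}$. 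This produces exactly the claimed surrogate $g(\mathbf{w}_{\theta}\,|\,\mathbf{w}_{\theta}^{(t)})$, which is affine in $\mathbf{w}_{\theta}$ (linear through the real part), precisely the form the \ac{MM} subproblem needs.

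Finally, the tightness claim is immediate: at $\mathbf{w}_{\theta}=\mathbf{w}_{\theta}^{(t)}$ the quadratic $(\mathbf{w}_{\theta}-\mathbf{w}_{\theta}^{(t)})^H\mathbf{M}(\mathbf{w}_{\theta}-\mathbf{w}_{\theta}^{(t)})$ vanishes, so the inequality of the second step holds with equality, hence $g(\mathbf{w}_{\theta}^{(t)}\,|\,\mathbf{w}_{\theta}^{(t)})=f(\mathbf{w}_{\theta}^{(t)})$; one can also verify this by direct substitution, using that $\mathbf{w}_{\theta}^{(t)H}\mathbf{H}\mathbf{w}_{\theta}^{(t)}$ and $l$ are real. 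I do not expect a genuine obstacle here; the only point that requires care is that the rewriting in the first step is valid \emph{only} on $\mathbb{T}_l$ — for general $\mathbf{w}_{\theta}\in\mathbb{C}^l$ the term $\mathbf{w}_{\theta}^H\mathbf{w}_{\theta}$ is not constant, the upper bound fails, and the surrogate ceases to be affine. Keeping the torus constraint explicit throughout, and noting that the \ac{MM} iterates remain on $\mathbb{T}_l$ so that $\mathbf{w}_{\theta}^{(t)}\in\mathbb{T}_l$, is therefore essential to the whole construction.
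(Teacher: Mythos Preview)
Your argument is correct and is the standard derivation of this majorizer: use the torus constraint to replace $\lambda^{\mathbf{H}}_{max}\mathbf{w}_{\theta}^H\mathbf{w}_{\theta}$ by the constant $\lambda^{\mathbf{H}}_{max}\,l$, thereby converting the convex quadratic into a concave one in $\mathbf{M}=\lambda^{\mathbf{H}}_{max}\mathbf{I}-\mathbf{H}\succeq 0$, and then linearize the concave quadratic via the tangent inequality $(\mathbf{w}_{\theta}-\mathbf{w}_{\theta}^{(t)})^H\mathbf{M}(\mathbf{w}_{\theta}-\mathbf{w}_{\theta}^{(t)})\ge 0$. Note, however, that the paper does not actually prove this lemma; it simply states it and cites \cite{vu2024covariance} (and, implicitly, the MM literature such as \cite{soltanalian2014designing,sun2016majorization}), so there is no in-paper proof to compare against --- your derivation is precisely the argument those references use.
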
   
    
\begin{lemma}
    The concave quadratic form 
    \begin{equation}
        f : \mathbf{w}_{\theta} \longrightarrow - \mathbf{w}_{\theta}^H \mathbf{H} \mathbf{w}_{\theta}
    \end{equation}
    is majored, with equality at point $\mathbf{w}^{(t)}$ \cite{vu2024covariance},  by : 
    \begin{equation}
        g(\mathbf{w}_{\theta}|\mathbf{w}_{\theta}^{(t)}) = -2 \textit{Re}( \mathbf{w}^H \mathbf{H} \mathbf{w}_{\theta}^{(t)} )
    \end{equation}
    \label{lemma3}
\end{lemma}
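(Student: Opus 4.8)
The plan is to prove Lemma~\ref{lemma3} by the elementary tangent-plane argument for concave quadratic forms. In contrast to Lemma~\ref{lemma1}, where the eigenvalue shift $\mathbf{H}\mapsto\mathbf{H}-\lambda^{\mathbf{H}}_{max}\mathbf{I}$ is inserted precisely to turn the surrogate into a convex (hence easily minimized) function, no such shift is needed here: a differentiable concave function never lies above its first-order (Wirtinger) expansion at any point, so the bare linearization of $f$ at $\mathbf{w}_\theta^{(t)}$ is already a valid majorizer. The only structural input is that $\mathbf{H}$ is Hermitian positive semidefinite, which is exactly the meaning of ``$f:\mathbf{w}_\theta\mapsto-\mathbf{w}_\theta^H\mathbf{H}\mathbf{w}_\theta$ is a concave quadratic form''; in the Frobenius-norm instance $\mathbf{H}$ is a positive multiple of $\mathbf{\widetilde{\Psi}}\circ\mathbf{\widetilde{\Sigma}}$, whose positive semidefiniteness follows from the Schur product theorem under the usual positive-semidefiniteness assumption on the coherence core.

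Concretely, I would compare $f$ with the proposed surrogate directly by completing the square. Since $g(\mathbf{w}_\theta\,|\,\mathbf{w}_\theta^{(t)})=-2\,\textit{Re}(\mathbf{w}_\theta^H\mathbf{H}\,\mathbf{w}_\theta^{(t)})$, the gap is
\begin{equation}
    f(\mathbf{w}_\theta) - g\!\left(\mathbf{w}_\theta\,|\,\mathbf{w}_\theta^{(t)}\right) = -\,\mathbf{w}_\theta^H\mathbf{H}\,\mathbf{w}_\theta + 2\,\textit{Re}\!\left(\mathbf{w}_\theta^H\mathbf{H}\,\mathbf{w}_\theta^{(t)}\right),
\end{equation}
and adding and subtracting the ($\mathbf{w}_\theta$-independent) scalar $\big(\mathbf{w}_\theta^{(t)}\big)^{H}\mathbf{H}\,\mathbf{w}_\theta^{(t)}$ rewrites the right-hand side as $-\big(\mathbf{w}_\theta-\mathbf{w}_\theta^{(t)}\big)^{H}\mathbf{H}\big(\mathbf{w}_\theta-\mathbf{w}_\theta^{(t)}\big)+\big(\mathbf{w}_\theta^{(t)}\big)^{H}\mathbf{H}\,\mathbf{w}_\theta^{(t)}$. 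The quadratic remainder is $\le 0$ because $\mathbf{H}\succeq\mathbf{0}$, so $f(\mathbf{w}_\theta)\le g(\mathbf{w}_\theta\,|\,\mathbf{w}_\theta^{(t)})+\text{const}$ with $\text{const}=\big(\mathbf{w}_\theta^{(t)}\big)^{H}\mathbf{H}\,\mathbf{w}_\theta^{(t)}$, and the inequality is an equality exactly at $\mathbf{w}_\theta=\mathbf{w}_\theta^{(t)}$, where the remainder vanishes — this is the tangency property required by the \ac{MM} scheme. Equivalently, the same conclusion follows in one line from the Wirtinger gradient $\nabla_{\mathbf{w}_\theta^{*}}f=-\mathbf{H}\,\mathbf{w}_\theta$ and the first-order inequality for concave functions $f(\mathbf{w}_\theta)\le f(\mathbf{w}_\theta^{(t)})+2\,\textit{Re}\!\big((\mathbf{w}_\theta-\mathbf{w}_\theta^{(t)})^{H}\nabla_{\mathbf{w}_\theta^{*}}f(\mathbf{w}_\theta^{(t)})\big)$.

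Finally, I would note that the additive constant is irrelevant to the algorithm: it does not depend on $\mathbf{w}_\theta$, hence $\arg\min_{\mathbf{w}_\theta\in\mathbb{T}_l}\big(g(\mathbf{w}_\theta\,|\,\mathbf{w}_\theta^{(t)})+\text{const}\big)=\arg\min_{\mathbf{w}_\theta\in\mathbb{T}_l}g(\mathbf{w}_\theta\,|\,\mathbf{w}_\theta^{(t)})$, which is why the statement keeps only the $\mathbf{w}_\theta$-dependent term $-2\,\textit{Re}(\mathbf{w}_\theta^H\mathbf{H}\,\mathbf{w}_\theta^{(t)})$ — in the same spirit as Lemma~\ref{lemma1}, which uses $\mathbf{w}_\theta\in\mathbb{T}_l$ to fold $\lambda^{\mathbf{H}}_{max}\,\mathbf{w}_\theta^H\mathbf{w}_\theta=\lambda^{\mathbf{H}}_{max}\,l$ into its constant; unlike there, however, the present majorization already holds on all of $\mathbb{C}^{l}$, the torus playing no role. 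There is no genuine obstacle in this lemma; the only points that need care are keeping the Hermitian/real-part bookkeeping straight when completing the square over $\mathbb{C}^{l}$, and stating the positive-semidefiniteness of $\mathbf{H}$ explicitly, since the majorization (not merely the constant) fails once $\mathbf{H}$ is indefinite.
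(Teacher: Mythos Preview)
Your proof is correct and complete: the completing-the-square identity
\[
f(\mathbf{w}_\theta)-g(\mathbf{w}_\theta\,|\,\mathbf{w}_\theta^{(t)})
=-(\mathbf{w}_\theta-\mathbf{w}_\theta^{(t)})^{H}\mathbf{H}(\mathbf{w}_\theta-\mathbf{w}_\theta^{(t)})
+(\mathbf{w}_\theta^{(t)})^{H}\mathbf{H}\,\mathbf{w}_\theta^{(t)}
\]
together with $\mathbf{H}\succeq 0$ gives the majorization immediately, and you correctly isolate the additive constant that the lemma's statement silently drops (note that Lemma~\ref{lemma1} keeps its ``$+\,\text{const}$'' explicitly, while Lemma~\ref{lemma3} does not; your remark that this constant is $(\mathbf{w}_\theta^{(t)})^{H}\mathbf{H}\,\mathbf{w}_\theta^{(t)}$ and is irrelevant to the $\arg\min$ is exactly the right clarification).

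There is, however, nothing in the paper to compare your argument against: the paper does not prove Lemma~\ref{lemma3} at all --- it is stated with a citation to \cite{vu2024covariance} and then simply invoked in Appendix~\ref{app_MM_LS}. Your tangent-plane/completing-the-square derivation is the standard proof one would expect in the cited reference, so in effect you have supplied what the paper outsources.
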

    
\noindent
The second step, "Minimization" includes obtaining the next iterate of the corresponding parameter by minimizing the obtained function $g(.|\mathbf{w}_{\theta}^{(t)})$ rather than the actual function $f(\mathbf{w}_{\theta})$
\begin{equation}
    \mathbf{w}_{\theta}^{(t+1)} = \text{argmin}_{\mathbf{w}_{\theta} \in \mathbb{T}_l} \, g(\mathbf{w}_{\theta}|\mathbf{w}_{\theta}^{(t)})
\end{equation}

\noindent
For both matrix distances, Lemma \ref{lemma2} is used to minimize the surrogate function. 
\begin{lemma}
    The solution of the following minimization problem
    \begin{equation}
        \begin{array}{c l}
            \underset{\mathbf{\Bar{w}}_{\theta} \in \mathbb{T}_k}{\rm minimize}
            & 
            - \textit{Re}(\mathbf{\Bar{w}}_{\theta}^H \mathbf{\check{w}}_{\theta}^{(t)})
        \end{array}
    \end{equation}
    is obtained as $\mathbf{\Bar{w}}_{\theta}^* = \Phi_{\mathbb{T}}(\mathbf{\check{w}}_{\theta}^{(t)})$
    with $\Phi_{\mathbb{T}}: x = r e^{i \theta} \xrightarrow \; e^{i \theta}$ \cite{vu2024covariance}
    \label{lemma2}
\end{lemma}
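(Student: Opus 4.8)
The plan is to use the fact that the feasible set $\mathbb{T}_k$ is a Cartesian product of $k$ unit circles, so the objective $-\textit{Re}(\mathbf{\Bar{w}}_{\theta}^H \mathbf{\check{w}}_{\theta}^{(t)})$ decouples into $k$ independent scalar subproblems, each settled by elementary trigonometry (equivalently, by the triangle inequality applied coordinatewise).

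First I would pass to coordinates. Write $[\mathbf{\check{w}}_{\theta}^{(t)}]_j = \rho_j e^{i\psi_j}$ with $\rho_j = |[\mathbf{\check{w}}_{\theta}^{(t)}]_j| \ge 0$, and, using $\mathbf{\Bar{w}}_{\theta} \in \mathbb{T}_k$, write $[\mathbf{\Bar{w}}_{\theta}]_j = e^{i\phi_j}$. Then $[\mathbf{\Bar{w}}_{\theta}]_j^{*}[\mathbf{\check{w}}_{\theta}^{(t)}]_j = \rho_j e^{i(\psi_j-\phi_j)}$, so
\begin{equation}
-\textit{Re}(\mathbf{\Bar{w}}_{\theta}^H \mathbf{\check{w}}_{\theta}^{(t)}) = -\sum_{j=1}^{k}\rho_j\cos(\psi_j-\phi_j).
\end{equation}
Since the phases $\phi_1,\dots,\phi_k$ vary independently, minimizing this sum is the same as maximizing each term $\rho_j\cos(\psi_j-\phi_j)$ separately. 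For any index $j$ with $\rho_j>0$ the maximum is attained exactly when $\cos(\psi_j-\phi_j)=1$, i.e. $\phi_j=\psi_j \pmod{2\pi}$, which gives $[\mathbf{\Bar{w}}_{\theta}^{*}]_j = e^{i\psi_j} = \Phi_{\mathbb{T}}([\mathbf{\check{w}}_{\theta}^{(t)}]_j)$. Assembling the coordinates yields $\mathbf{\Bar{w}}_{\theta}^{*} = \Phi_{\mathbb{T}}(\mathbf{\check{w}}_{\theta}^{(t)})$. Equivalently and without coordinates, $\textit{Re}(\mathbf{\Bar{w}}_{\theta}^H \mathbf{\check{w}}_{\theta}^{(t)}) \le \sum_j |[\mathbf{\Bar{w}}_{\theta}]_j|\,|[\mathbf{\check{w}}_{\theta}^{(t)}]_j| = \sum_j \rho_j$ with equality precisely when each $[\mathbf{\Bar{w}}_{\theta}]_j$ is co-phased with $[\mathbf{\check{w}}_{\theta}^{(t)}]_j$.

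There is essentially no obstacle in this argument; the single point that deserves a word of care is the behaviour on the zero set of $\mathbf{\check{w}}_{\theta}^{(t)}$. For indices $j$ with $\rho_j=0$ the corresponding term vanishes for every choice of $\phi_j$, so the minimizer is not unique there. This does not affect the value of the objective, hence the stated closed form $\mathbf{\Bar{w}}_{\theta}^{*} = \Phi_{\mathbb{T}}(\mathbf{\check{w}}_{\theta}^{(t)})$ remains a valid optimal choice (adopting any fixed convention for $\Phi_{\mathbb{T}}(0)$), and in particular the \ac{MM} update built on this lemma is well defined.
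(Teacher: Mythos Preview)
Your argument is correct and is the standard one: the objective separates over coordinates because $\mathbb{T}_k$ is a product of circles, and each scalar subproblem is maximized by phase alignment, with the degenerate $\rho_j=0$ case handled by convention. There is nothing to compare against in the paper itself, since the lemma is not proved there but simply quoted from \cite{vu2024covariance}; your proof is exactly the elementary justification one would expect for this result.
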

    
\noindent
The main key to a successful \ac{MM} algorithm consists in the step of constructing the function $g(.|\mathbf{w}_{\theta}^{(t)})$ \cite{sun2016majorization}. The advantage of this framework remains in the low computational cost of the obtained optimization problem. \ac{MM} algorithms have been already used in \ac{InSAR} \cite{vu2022new, vu2023robust}. Further details on \ac{MM} algorithms are provided in \cite{hunter2004tutorial, razaviyayn2013unified, soltanalian2014designing, sun2016majorization, breloy2021majorization}.

\noindent
To summarize this section, we defined the optimization problem for estimating the complete  phase time series, as well as the matrix distances and covariance matrix plug-in. In the following section, we will present the sequential  formulation of the \ac{COFI-PL} problem.

\section{Sequential formulation of covariance fitting}
\label{section:seq_cov_fitting}

\subsection{Data model and optimization problem}

We consider a stack of $l = p+k$ \ac{SAR} images ($p$ represents the length of the \ac{SAR} images time series of the past and $k$ the number of newly acquired \ac{SAR} images). $\{\mathbf{\widetilde{x}}^i\}_{i=1}^n$ denotes the local homogeneous spatial neighborhood of size $n$ for a given pixel, where $\mathbf{\widetilde{x}}^i \in \mathbb{C}^{l}$, for all $i \in [\![1,n]\!]$, i.e., 
\begin{equation}
    \mathbf{\widetilde{x}}^i = [\underbrace{x_1^i, \dots, x_p^i}_{\mathbf{x}^i}, \underbrace{x_{p+1}^i, \dots,  x_{l}^i}_{\mathbf{\Bar{x}}^i}]^T \in \mathbb{C}^{l}
    \label{eq:data_model}
\end{equation}
where $\mathbf{x}^i \in \mathbb{C}^{p}$ denotes the multivariate pixel of the previous data, and  $\mathbf{\Bar{x}}^i \in \mathbb{C}^{k}$ for the new ones (Fig. \ref{fig:new_bloc}). Each pixel of the local patch is assumed to be distributed as a zero mean \acf{CCG} \cite{bamler1998synthetic}, i.e., $\mathbf{\widetilde{x}}\sim\mathcal{CN}(0, \mathbf{\widetilde{\Sigma}})$.

\begin{figure}[hbt]
    \centering
    \input{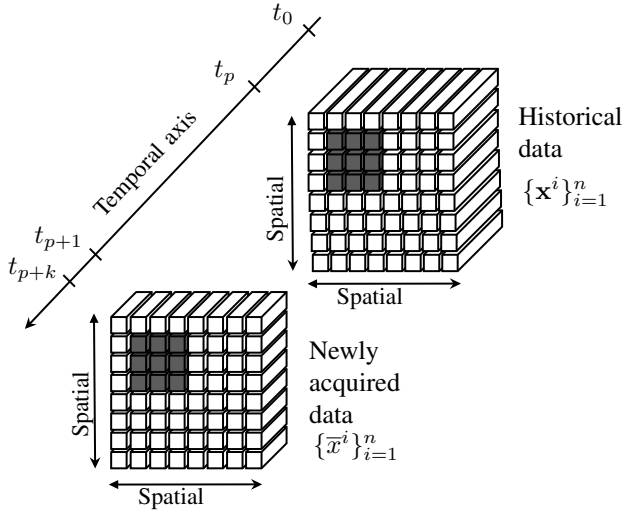}
    \caption{\small \ac{SAR} data representation including both previous and recently obtained block of images. The local neighborhood of size $n$ is denoted by gray pixels (sliding window).}
    \label{fig:new_bloc}
\end{figure}

\noindent
The vector of phases of the entire set of images can be expressed as a block structure as follow 
\begin{equation}
    \mathbf{\widetilde{w}}_{\theta} = \begin{pmatrix}
        \mathbf{w}_{\theta} \\
        \mathbf{\Bar{w}}_{\theta}
    \end{pmatrix}
    \label{eq:w_bloc}
\end{equation}

\noindent
where $\mathbf{w}_{\theta}$ denotes the exponential phases vector for the previous data, and $\mathbf{\Bar{w}}_{\theta}$ denotes the exponential phase vector for the new ones.
The covariance matrix of the entire set of images can be expressed as the following block structure
\begin{align}
    \mathbf{\widetilde{\Sigma}}
    &= \left( \begin{array}{c}
        \begin{tabular}{cc}
           $\mathbf{\Sigma}_{p}$ & $(\mathbf{\Sigma}_{pn})^H$ \\
                $\mathbf{\Sigma}_{pn}$ &  $\mathbf{\Sigma}_{n}$ \\
        \end{tabular}
    \end{array} \right)
\label{eq:bloc_cov}
\end{align}

\noindent
where $\mathbf{\Sigma}_{p}$ represents the covariance and coherence matrices, respectively, of the block of past images, both of which have dimensions $(p, p)$. Meanwhile, $\mathbf{\Sigma}_{pn}$ denotes the covariance and coherence, respectively, between the past images and the new ones with dimensions $(k, p)$. Additionally, $\mathbf{\Sigma}_{n}$ corresponds to the covariance and coherence matrices of the block of new images, with dimensions $(k, k)$.

\noindent
In the remainder of the paper, the subscript index of a matrix indicates the block type ($p$ : past images, $pn$ : interaction between past and new images and $n$ : new images). For each plug-in, we present its detailed block structure as follow

\hfill

\subsubsection{Unstructured plug-in: SCM or PO}
\label{subsubsec:unconstrained_seq}
The covariance matrix, whether it is \ac{SCM} (equation (\ref{eq:SCM})) or \ac{PO} (equation (\ref{eq:PO})) , can be represented in the following block structure
\begin{align}
    \mathbf{\widetilde{\Sigma}}^{U}
    &= \left( \begin{array}{c}
        \begin{tabular}{cc}
           $\mathbf{\Sigma}_{p}^{U}$ & $(\mathbf{\Sigma}_{pn}^{U})^H$ \\
                $\mathbf{\Sigma}_{pn}^{U}$ &  $\mathbf{\Sigma}_{n}^{U}$ \\
        \end{tabular}
    \end{array} \right)
\label{eq:bloc_cov_unstruct}
\end{align}

\hfill

\subsubsection{Shrinkage to identity}
\label{subsubsec:shrinkage_seq}
The block structure of the shrinkage to identity regularization of a plug-in of the covariance matrix in equation (\ref{eq:cov_mat_shrink}) can be represented as follow

\noindent\resizebox{\columnwidth}{!}{
\begin{minipage}{\columnwidth}
\begin{align}
    \mathbf{\widetilde{\Sigma}}^{SK} &= \beta \mathbf{\widetilde{\Sigma}}^{U} + (1 - \beta) \frac{ \text{tr}( \mathbf{\widetilde{\Sigma}}^{U})}{l} \mathbf{I}_l \notag \\
    &= \left( \begin{array}{cc}
        \beta \mathbf{\Sigma}_{p}^{U} + (1 - \beta) \frac{\text{tr}(\mathbf{\widetilde{\Sigma}}^{U})}{l} \mathbf{I}_p & \beta (\mathbf{\Sigma}_{pn}^{U})^H \\
        \beta \mathbf{\Sigma}_{pn}^{U} & \beta \mathbf{\Sigma}_{n}^{U} + (1 - \beta) \frac{\text{tr}(\mathbf{\widetilde{\Sigma}}^{U})}{l} \mathbf{I}_k \\
    \end{array} \right) \notag \\
    &= \left( \begin{array}{c}
        \begin{tabular}{cc}
           $\mathbf{\Sigma}_{p}^{SK}$ & $(\mathbf{\Sigma}_{pn}^{SK})^H$ \\
                $\mathbf{\Sigma}_{pn}^{SK}$ &  $\mathbf{\Sigma}_{n}^{SK}$ \\
        \end{tabular}
    \end{array} \right)
\label{eq:bloc_shrink}
\end{align}
\end{minipage}
}

\noindent
given that the estimator $\mathbf{\widetilde{\Sigma}}^{U}$ can be either the \ac{SCM} or \ac{PO} as in equation (\ref{eq:bloc_cov_unstruct}).
    
\hfill

\subsubsection{Covariance matrix tapering}
\label{subsubsec:tapering_seq}
The tapering matrix with bandwidth $\textit{b}$ in equation (\ref{eq:tap}) can have the following block structure
\begin{align}
    \label{eq:bloc_tapering}
    \mathbf{\widetilde{W}}(b)
    &= \left( \begin{array}{c}
        \begin{tabular}{ccc}
            $\mathbf{W}_{p}$ & $(\mathbf{W}_{pn})^T$ \\
            $\mathbf{W}_{pn}$ &  $\mathbf{W}_{n}$ \\
        \end{tabular}
    \end{array} \right)
\end{align}
\noindent
Using the structure in equation (\ref{eq:bloc_cov}), the tapered covariance matrix can be partitioned into a block structure comprising three distinct components: one representing the past $\mathbf{W}_{p} \circ \mathbf{\Sigma}_{p}$, one corresponding to the new data $ \mathbf{W}_{n} \circ \mathbf{\Sigma}_{n}$, and a third capturing the interaction between the two $\mathbf{W}_{pn} \circ \mathbf{\Sigma}_{pn}$ as follow
\begin{align}
    \mathbf{\widetilde{\Sigma}}^{BW} &=  \mathbf{\widetilde{W}}(b) \circ \mathbf{\widetilde{\Sigma}}^{U} \notag \\
    &= \left( \begin{array}{c}
        \begin{tabular}{cc}
            $\mathbf{W}_{p} \circ \mathbf{\Sigma}_{p}^{U}$ & $(\mathbf{W}_{pn})^T \circ (\mathbf{\Sigma}_{pn}^{U})^H$ \\
            $\mathbf{W}_{pn} \circ \mathbf{\Sigma}_{pn}^{U}$      &  $ \mathbf{W}_{n} \circ \mathbf{\Sigma}_{n}^{U}$ \\
        \end{tabular}
    \end{array} \right) \notag  \\ 
    &=  \left( \begin{array}{c}
        \begin{tabular}{cc}
           $\mathbf{\Sigma}_{p}^{BW}$ & $(\mathbf{\Sigma}_{pn}^{BW})^H$ \\
                $\mathbf{\Sigma}_{pn}^{BW}$ &  $\mathbf{\Sigma}_{n}^{BW}$ \\
        \end{tabular}
    \end{array} \right)
     \label{eq:tapered_bloc_cov}
\end{align}

\hfill

\noindent
The optimization problem is formulated as follow
\begin{equation}
    \begin{array}{c l}
    \underset{\mathbf{\Bar{w}}_{\theta}}{\rm minimize}
    & 
    f^d_{\mathbf{\widetilde{\Sigma}}} (\mathbf{\widetilde{\Sigma}}, \mathbf{\widetilde{\Psi}} \circ \mathbf{\widetilde{w}}_{\theta} \mathbf{\widetilde{w}}_{\theta}^H)
        \\
        {\text{ subject~to} }
    & \theta_1 = 0
    \\
     & \mathbf{\Bar{w}}_{\theta} \in \mathbb{T}_k \\
     
    \end{array}
    \label{eq:prob_optim_fitting_online}
\end{equation}
\noindent
where $\mathbf{\Bar{w}}_{\theta}$  is defined in equation (\ref{eq:w_bloc}).  $f^d_{\mathbf{\widetilde{\Sigma}}} (\mathbf{\widetilde{\Sigma}}, \mathbf{\widetilde{\Psi}} \circ \mathbf{\widetilde{w}}_{\theta} \mathbf{\widetilde{w}}_{\theta}^H)$ is the cost function of the \ac{COMET} problem and depends on $\mathbf{\Bar{w}}_{\theta}$. For each distance, a generic form is provided  in the sections (\ref{subsection:KL_seq}, \ref{subsection:Frob_seq}).

\begin{table}[!h]
\renewcommand{\arraystretch}{1.3} 
\centering
\begin{tabular}{|c|c|c|c|}
\hline
\multirow{2}{*}{} &  \makecell{unconstrained  \\ plug-in  \\ (section \ref{subsubsec:unconstrained_seq})} 
                  & \makecell{shrinkage  \\ to identity  \\ (section \ref{subsubsec:shrinkage_seq})} 
                  &  \makecell{tapering  \\ regularization  \\ (section \ref{subsubsec:tapering_seq})} \\ \cline{2-4}
                  &  \acs{SCM} $\setminus$ \acs{PO} & \acs{SK-SCM} $\setminus$ \acs{SK-PO} &  \acs{BW-SCM} $\setminus$ \acs{BW-PO} \\ \hline
$\mathbf{\Sigma}_{p}$ & $\mathbf{\Sigma}_{p}^{U}$ & $\mathbf{\Sigma}_{p}^{SK}$ &  $\mathbf{\Sigma}_{p}^{BW}$ \\ \hline
$\mathbf{\Sigma}_{pn}$         & $\mathbf{\Sigma}_{pn}^{U}$ & $\mathbf{\Sigma}_{pn}^{SK}$ & $\mathbf{\Sigma}_{pn}^{BW}$ \\ \hline
$\mathbf{\Sigma}_{n}$    & $\mathbf{\Sigma}_{n}^{U}$ & $\mathbf{\Sigma}_{n}^{SK}$ & $\mathbf{\Sigma}_{n}^{BW}$ \\ \hline
\end{tabular}
\caption{$\mathbf{\Sigma}_{p}$, $\mathbf{\Sigma}_{pn}$ and $\mathbf{\Sigma}_{n}$ values depending on the choice of the covariance matrix plug-in.}
\label{tab:tab_plug_in}
\end{table}

\subsection{\acl{KL} divergence}
\label{subsection:KL_seq}
In this section, we provide a general form of the cost function for the \ac{KL} divergence presented in Proposition \ref{prop:prop_CF_KL}, which can be applied and modified according to the choice of the covariance matrix plug-in in Table \ref{tab:tab_plug_in}
\begin{proposition} 
\textbf{Generic cost function for \ac{KL} divergence}

Regardless the choice of the covariance matrix plug-in, the cost function for the \ac{KL} divergence has the following form
    \begin{align}
    \label{eq:CF_KL}
        f^{\text{KL}}_{\mathbf{\widetilde{\Sigma}}}(\mathbf{\Bar{w}}_{\theta}) &= \mathbf{w}_{\theta}^H \left( \mathbf{F}^{-1} \circ \mathbf{\Sigma}_{p} \right) \mathbf{w}_{\theta}  \notag \\
        & + \mathbf{w}_{\theta}^H \left( ( \mathbf{A})^T \circ  (\mathbf{\Sigma}_{pn})^H \right)\mathbf{\Bar{w}}_{\theta} \notag \\
        & + \mathbf{\Bar{w}}_{\theta}^H \left( \mathbf{A} \circ  \mathbf{\Sigma}_{pn} \right) \mathbf{w}_{\theta} \notag \\
        & + \mathbf{\Bar{w}}_{\theta}^H \mathbf{M} \mathbf{\Bar{w}}_{\theta} 
    \end{align}
    where
\begin{itemize}
    \item[-] $\mathbf{F} = |\mathbf{\Sigma}_{p}| - (|\mathbf{\Sigma}_{pn}|)^T |\mathbf{\Sigma}_{n}|^{-1}|\mathbf{\Sigma}_{pn}|$
    \item[-] $\mathbf{D} = |\mathbf{\Sigma}_{n}| - |\mathbf{\Sigma}_{pn}||\mathbf{\Sigma}_{p}|^{-1}(|\mathbf{\Sigma}_{pn}|)^T$
    \item[-] $\mathbf{A} = - \mathbf{D}^{-1} |\mathbf{\Sigma}_{pn}||\mathbf{\Sigma}_{p}|^{-1}$
    \item[-] $\mathbf{M} = \mathbf{D}^{-1} \circ \mathbf{\Sigma}_{n}$
\end{itemize}
and 
$\mathbf{\Sigma}_{p}$, $\mathbf{\Sigma}_{pn}$ and $\mathbf{\Sigma}_{n}$ depend on the choice of the covariance matrix plug-in Table  \ref{tab:tab_plug_in}. 
    \label{prop:prop_CF_KL}
\end{proposition}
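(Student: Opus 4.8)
The plan is to substitute the block partitions (\ref{eq:bloc_cov}) of $\mathbf{\widetilde{\Sigma}}$ and (\ref{eq:w_bloc}) of $\mathbf{\widetilde{w}}_{\theta}$ into the quadratic form $f^{\text{KL}}_{\mathbf{\widetilde{\Sigma}}}(\mathbf{\widetilde{w}}_{\theta}) = \mathbf{\widetilde{w}}_{\theta}^H(\mathbf{\widetilde{\Psi}}^{-1} \circ \mathbf{\widetilde{\Sigma}})\mathbf{\widetilde{w}}_{\theta}$ of (\ref{cost_fct_KL}), after having computed $\mathbf{\widetilde{\Psi}}^{-1}$ in closed block form. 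The first step is to write the coherence matrix blockwise: since $\mathbf{\widetilde{\Psi}} = |\mathbf{\widetilde{\Sigma}}|$ is taken entrywise, the diagonal blocks $|\mathbf{\Sigma}_{p}|$ and $|\mathbf{\Sigma}_{n}|$ are real symmetric (moduli of Hermitian blocks) and the off-diagonal block obeys $|(\mathbf{\Sigma}_{pn})^H| = (|\mathbf{\Sigma}_{pn}|)^T$, so that $\mathbf{\widetilde{\Psi}}$ has $|\mathbf{\Sigma}_{p}|$, $|\mathbf{\Sigma}_{pn}|$, $(|\mathbf{\Sigma}_{pn}|)^T$, $|\mathbf{\Sigma}_{n}|$ as its blocks.

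Next I would invert $\mathbf{\widetilde{\Psi}}$ with the standard block-inversion identity. Using the Schur complement $\mathbf{D} = |\mathbf{\Sigma}_{n}| - |\mathbf{\Sigma}_{pn}||\mathbf{\Sigma}_{p}|^{-1}(|\mathbf{\Sigma}_{pn}|)^T$ of the top-left block, the bottom-right block of $\mathbf{\widetilde{\Psi}}^{-1}$ is $\mathbf{D}^{-1}$, the bottom-left block is $-\mathbf{D}^{-1}|\mathbf{\Sigma}_{pn}||\mathbf{\Sigma}_{p}|^{-1} = \mathbf{A}$, and the top-right block is $\mathbf{A}^{H} = \mathbf{A}^{T}$ (here $\mathbf{A}$ is a product of real matrices, hence real). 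For the top-left block I would invoke the matrix inversion (Woodbury) lemma, which shows that $|\mathbf{\Sigma}_{p}|^{-1} + |\mathbf{\Sigma}_{p}|^{-1}(|\mathbf{\Sigma}_{pn}|)^T \mathbf{D}^{-1} |\mathbf{\Sigma}_{pn}||\mathbf{\Sigma}_{p}|^{-1}$ equals $\mathbf{F}^{-1}$ with $\mathbf{F} = |\mathbf{\Sigma}_{p}| - (|\mathbf{\Sigma}_{pn}|)^T|\mathbf{\Sigma}_{n}|^{-1}|\mathbf{\Sigma}_{pn}|$; equivalently, this is just the alternative block-inversion formula coming from the Schur complement of the bottom-right block, and the two forms agree.

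Finally, since the Hadamard product acts blockwise, $\mathbf{\widetilde{\Psi}}^{-1} \circ \mathbf{\widetilde{\Sigma}}$ has blocks $\mathbf{F}^{-1}\circ\mathbf{\Sigma}_{p}$, $\mathbf{A}\circ\mathbf{\Sigma}_{pn}$, $\mathbf{A}^{T}\circ(\mathbf{\Sigma}_{pn})^{H}$ and $\mathbf{M} = \mathbf{D}^{-1}\circ\mathbf{\Sigma}_{n}$; expanding $\mathbf{\widetilde{w}}_{\theta}^H(\mathbf{\widetilde{\Psi}}^{-1} \circ \mathbf{\widetilde{\Sigma}})\mathbf{\widetilde{w}}_{\theta}$ with $\mathbf{\widetilde{w}}_{\theta} = (\mathbf{w}_{\theta}^{T},\mathbf{\Bar{w}}_{\theta}^{T})^{T}$ then produces exactly the four terms of (\ref{eq:CF_KL}). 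Because the derivation uses only the block partition and Hermitian symmetry of $\mathbf{\widetilde{\Sigma}}$ — never the specific form of $\mathbf{\Sigma}_{p}$, $\mathbf{\Sigma}_{pn}$, $\mathbf{\Sigma}_{n}$ — the formula is valid for every plug-in in Table \ref{tab:tab_plug_in} by reading off the corresponding blocks, and the first term is a constant since $\mathbf{w}_{\theta}$ (with $\theta_1 = 0$) is held fixed while only $\mathbf{\Bar{w}}_{\theta}$ varies. The main obstacle is purely the bookkeeping: keeping the transposes and conjugates of the off-diagonal blocks consistent, and checking the invertibility of $|\mathbf{\Sigma}_{p}|$, $|\mathbf{\Sigma}_{n}|$, $\mathbf{F}$ and $\mathbf{D}$ required for the Schur complements.
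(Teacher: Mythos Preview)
Your proposal is correct and follows essentially the same route as the paper's proof: write $\mathbf{\widetilde{\Psi}}=|\mathbf{\widetilde{\Sigma}}|$ blockwise, apply the standard block-inversion/Schur-complement formula to obtain $\mathbf{\widetilde{\Psi}}^{-1}$ with blocks $\mathbf{F}^{-1}$, $\mathbf{A}$, $\mathbf{A}^T$, $\mathbf{D}^{-1}$, take the Hadamard product with $\mathbf{\widetilde{\Sigma}}$ blockwise, and expand the quadratic form. In fact you supply a bit more detail than the paper does (the Woodbury step identifying the top-left block as $\mathbf{F}^{-1}$, and the reason $\mathbf{A}^{H}=\mathbf{A}^{T}$), while the paper simply cites the block-inverse formula and carries out the expansion.
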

\begin{proof}
in Appendix \ref{app_CF_KL}.
\end{proof}

For each  case of the generic cost function for the \ac{KL} divergence in equation (\ref{eq:CF_KL}) and after a few calculations steps, it corresponds to the convex quadratic form in Lemma \ref{lemma1}. Similar to the cost function, we present a generic form for the surrogate function that majorizes the corresponding cost function in equation (\ref{eq:CF_KL}).
    
    \begin{proposition}
    \textbf{Surrogate function for \ac{KL} divergence}
    
    Using Lemma \ref{lemma1}, we majorize the cost function in equation (\ref{eq:CF_KL}) by the following surrogate function
    \begin{align}
    \label{eq:surrogate_fct_KL}
        g(\mathbf{\Bar{w}}_{\theta} | \mathbf{\Bar{w}}_{\theta}^{(t)}) 
    &= - \text{Re}\left( \mathbf{\Bar{w}}_{\theta}^H  2 \left[ \left( \left( - \mathbf{A} \right)\circ  \mathbf{\Sigma}_{pn} \right) \mathbf{w}_{\theta} \right. \right. \notag \\
    &\left. \left. \qquad - \left[ \mathbf{M} - \lambda_{\max}^{\mathbf{M}} \mathbf{I}_k \right] \mathbf{\Bar{w}}_{\theta}^{(t)} \right] \right)
    \end{align}
    where $\lambda^{\mathbf{M}}_{max}$ corresponds to the largest eigenvalue of $\mathbf{M}$ and $\mathbf{\Sigma}_{p}$, $\mathbf{\Sigma}_{pn}$ and $\mathbf{\Sigma}_{n}$ are presented in Table  \ref{tab:tab_plug_in}.
    \label{prop:prop_surrogate_fct_KL}
    \end{proposition}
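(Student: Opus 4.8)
The plan is to take the explicit cost function (\ref{eq:CF_KL}) established in Proposition~\ref{prop:prop_CF_KL} and read it as a function of the single free variable $\mathbf{\Bar{w}}_{\theta}\in\mathbb{T}_k$, the past phase vector $\mathbf{w}_{\theta}$ being frozen. First I would split the four summands of (\ref{eq:CF_KL}) according to their dependence on $\mathbf{\Bar{w}}_{\theta}$: the term $\mathbf{w}_{\theta}^{H}(\mathbf{F}^{-1}\circ\mathbf{\Sigma}_{p})\mathbf{w}_{\theta}$ is an additive constant; the two middle terms are linear in $\mathbf{\Bar{w}}_{\theta}$ and $\mathbf{\Bar{w}}_{\theta}^{*}$; and $\mathbf{\Bar{w}}_{\theta}^{H}\mathbf{M}\mathbf{\Bar{w}}_{\theta}$ is a Hermitian quadratic form. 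The key algebraic fact to nail down is that the two middle terms are complex conjugates of one another, so that their sum equals $2\,\text{Re}\!\big(\mathbf{\Bar{w}}_{\theta}^{H}(\mathbf{A}\circ\mathbf{\Sigma}_{pn})\mathbf{w}_{\theta}\big)$; this holds because $\mathbf{A}=-\mathbf{D}^{-1}|\mathbf{\Sigma}_{pn}||\mathbf{\Sigma}_{p}|^{-1}$ is a \emph{real} matrix (it is assembled only from entrywise moduli of the plug-in blocks and from inverses of the real matrices $\mathbf{F},\mathbf{D},|\mathbf{\Sigma}_{p}|$), using the identity $(\mathbf{A}\circ\mathbf{\Sigma}_{pn})^{H}=\mathbf{A}^{T}\circ(\mathbf{\Sigma}_{pn})^{H}$. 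In the same breath I would record that $\mathbf{M}=\mathbf{D}^{-1}\circ\mathbf{\Sigma}_{n}$ is Hermitian (Hadamard product of a real symmetric matrix with a Hermitian one), so $\lambda_{\max}^{\mathbf{M}}$ is real and well defined; positive semi-definiteness of $\mathbf{M}$, which lets one invoke Lemma~\ref{lemma1} in its convex form, follows when $|\mathbf{\widetilde{\Sigma}}|$ is positive semi-definite, via a Schur-complement argument for $\mathbf{D}^{-1}$ together with the Schur product theorem.

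Having written $f^{\text{KL}}_{\mathbf{\widetilde{\Sigma}}}(\mathbf{\Bar{w}}_{\theta})=\text{const}+2\,\text{Re}\!\big(\mathbf{\Bar{w}}_{\theta}^{H}(\mathbf{A}\circ\mathbf{\Sigma}_{pn})\mathbf{w}_{\theta}\big)+\mathbf{\Bar{w}}_{\theta}^{H}\mathbf{M}\mathbf{\Bar{w}}_{\theta}$, the second step is to apply Lemma~\ref{lemma1} to the last summand alone, with $\mathbf{H}=\mathbf{M}$: on $\mathbb{T}_k$ it is majorized by $2\,\text{Re}\!\big(\mathbf{\Bar{w}}_{\theta}^{H}(\mathbf{M}-\lambda_{\max}^{\mathbf{M}}\mathbf{I}_{k})\mathbf{\Bar{w}}_{\theta}^{(t)}\big)$ up to a constant, with equality at $\mathbf{\Bar{w}}_{\theta}^{(t)}$. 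Leaving the constant and the already-exact linear term untouched and summing, $f^{\text{KL}}_{\mathbf{\widetilde{\Sigma}}}$ is majorized on $\mathbb{T}_k$ by $2\,\text{Re}\!\big(\mathbf{\Bar{w}}_{\theta}^{H}[(\mathbf{A}\circ\mathbf{\Sigma}_{pn})\mathbf{w}_{\theta}+(\mathbf{M}-\lambda_{\max}^{\mathbf{M}}\mathbf{I}_{k})\mathbf{\Bar{w}}_{\theta}^{(t)}]\big)+\text{const}$, with equality at $\mathbf{\Bar{w}}_{\theta}^{(t)}$ (the linear part is reproduced exactly and the quadratic part is tangent by Lemma~\ref{lemma1}). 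Finally I would do the sign bookkeeping — factoring out $-1$ and using $\mathbf{A}\circ\mathbf{\Sigma}_{pn}=-\big((-\mathbf{A})\circ\mathbf{\Sigma}_{pn}\big)$ — to rewrite this as $-\text{Re}\!\big(\mathbf{\Bar{w}}_{\theta}^{H}\,2[((-\mathbf{A})\circ\mathbf{\Sigma}_{pn})\mathbf{w}_{\theta}-(\mathbf{M}-\lambda_{\max}^{\mathbf{M}}\mathbf{I}_{k})\mathbf{\Bar{w}}_{\theta}^{(t)}]\big)+\text{const}$, which is exactly (\ref{eq:surrogate_fct_KL}) once the irrelevant additive constant is discarded (it does not affect the subsequent minimization step, carried out via Lemma~\ref{lemma2}).

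I would then observe that nothing in this derivation depends on which plug-in is chosen: only the blocks $\mathbf{\Sigma}_{p},\mathbf{\Sigma}_{pn},\mathbf{\Sigma}_{n}$ (hence $\mathbf{F},\mathbf{D},\mathbf{A},\mathbf{M}$) change according to Table~\ref{tab:tab_plug_in}, while the splitting into constant/linear/quadratic parts and the application of Lemma~\ref{lemma1} are insensitive to that choice, so the generic surrogate (\ref{eq:surrogate_fct_KL}) covers all cases at once. The only genuinely delicate point is the very first step — showing that the two cross terms collapse into a single $\text{Re}(\cdot)$ term and that $\mathbf{M}$ is Hermitian — since both rely on $\mathbf{A}$, $\mathbf{F}^{-1}$ and $\mathbf{D}^{-1}$ being real; everything afterwards is routine rearrangement of real parts, signs and constants.
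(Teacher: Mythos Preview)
Your proposal is correct and follows essentially the same route as the paper's proof in Appendix~\ref{app_MM_KL}: drop the constant block, collapse the two cross terms into a single $2\,\text{Re}(\cdot)$, apply Lemma~\ref{lemma1} to the quadratic term $\mathbf{\Bar{w}}_{\theta}^{H}\mathbf{M}\mathbf{\Bar{w}}_{\theta}$, then do the sign rearrangement. If anything, you are more explicit than the paper on the one point that actually deserves care, namely that $\mathbf{A}$ is real and $\mathbf{M}$ Hermitian so that the conjugate-pair collapse is legitimate; the paper simply asserts the Hermitian-conjugate relation without comment.
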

    
    \begin{proof}
    in Appendix \ref{app_MM_KL}
    \end{proof}
    
    
    \noindent
    The equivalent optimization to the problem (\ref{eq:prob_optim_fitting_online}) is 
    \begin{equation}
        \begin{array}{c l}
        \underset{\mathbf{\Bar{w}}_{\theta} \in \mathbb{T}_k}{\rm minimize}
        & 
        g(\mathbf{\Bar{w}}_{\theta} | \mathbf{\Bar{w}}_{\theta}^{(t)})
        \end{array}
        \label{eq:optim_min_MM}
    \end{equation}
    and is solved using Lemma \ref{lemma2}. The \ac{MM} algorithm is presented in the box Algorithm \ref{algo:MM_KL}.
    
    \begin{algorithm}[H]
    \caption{\ac{MM} for \ac{KL} divergence}
    \begin{algorithmic}[1]
        \State \textbf{Input}: $\mathbf{\widetilde{\Sigma}} \in \mathbb{C}^l$, $\mathbf{w}_{\theta}^{(0)} \in \mathbb{T}_l$
        \State Compute : $\mathbf{M} = (\mathbf{D}^{-1} \circ \mathbf{\Sigma}_{n})$ and $\lambda_{max}^{\mathbf{M}}$
        \State Compute $ \mathbf{N} = 2  \left[ \left( - \mathbf{A} \right) \circ \mathbf{\Sigma}_{pn}\right] \mathbf{w}_{\theta} $
        \Repeat
            \State Compute $\ddot{\mathbf{w}}_{\theta}^{(t)} = \mathbf{N} - (\mathbf{M} - \lambda_{max}^{\mathbf{M}} \mathbf{I}) \bar{\mathbf{w}}_{\theta}^{(t)} $
            \State Update of $\mathbf{\bar{w}}_{\theta}^{(t)} = \Phi_{\mathbb{T}}\{ \ddot{\mathbf{w}}_{\theta}^{(t)} \} $ 
            \State $t = t + 1$
        \Until{convergence}
        \State \textbf{Output}: $\hat{\mathbf{w}}_{\theta} = \mathbf{w}_{end} \in \mathbb{T}_k$
    \end{algorithmic}
    \label{algo:MM_KL}
    \end{algorithm}

\subsection{Frobenius norm}
\label{subsection:Frob_seq}
Similar to the \ac{KL} divergence, this section is also divided into two parts: the first part addresses the cost functions, followed by the second part that provide the optimization algorithm.
We provide a generic form of the cost function for the Frobenius norm that can be applied to any covariance matrix plug-in.

\begin{proposition}
\textbf{Generic cost function for the Frobenius norm}

\noindent
The cost function for the Frobenius norm is represented in the following form
    \begin{align}
        f^{\text{LS}}_{\mathbf{\widetilde{\Sigma}}}(\mathbf{\widetilde{w}}_{\theta})  
        &= - 2 \mathbf{w}_{\theta}^H (|\mathbf{\Sigma}_{p}| \circ \mathbf{\Sigma}_{p}) \mathbf{w}_{\theta} \notag \\
        &\qquad - 2 \mathbf{w}_{\theta}^H (|\mathbf{\Sigma}_{pn}|)^T \circ (\mathbf{\Sigma}_{pn})^H ) \mathbf{\Bar{w}}_{\theta} \notag \\
        &\qquad - 2 \mathbf{\Bar{w}}_{\theta}^H (|\mathbf{\Sigma}_{pn}| \circ \mathbf{\Sigma}_{pn} ) \mathbf{w}_{\theta} \notag \\
        &\qquad - 2 \mathbf{\Bar{w}}_{\theta}^H (|\mathbf{\Sigma}_{n}| \circ \mathbf{\Sigma}_{n}) \mathbf{\Bar{w}}_{\theta}
        \label{eq:CF_Frob}
    \end{align}
\noindent
where $\mathbf{\Sigma}_{p}$, $\mathbf{\Sigma}_{pn}$ and $\mathbf{\Sigma}_{n}$ are presented in Table \ref{tab:tab_plug_in}.
    \label{prop:prop_CF_Frob}
\end{proposition}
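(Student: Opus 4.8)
\noindent
The plan is to reduce the statement to the offline Frobenius cost function~(\ref{cost_fct_LS}) and then to inject the block partitions~(\ref{eq:w_bloc}) and~(\ref{eq:bloc_cov}). Recall from~(\ref{cost_fct_LS}) that $f^{\text{FN}}_{\mathbf{\widetilde{\Sigma}}}(\mathbf{\widetilde{w}}_{\theta}) = -2\,\mathbf{\widetilde{w}}_{\theta}^H (\mathbf{\widetilde{\Psi}} \circ \mathbf{\widetilde{\Sigma}})\,\mathbf{\widetilde{w}}_{\theta}$ with $\mathbf{\widetilde{\Psi}} = |\mathbf{\widetilde{\Sigma}}|$. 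Since the entrywise modulus acts block by block on a partitioned matrix and is invariant under complex conjugation, I would first record
\[
    \mathbf{\widetilde{\Psi}} = |\mathbf{\widetilde{\Sigma}}| = \begin{pmatrix}
        |\mathbf{\Sigma}_{p}| & (|\mathbf{\Sigma}_{pn}|)^T \\
        |\mathbf{\Sigma}_{pn}| & |\mathbf{\Sigma}_{n}|
    \end{pmatrix},
\]
the only slightly delicate point being the upper-right block, where $|(\mathbf{\Sigma}_{pn})^H| = (|\mathbf{\Sigma}_{pn}|)^T$.

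Next I would form the Hadamard product $\mathbf{\widetilde{\Psi}} \circ \mathbf{\widetilde{\Sigma}}$ block by block (valid since $\circ$ acts entrywise and therefore respects any conformable block partition), obtaining the four blocks $|\mathbf{\Sigma}_{p}| \circ \mathbf{\Sigma}_{p}$, $(|\mathbf{\Sigma}_{pn}|)^T \circ (\mathbf{\Sigma}_{pn})^H$, $|\mathbf{\Sigma}_{pn}| \circ \mathbf{\Sigma}_{pn}$ and $|\mathbf{\Sigma}_{n}| \circ \mathbf{\Sigma}_{n}$. Writing $\mathbf{\widetilde{w}}_{\theta} = (\mathbf{w}_{\theta}^T, \mathbf{\Bar{w}}_{\theta}^T)^T$ as in~(\ref{eq:w_bloc}), the quadratic form $\mathbf{\widetilde{w}}_{\theta}^H (\mathbf{\widetilde{\Psi}} \circ \mathbf{\widetilde{\Sigma}})\,\mathbf{\widetilde{w}}_{\theta}$ then splits into exactly four bilinear contributions, one per block, and multiplying through by $-2$ yields the four-term expression~(\ref{eq:CF_Frob}). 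I would also note that the first term $-2\,\mathbf{w}_{\theta}^H (|\mathbf{\Sigma}_{p}| \circ \mathbf{\Sigma}_{p})\,\mathbf{w}_{\theta}$ does not depend on $\mathbf{\Bar{w}}_{\theta}$, hence is a constant in the optimization problem~(\ref{eq:prob_optim_fitting_online}), but is kept so that the cost function is displayed in full.

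Finally, to justify the word \emph{generic}, I would observe that the derivation uses only that $\mathbf{\widetilde{\Sigma}}$ is Hermitian and that $\mathbf{\widetilde{\Psi}} = |\mathbf{\widetilde{\Sigma}}|$; both properties hold for every plug-in and regularization considered here --- the \ac{SCM} and \ac{PO} estimators in~(\ref{eq:bloc_cov_unstruct}), the shrinkage-to-identity form in~(\ref{eq:bloc_shrink}), and the tapering form in~(\ref{eq:tapered_bloc_cov}) --- so each particular cost function is recovered by substituting the matching triple $(\mathbf{\Sigma}_{p}, \mathbf{\Sigma}_{pn}, \mathbf{\Sigma}_{n})$ from Table~\ref{tab:tab_plug_in}. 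The computation is otherwise routine; the one step I expect to require care is the bookkeeping of the modulus and the Hermitian transpose in the $(p,n)$ block, together with distributing the Hadamard product correctly over the $2\times2$ block structure.
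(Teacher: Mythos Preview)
Your proposal is correct and follows essentially the same approach as the paper: start from the offline Frobenius cost~(\ref{cost_fct_LS}), write $\mathbf{\widetilde{\Psi}}=|\mathbf{\widetilde{\Sigma}}|$ and $\mathbf{\widetilde{w}}_{\theta}$ in block form, distribute the Hadamard product over the $2\times2$ block partition, and expand the quadratic form into the four terms of~(\ref{eq:CF_Frob}). Your additional remarks on the constant first term and on why the derivation is plug-in agnostic are accurate and go slightly beyond what the paper spells out.
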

\begin{proof}
in Appendix \ref{app_CF_LS}
\end{proof}

\noindent
As for the \ac{KL} divergence, the Frobenius norm has a quadratic form that enables the use of the \ac{MM} algorithm.
Using Lemma \ref{lemma3}, the cost function in equation (\ref{eq:CF_Frob}) can be majored by a function $g(\mathbf{\Bar{w}}_{\theta} | \mathbf{\Bar{w}}_{\theta}^{(t)})$ which will be provided in what follows, where we present a generic form that can be applied to any covariance matrix plug-in.

\begin{proposition}
\textbf{Surrogate function for Frobenius norm}

\noindent
The cost function can be majored on $\mathbb{T}_k$ by the following surrogate function
\begin{align}
    g(\mathbf{\Bar{w}}_{\theta} \mid \mathbf{\Bar{w}}_{\theta}^{(t)}) &= - \text{Re}\bigg(\mathbf{\Bar{w}}_{\theta}^H \cdot 4 \big[ (|\mathbf{\Sigma}_{pn}| \circ \mathbf{\Sigma}_{pn}) \mathbf{w}_{\theta} \notag \\
    &\qquad\qquad + (|\mathbf{\Sigma}_{n}| \circ \mathbf{\Sigma}_{n}) \mathbf{\Bar{w}}_{\theta}^{(t)} \big] \bigg)
    \label{eq:surrogate_fct_Frob}
\end{align}

where  $\mathbf{\Sigma}_{pn}$ and $\mathbf{\Sigma}_{n}$ are presented in Table \ref{tab:tab_plug_in}.
\label{prop:prop_surrogate_fct_Frob}
\end{proposition}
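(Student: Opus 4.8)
The plan is to start from the already-expanded cost function in equation (\ref{eq:CF_Frob}) (Proposition \ref{prop:prop_CF_Frob}) and to process its four terms according to how they depend on the optimization variable $\mathbf{\Bar{w}}_{\theta}$, applying Lemma \ref{lemma3} only where a genuine majorization is needed. The first term $-2\mathbf{w}_{\theta}^H(|\mathbf{\Sigma}_{p}|\circ\mathbf{\Sigma}_{p})\mathbf{w}_{\theta}$ does not involve $\mathbf{\Bar{w}}_{\theta}$, so it contributes only an additive constant, irrelevant to the minimization (\ref{eq:optim_min_MM}) and thus dropped. The two cross terms are Hermitian conjugates of one another: since $|\mathbf{\Sigma}_{pn}|$ is real, $\mathbf{w}_{\theta}^H\big((|\mathbf{\Sigma}_{pn}|)^T\circ(\mathbf{\Sigma}_{pn})^H\big)\mathbf{\Bar{w}}_{\theta}$ is the complex conjugate of the scalar $\mathbf{\Bar{w}}_{\theta}^H(|\mathbf{\Sigma}_{pn}|\circ\mathbf{\Sigma}_{pn})\mathbf{w}_{\theta}$, so their sum equals $-4\,\text{Re}\big(\mathbf{\Bar{w}}_{\theta}^H(|\mathbf{\Sigma}_{pn}|\circ\mathbf{\Sigma}_{pn})\mathbf{w}_{\theta}\big)=-\text{Re}\big(\mathbf{\Bar{w}}_{\theta}^H\cdot 4(|\mathbf{\Sigma}_{pn}|\circ\mathbf{\Sigma}_{pn})\mathbf{w}_{\theta}\big)$. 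This is affine in $\mathbf{\Bar{w}}_{\theta}$ and needs no majorization; it is carried verbatim into the surrogate.

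For the fourth term $-2\mathbf{\Bar{w}}_{\theta}^H(|\mathbf{\Sigma}_{n}|\circ\mathbf{\Sigma}_{n})\mathbf{\Bar{w}}_{\theta}$, I would invoke Lemma \ref{lemma3} with $\mathbf{H}=2(|\mathbf{\Sigma}_{n}|\circ\mathbf{\Sigma}_{n})$, which produces the linear majorizer $-2\,\text{Re}\big(\mathbf{\Bar{w}}_{\theta}^H\cdot 2(|\mathbf{\Sigma}_{n}|\circ\mathbf{\Sigma}_{n})\mathbf{\Bar{w}}_{\theta}^{(t)}\big)=-\text{Re}\big(\mathbf{\Bar{w}}_{\theta}^H\cdot 4(|\mathbf{\Sigma}_{n}|\circ\mathbf{\Sigma}_{n})\mathbf{\Bar{w}}_{\theta}^{(t)}\big)$, touching the concave term at $\mathbf{\Bar{w}}_{\theta}^{(t)}$ up to an additive constant that, like the one from the first term, is independent of $\mathbf{\Bar{w}}_{\theta}$ and is discarded. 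Summing the (constant) first term, the affine cross term, and this majorizer, and deleting all $\mathbf{\Bar{w}}_{\theta}$-independent terms, yields exactly equation (\ref{eq:surrogate_fct_Frob}); the majorization inequality on $\mathbb{T}_k$ and the tightness at $\mathbf{\Bar{w}}_{\theta}=\mathbf{\Bar{w}}_{\theta}^{(t)}$ (again modulo the discarded constant) follow since only the last term was modified and Lemma \ref{lemma3} is exact there.

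The one non-cosmetic point — and the main obstacle — is verifying the hypothesis of Lemma \ref{lemma3}, namely that $|\mathbf{\Sigma}_{n}|\circ\mathbf{\Sigma}_{n}$ is Hermitian positive semidefinite, so that $-\mathbf{\Bar{w}}_{\theta}^H(|\mathbf{\Sigma}_{n}|\circ\mathbf{\Sigma}_{n})\mathbf{\Bar{w}}_{\theta}$ is genuinely concave. By the Schur product theorem this holds as soon as both $\mathbf{\Sigma}_{n}$ and $|\mathbf{\Sigma}_{n}|$ are positive semidefinite: $\mathbf{\Sigma}_{n}$ is a principal submatrix of the positive semidefinite plug-in $\mathbf{\widetilde{\Sigma}}$, and positive semidefiniteness of the modulus $|\mathbf{\Sigma}_{n}|$ is the standard structural assumption on the coherence core. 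For the regularized plug-ins of Table \ref{tab:tab_plug_in} one should additionally check that this property survives shrinkage to a scaled identity (immediate, being a convex combination with $\mathbf{I}_k$) and banded tapering (which requires the taper mask to be positive semidefinite); I would record this caveat explicitly. Everything else is bookkeeping of the factors of two.
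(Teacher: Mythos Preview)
Your proof follows essentially the same route as the paper's Appendix~\ref{app_MM_LS}: drop the $\mathbf{\Bar{w}}_{\theta}$-independent first term, collapse the two cross terms into $-4\,\text{Re}(\mathbf{\Bar{w}}_{\theta}^H(|\mathbf{\Sigma}_{pn}|\circ\mathbf{\Sigma}_{pn})\mathbf{w}_{\theta})$ by Hermitian conjugacy, apply Lemma~\ref{lemma3} to the remaining concave quadratic, and combine. Your additional verification of the positive-semidefiniteness hypothesis of Lemma~\ref{lemma3} via the Schur product theorem, and the caveat about the tapering mask, are points the paper's proof omits entirely; they make your argument more complete, not different.
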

\begin{proof}
    in Appendix \ref{app_MM_LS}
\end{proof}
\noindent
Then the minimization problem can be solved using Lemma \ref{lemma2}. The \ac{MM} algorithm is presented in the box Algorithm \ref{algo:MM_Frob}.

\begin{figure*}[!ht] 
    \centering
    \begin{minipage}[b]{0.3\textwidth}
        \centering
        \includegraphics[trim={0 0 0 1.35cm}, clip,width=1.2\linewidth]{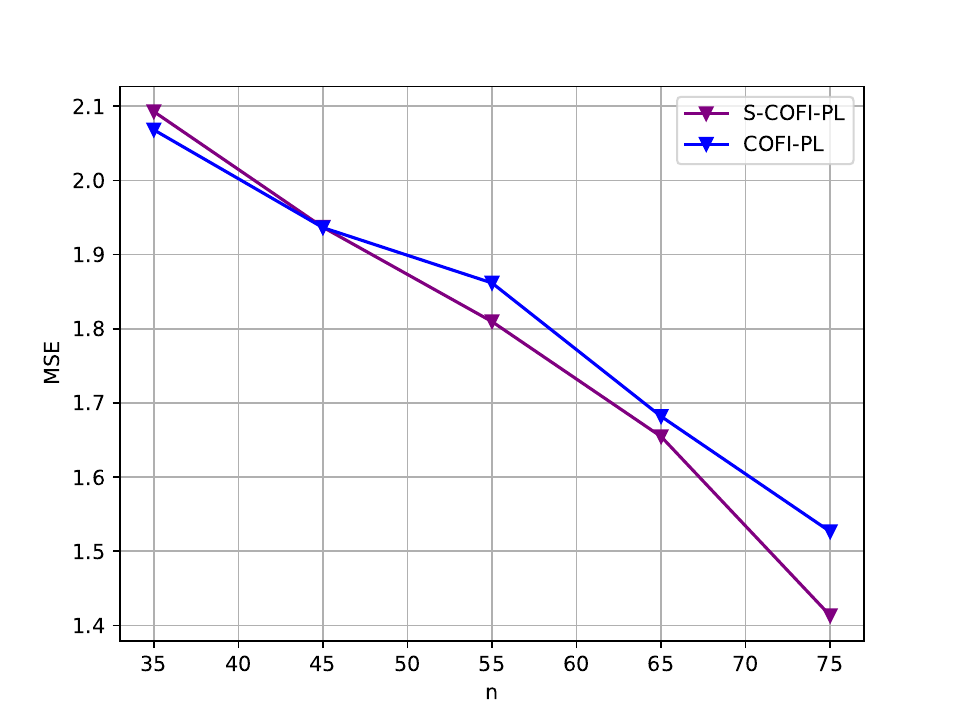}
        
        \caption*{(a)}
    \end{minipage}
    \hfill
    \begin{minipage}[b]{0.3\textwidth}
        \centering
        \includegraphics[trim={0 0 0 1.35cm}, clip, width=1.2\linewidth]{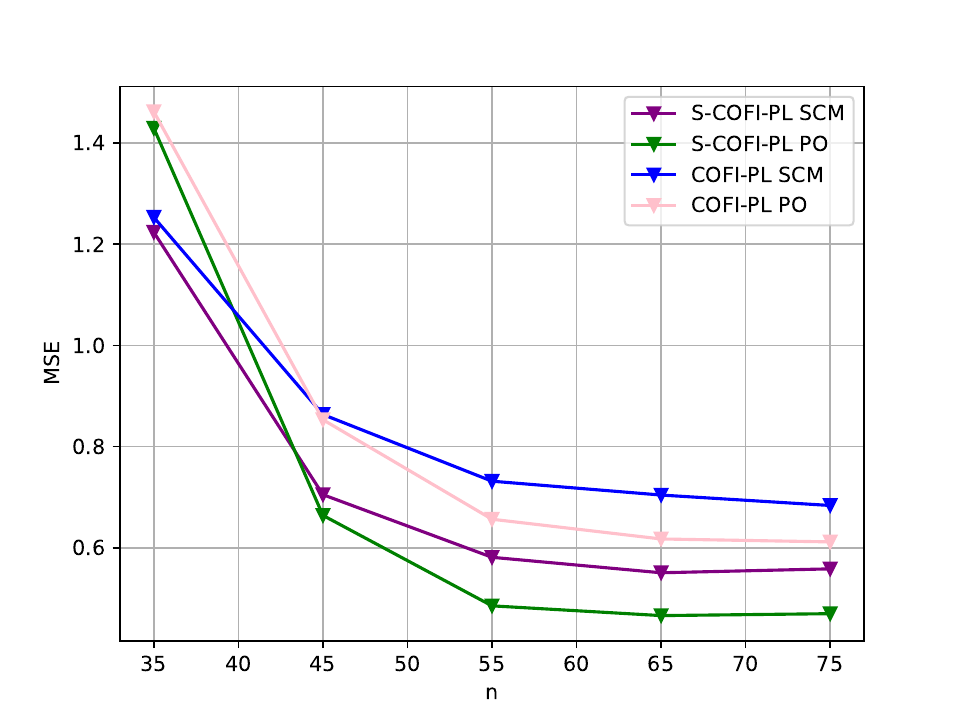}
        \caption*{(b)}
    \end{minipage}
    \hfill
    \begin{minipage}[b]{0.3\textwidth}
        \centering
        \includegraphics[trim={0 0 0 1.35cm}, clip,width=1.2\linewidth]{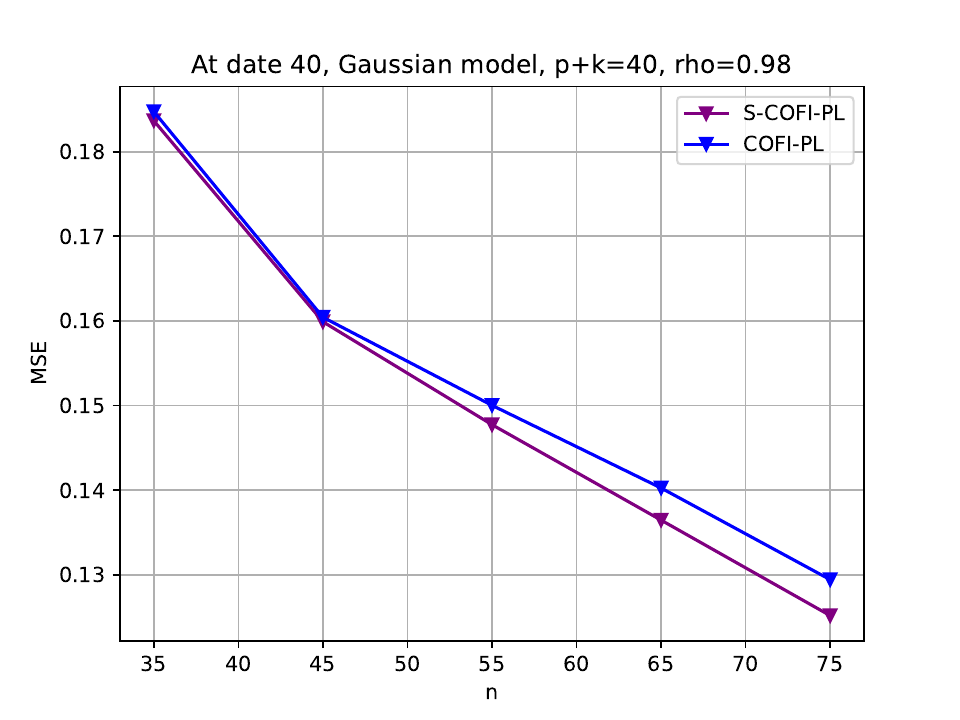}
        \caption*{(c)}
    \end{minipage}
    \caption{\ac{MSE} of phase difference estimation between the $1^{\text{st}}$ and $40^{\text{th}}$ (last) dates for $l = 40$  with respect to increasing sample size ($n$) with \ac{KL} divergence. $1^{st}$ column: \ac{SCM} using gaussian simulated data, $2^{nd}$ column: \ac{SCM} and \ac{PO} using non-gaussian simulated data, $3^{rd}$ column: \ac{SK-PO} to identity with $\beta = 0.9$ using gaussian simulated data.} 
    \label{fig:MSE_KL}
\end{figure*}

\begin{algorithm}[H]
\caption{\ac{MM} for Frobenius norm }
\begin{algorithmic}[1]
    \State \textbf{Input}: $\mathbf{\widetilde{\Sigma}} \in \mathbb{C}^l$, $\mathbf{w}_{\theta}^{(1)} \in \mathbb{T}_l$
    \State Compute  $\mathbf{M} = |\mathbf{\Sigma}_{pn}| \circ \mathbf{\Sigma}_{pn}$ 
    \State Compute  $\mathbf{N} = |\mathbf{\Sigma}_{n}| \circ \mathbf{\Sigma}_{n}$ 
    \Repeat
        \State Compute $\ddot{\mathbf{w}}_{\theta}^{(t)} = \mathbf{M} + \mathbf{N} \mathbf{\bar{w}}_{\theta}^{(t)} $
        \State Update of $ \mathbf{\bar{w}}_{\theta}^{(t)} = \Phi_{\mathbb{T}}\{ \ddot{\mathbf{w}}_{\theta}^{(t)} \} $ 
        \State $t = t + 1$
    \Until{convergence}
    \State \textbf{Output}: $\hat{\mathbf{w}}_{\theta} = \mathbf{\bar{w}}^{(\text{end})} \in \mathbb{T}_k$
\end{algorithmic}
\label{algo:MM_Frob}
\end{algorithm}

\subsection{Complexity comparison study}
\label{subsec:complexity}
\ac{PL} methods are considered to be costly, especially as they handle matrices whose size depends on the length of the time series of images used. 
For the \ac{KL} divergence, the computational complexity depends on matrix inversion and \ac{SVD} decomposition. In \ac{S-COFI-PL}, these operations are performed on $(k, k)$ matrices, while in \ac{COFI-PL}, they are performed on $(p+k, p+k)$ matrices. 
For the Frobenius norm, where matrix inversion is no longer needed, the complexity is based on matrix multiplications. In \ac{S-COFI-PL}, this involves matrices of size $(k, k)$ and $(k, p)$, whereas in the \ac{COFI-PL} method, it involves matrices of size $(p+k, p+k)$. The complexity of the presented methods are reported in Table \ref{tab:tab_complexity}. 

\begin{table}[!h]
\centering
\begin{tabular}{|cc|c|}
\hline
\multicolumn{2}{|c|}{Method} & Complexity \\ \hline
\multicolumn{2}{|c|}{S-MLE-PL \cite{elhajjar2024}} & $O((l-1)^3)$ \\ \hline
\multicolumn{2}{|c|}{Sequential Estimator \cite{ansari2017sequential}} & 
\parbox[c]{3cm}{\centering $O(5 (k^3 + m^3)$ \\ $+ (k^2 + m^2))$} \\ \hline
\multicolumn{1}{|c|}{\multirow{2}{*}{\ac{S-COFI-PL}}} & \ac{KL} divergence & $O(2 k^3 + k^2 + kp)$ \\ \cline{2-3} 
\multicolumn{1}{|c|}{} & Frobenius norm &  $O(k p + k^2)$  \\ \hline
\multicolumn{1}{|c|}{\multirow{2}{*}{\ac{COFI-PL} \cite{vu2023covariance}}} & \ac{KL} divergence  & $O(2 l^3 + l^2)$  \\ \cline{2-3} 
\multicolumn{1}{|c|}{} &  Frobenius norm & $O(l^2)$  \\ \hline
\end{tabular}
\caption{Complexity comparison for S-MLE-PL \cite{elhajjar2024}, Sequential Estimator \cite{ansari2017sequential} where $m$ represents the number of compressed images, \ac{S-COFI-PL} and \ac{COFI-PL} for both \ac{KL} divergence and the Frobenius norm.}
\label{tab:tab_complexity}
\end{table}

\section{Numerical Experiments}
\label{section:exp} 

\begin{figure*}[!ht] 
    \centering
    \begin{minipage}[b]{0.3\textwidth}
        \centering
        \includegraphics[width=1.2\linewidth]{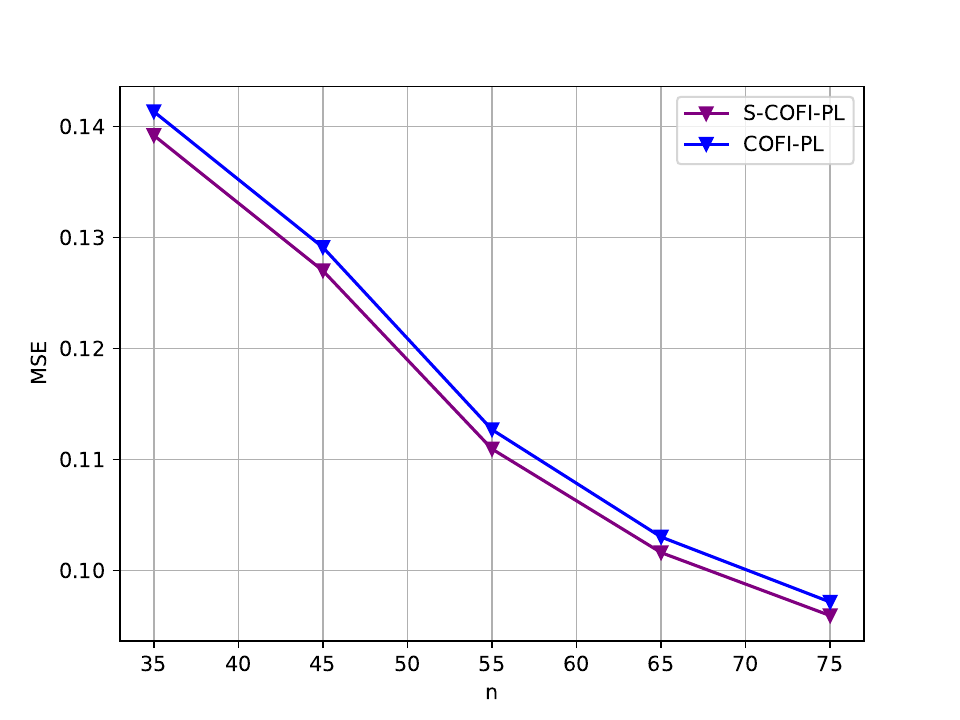}
    \end{minipage}
    \hfill
    \begin{minipage}[b]{0.3\textwidth}
        \centering
        \includegraphics[width=1.2\linewidth]{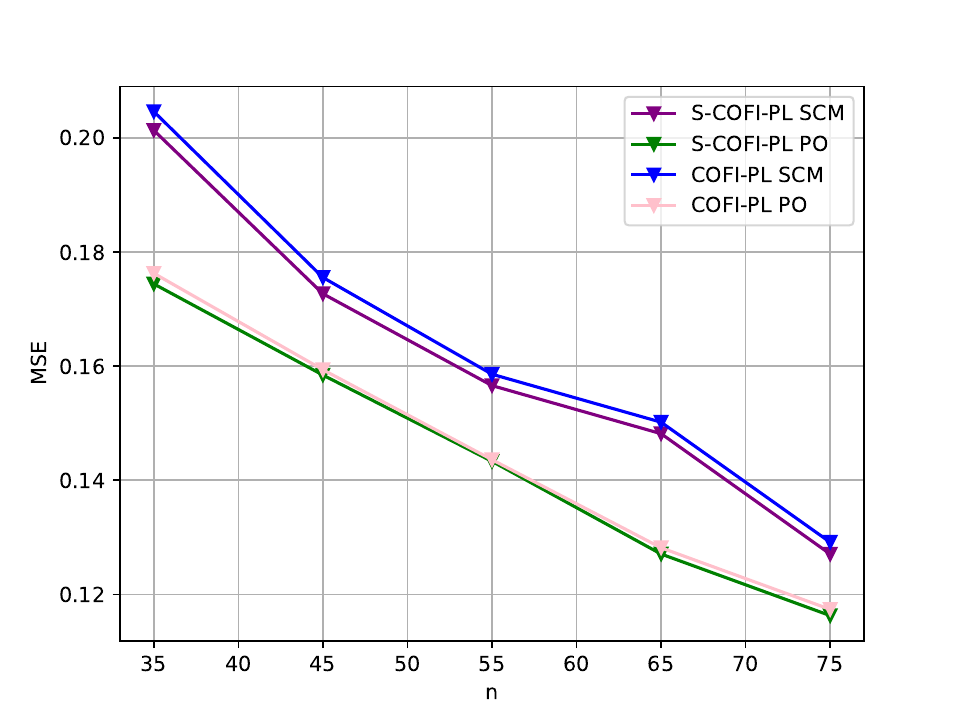}
    \end{minipage}
    \hfill
    \begin{minipage}[b]{0.3\textwidth}
        \centering
        \includegraphics[width=1.2\linewidth]{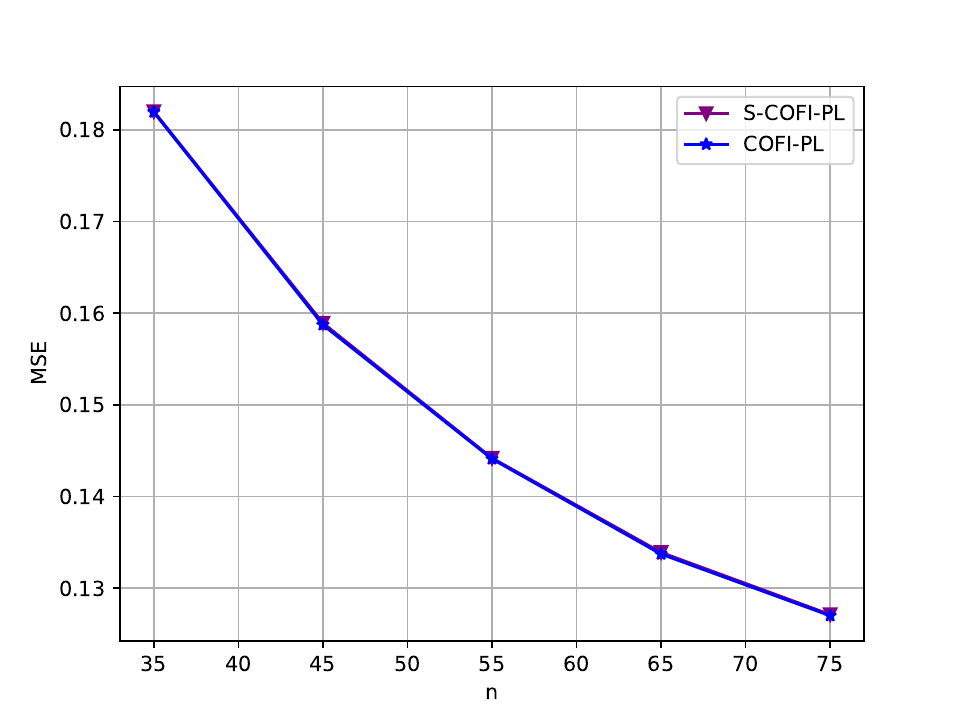}
    \end{minipage}
    \caption{\ac{MSE} of phase difference estimation between the $1^{\text{st}}$ and $40^{\text{th}}$ (last) dates for $l = 40$  with respect to increasing sample size ($n$) with Frobenius norm. $1^{st}$ column: \ac{SCM} using gaussian simulated data, $2^{nd}$ column: \ac{SCM} and \ac{PO} using non-gaussian simulated data, $3^{rd}$ column: \ac{BW-PO} with $b = 9$ using gaussian simulated data.} 
    \label{fig:MSE_LS}
\end{figure*}

We simulate a time series of size $l = p+k = 40$ images, where $p = 35$ and $k = 5$. The real core of the covariance matrix  $\mathbf{\tilde{\Psi}}$ is simulated as a Toeplitz matrix i.e.,  $[ \mathbf{\tilde{\Psi}}]_{ij} = \rho^{|i-j|}$ with a coefficient correlation $\rho = 0.98$. Phases differences vary linearly between $0$ and $2$ rad, i.e., $\Delta_{i, i-1} = \theta_i - \theta_{i-1} = 2/l$ rad.  The covariance matrix is then obtained according to equation (\ref{cov_mat_struc}). We consider two scenarios and we simulate $n$ \ac{i.i.d} samples according to
\begin{itemize}
    \item Gaussian distribution assumption: $\mathbf{\widetilde{x}}^i \sim \mathcal{N}(0,\, \mathbf{\widetilde{\Sigma}})$
    \item Non Gaussian distribution assumption: $\mathbf{\widetilde{x}}^i \sim \mathcal{N}(0,\, \tau_i \mathbf{\widetilde{\Sigma}})$ , where each $\tau_i$ is sampled following a Gamma distribution $\tau \sim \Gamma(\nu, \frac{1}{\nu})$ with $\nu = 1$.
\end{itemize}
We compare the results of our approach with the offline \ac{COFI-PL} approach \cite{vu2024covariance}. Several matrix distances and covariance matrix plug-in are used and presented. \acl{MSE} (\acs{MSE}) are computed using $1000$ Monte Carlo trials.

\noindent
Fig. \ref{fig:MSE_KL} provides the \ac{MSE} of phase estimate when the sample size $n$ increases. 
Fig. \ref{fig:MSE_KL} (a) shows a comparison of phase estimates with \ac{SCM} as a plug-in for the covariance matrix between the offline \ac{COFI-PL} approach and the sequential \ac{S-COFI-PL} approach. The phase difference estimations from these 2 approaches yield similar accuracy. 
Fig. \ref{fig:MSE_KL} (b) can be analysed in $2$ different setups, comparing the offline and sequential approaches, and second, comparing the plug-in \acs{SCM} with the \ac{PO}. For this case, we used the data simulated according to a Non Gaussian distribution. It is known that , in the context of covariance matrix estimation, robust estimators behave equivalently to estimators based on a Gaussian model assumption when the data follow a Gaussian distribution \cite{pascal2008empirical, harari2008use}.  However, when the data follow a Non Gaussian distribution, the performance improvement is significant \cite{vu2023robust}, as it is showed in Fig. \ref{fig:MSE_KL} (b).
Fig. \ref{fig:MSE_KL} (c) shows a comparison between the sequential and offline processing with the shrinkage to identity regularization for the covariance matrix plug-in. Both approaches yields similar results.

\noindent
Fig. \ref{fig:MSE_LS} represents the \ac{MSE} of phase estimates when $n$ increases using the Frobenius norm. Fig. \ref{fig:MSE_LS} (a) shows that taking the \acs{SCM} as a plug-in yields similar results for the sequential and offline approaches. As in the case of the \ac{KL} divergence, \ac{PO} yields better results than the \ac{SCM} when applied to Non Gaussian data (Fig. \ref{fig:MSE_LS} (b)).  Fig. \ref{fig:MSE_LS} (c) shows similar performances between the \ac{S-COFI-PL} and \ac{COFI-PL} when taking a tapering regularization for the \ac{PO}.


\noindent
As shown, the sequential approach \ac{S-COFI-PL} yields similar results to the offline method \ac{COFI-PL} \cite{vu2024covariance}. Nonetheless, it is interesting to study the scenario of integrating sequentially several blocks of new \ac{SAR} images, in order to analyze the error propagation in the estimated phase. Let's consider a scenario of having a total of $t$ \ac{SAR} images. We consider the following decomposition for the sequential treatment : 
\begin{itemize}
    \item $p$ \ac{SAR} images represent the first block, which will be used to apply the \ac{COFI-PL}
    \item $k$ \ac{SAR} images represent the second block, which will be used to apply the \ac{S-COFI-PL} approach along with the first block
    \item $m$ \ac{SAR} images represent the third block. The estimation of the $m$ phases will be examined in several ways. First, using the offline \ac{COFI-PL} approach where $t = p + k + m $ \ac{SAR} images are used directly. Second, the $m$ phases are estimated according to the $l = p + k$ past images which can be approached in two way: the offline method and the concatenation of $p$ phases computed offline with $k$ phases calculated sequentially.
\end{itemize}

\begin{figure}[H]
    \centering
    \includegraphics[trim={0 0 0 1.35cm}, clip,width=7cm]{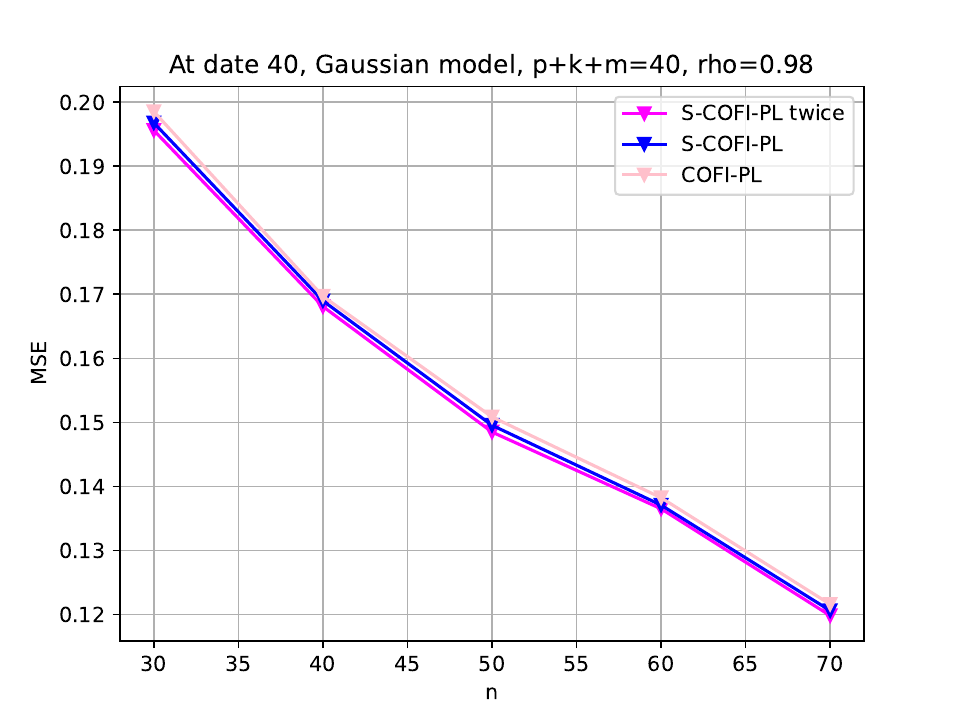}
    \caption{\small \ac{MSE} on $\mathbf{\Bar{w}}_{\theta}$ with increasing $n$, $t = 40$, $\rho=0.98$ using $1000$ Monte Carlo trials.}
    \label{fig:multiple_bloc}
\end{figure}

\noindent
Fig. \ref{fig:multiple_bloc} represents \ac{MSE} variations on $\mathbf{\Bar{w}}_{\theta}$ with an increase in sample size $n$. For the simulations, we consider a scenario of a total of $t = 40$ images, divided into 3 blocks. The first block is of size $p = 30$, the second block is of size $k = 5$, and the third block is of size $m = 5$.
The coherence matrix is simulated as a Toeplitz matrix with a coefficient correlation $\rho = 0.98$. \ac{PO} is used as a plug-in for the covariance matrix with the Frobenius norm as a matrix distance. 
The application once of the sequential method, \ac{S-COFI-PL}, yields results similar to the offline method, \ac{COFI-PL}. The successive integration of multiple blocks of new \ac{SAR} images, taking into account the results of the sequential method progressively, shows similar performances, in terms of \ac{MSE}, as the offline approach \ac{COFI-PL} and the sequential approach \ac{S-COFI-PL} applied once.

\section{Real world study}
\label{section:real_data}

\begin{figure*}[!ht] 
    \centering
    \begin{minipage}[b]{0.45\textwidth}
        \caption*{\footnotesize (a) \ac{S-COFI-PL} PO}
        \centering
        \includegraphics[trim={0 5cm 0 5cm}, clip, width=\columnwidth, height=0.5\textwidth]{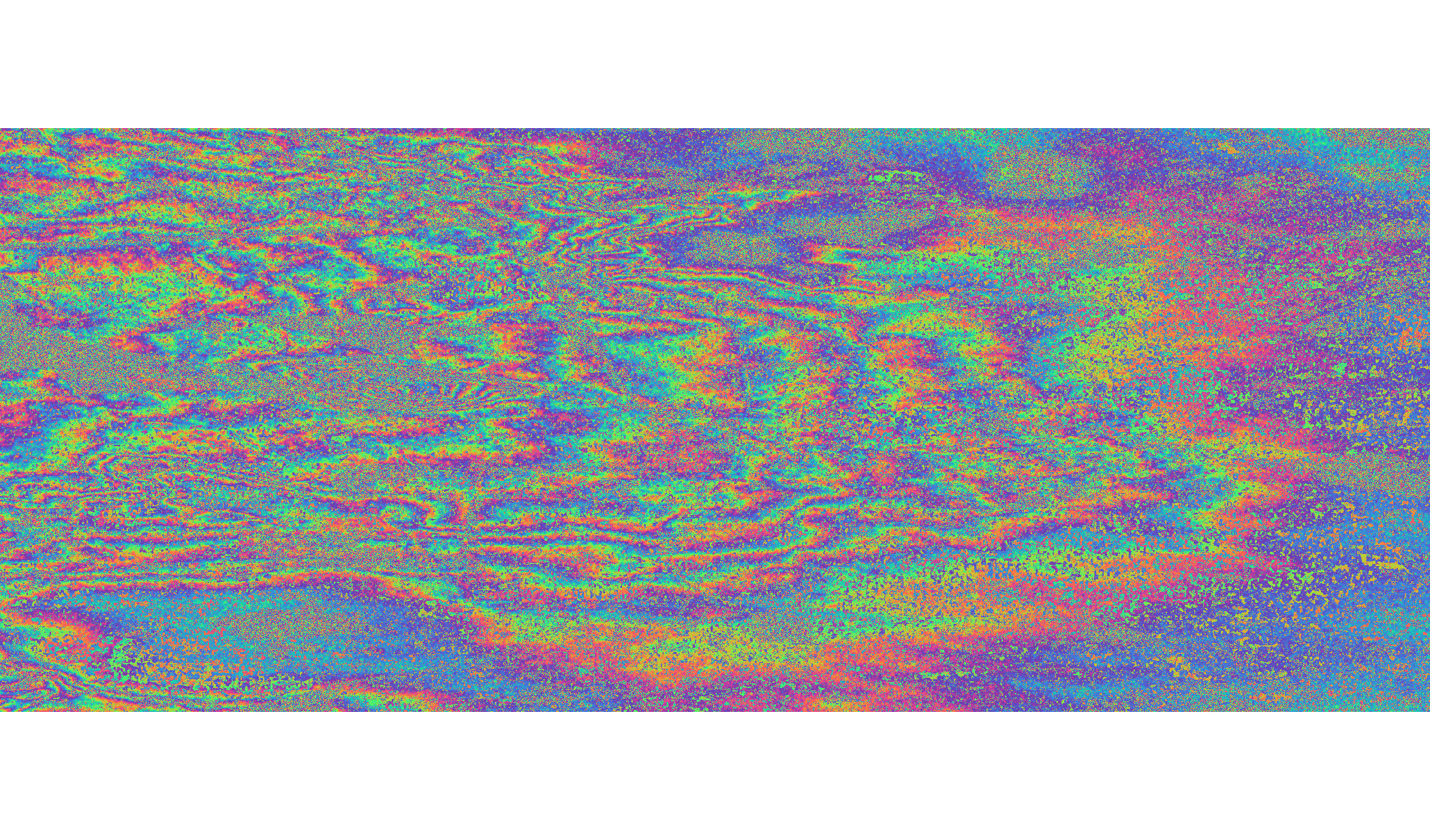} 
    \end{minipage}
    \hspace{1cm} 
    \begin{minipage}[b]{0.45\textwidth}
        \caption*{\footnotesize (b) \ac{COFI-PL} PO}
        \centering
        \includegraphics[trim={0 2cm 0 2cm}, clip, width=\columnwidth, height=0.5\textwidth]{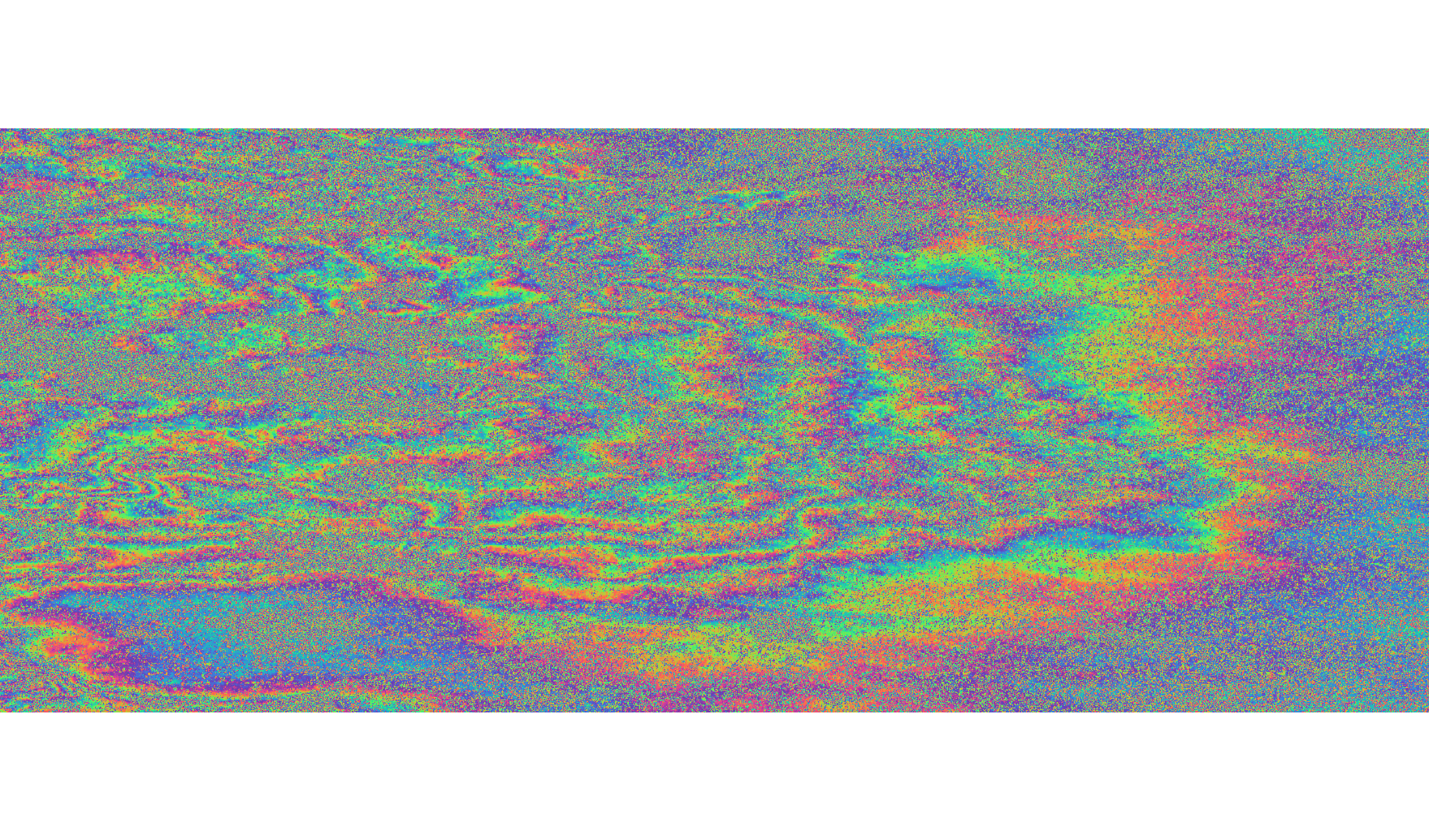} 
    \end{minipage}
    \begin{minipage}[b]{0.45\textwidth}
        \caption*{\footnotesize (e) \ac{S-COFI-PL} SK-PO}
        \centering
        \includegraphics[trim={0 5cm 0 5cm}, clip, width=\columnwidth, height=0.5\textwidth]{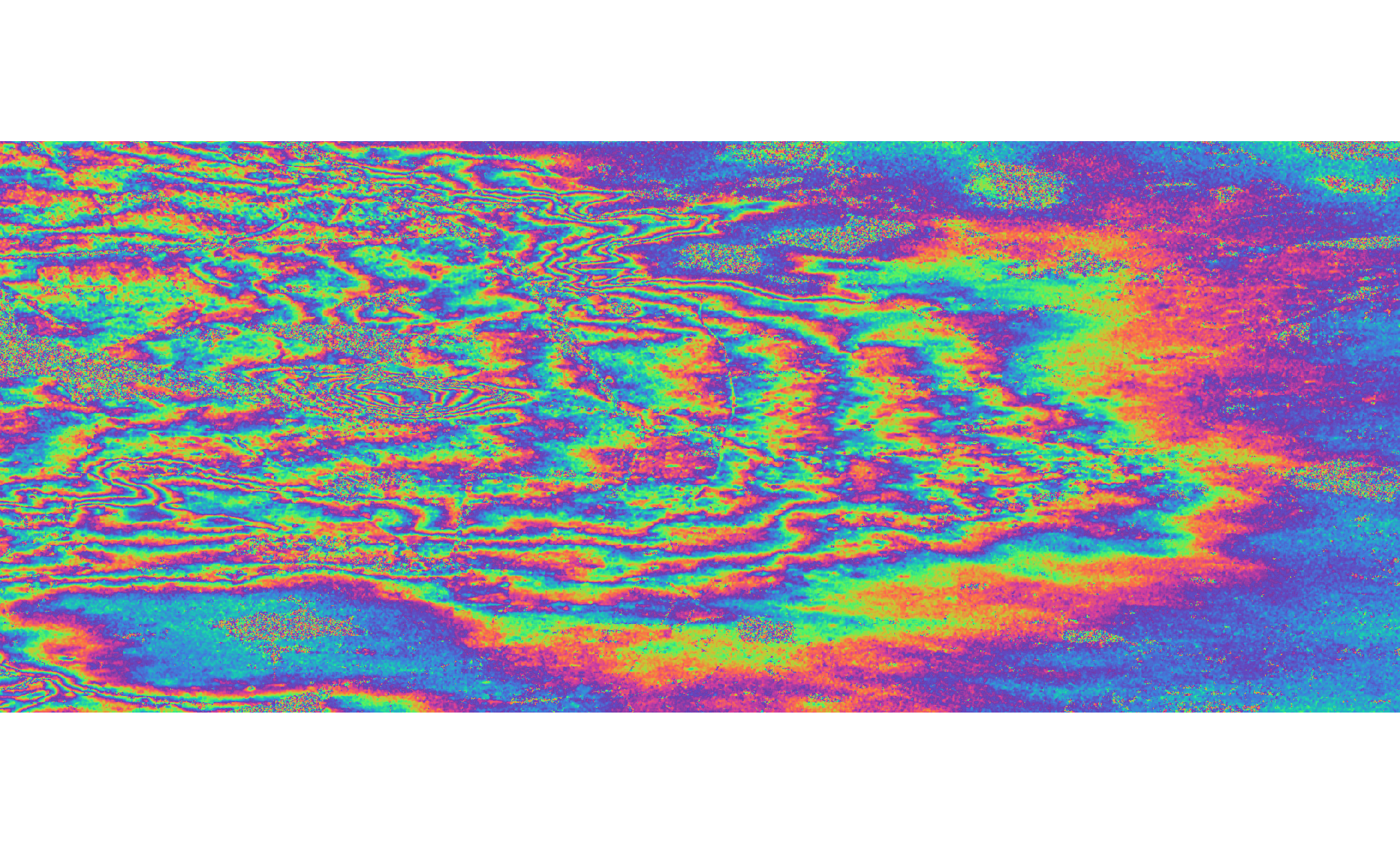} 
    \end{minipage}
    \hspace{1cm} 
    \begin{minipage}[b]{0.45\textwidth}
        \caption*{\footnotesize (f) \ac{COFI-PL} SK-PO}
        \centering
        \includegraphics[trim={0 6cm 0 6cm}, clip, width=\columnwidth, height=0.5\textwidth]{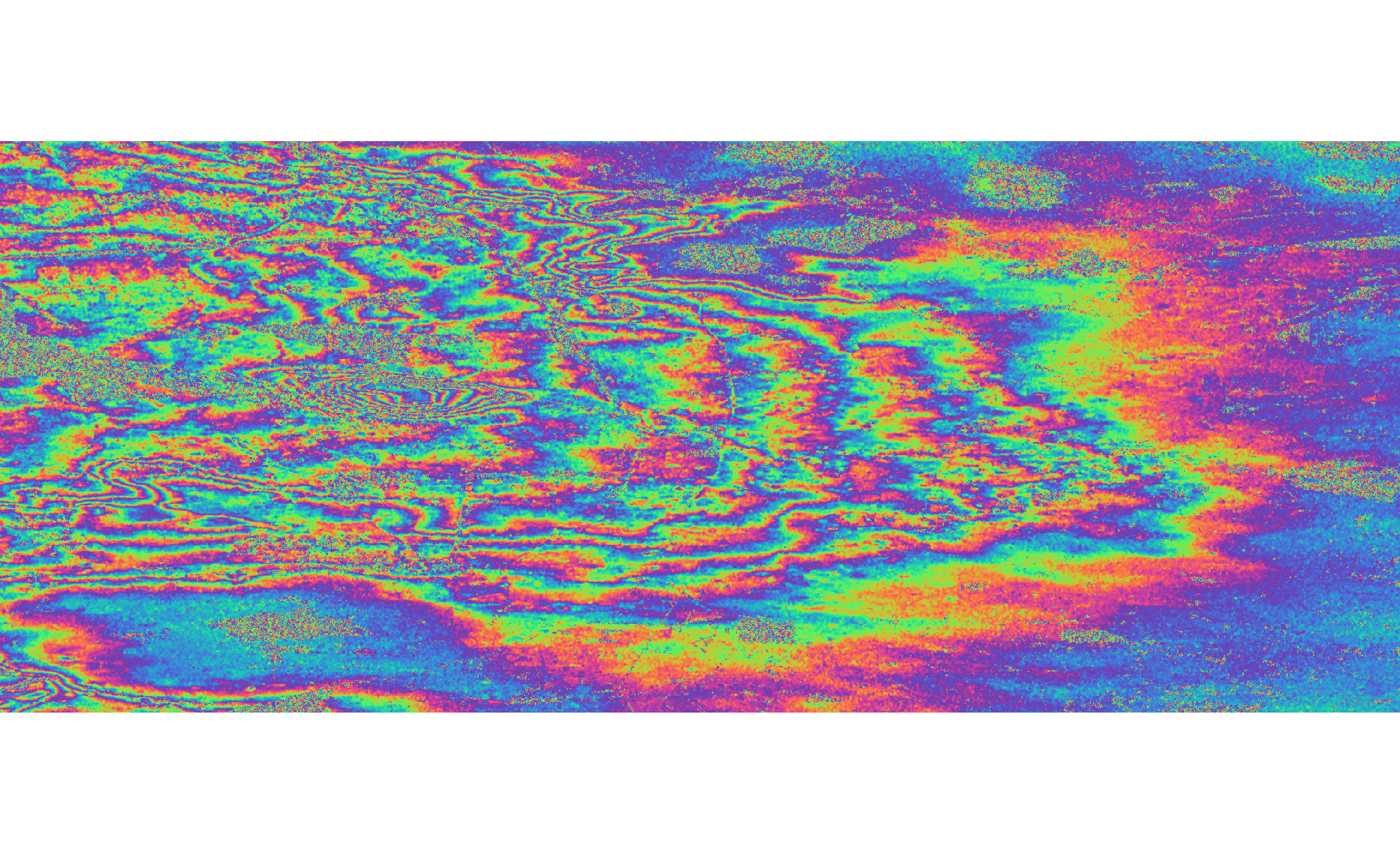} 
    \end{minipage}
    
    \hspace{1.5cm} 
    \begin{minipage}[b]{\textwidth}
        \centering
        \includegraphics[width=0.5\textwidth, height=4em]{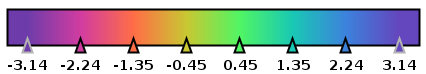}
    \end{minipage}
    
    \caption{The longest temporal baseline interferograms estimated by \ac{S-COFI-PL} ($1^{\text{st}}$ column) and \ac{COFI-PL} \cite{vu2024covariance} ($2^{\text{nd}}$ column) with various plug-in estimators for the covariance matrix and \ac{KL} divergence. $1^{\text{st}}$ row: \ac{PO}, $2^{\text{nd}}$  row: SK-PO}
    \label{fig:real_data_KL}
\end{figure*}

\subsection{Study area and dataset}


Ranking among the five largest cities in the world, Mexico city has a population exceeding $20$ million habitants. It covers $2000$ $\text{km}^2$, $2300$ m above sea level. 
Mexico city went through rapid urbanization, which led to a significant increase in demand for water. The primary water supply comes from aquifers, leading to subsidence and deformation across the city. Mexico city consists of an interesting case study for many multi-temporal \ac{InSAR} approaches. We used a data set of $40$ \ac{SAR} images over Mexico city, acquired between $14$ August $2019$ and $18$ December $2021$, every $12$ days. Pre-processing treatment was done via SNAP software \cite{esa_snap} where all images are co-registered with reference to the first date $14$ August $2019$.
In order to relate to the parameters in the simulations, $p = 35$, $k = 5$, the sample size is set to $n = 64$ (i.e sliding window of size $8 \times 8$ pixels).

\noindent
For this study, we focus on the comparison of the results of \ac{S-COFI-PL} with the offline approach \ac{COFI-PL} \cite{vu2023covariance, vu2024covariance}, and the sequential approaches proposed in \cite{ansari2017sequential, elhajjar2024}. 

\subsection{Qualitative assessment of the estimated phase}

 In Fig. \ref{fig:real_data_KL} and Fig.  \ref{fig:real_data_frob}, we show only the estimation of the interferogram with the longest temporal baseline of $857$ days, to emphasize performance in relation to temporal decorrelation. 
We provide results for various covariance matrix estimations, however our objective is to compare the outcomes of the sequential and offline approaches, regardless the chosen plug-in for the covariance matrix. Detailed comparison is made in \cite{vu2024covariance} between the different matrix distances that can be used, as well as the different covariance matrix plug-in. 

\noindent
Fig. \ref{fig:real_data_KL} presents the results for the \ac{KL} divergence. As highlighted in \cite{vu2023covariance}, the choice of the plug-in estimator significantly impacts the quality of the resulting interferogram (see the comparison in rows in Fig. \ref{fig:real_data_KL}). The \ac{PO} plug-in shows superior performance in the sequential approach \ac{S-COFI-PL} (Fig. \ref{fig:real_data_KL}  (a)) compared to the offline method \ac{COFI-PL} (Fig.  \ref{fig:real_data_KL} (b)). The \ac{KL} divergence function requires matrix inversion, which becomes increasingly complex as the matrix size grows. This issue is less problematic in the sequential processing, as the inversion is applied to smaller matrices, thereby enhancing the results compared to the offline processing. Additionally, after closer inspection, the shrinking to identity regularization (SK-PO) demonstrates superior performance in the sequential processing \ac{S-COFI-PL} compared to the offline \ac{COFI-PL} (Fig.  \ref{fig:real_data_KL}  (c) and (d)).
\noindent
The performances of \ac{S-COFI-PL} and \ac{COFI-PL}, in Fig.  \ref{fig:real_data_frob},  are similar. A comparison of the interferograms in the two columns reveals comparable quality and noise levels throughout the scene.
Additionally, a comparison of the different plug-ins is illustrated in Fig.  \ref{fig:real_data_frob} (a), (b), with the \ac{PO} plug-in, Fig. \ref{fig:real_data_frob} (c), (d), with the shrinking of the \ac{PO} plug-in, and  Fig. \ref{fig:real_data_frob} (e), (f) with the tapering regularization of the \ac{PO} plug-in. Upon closer examination, as mentioned in \cite{vu2023covariance}, regularization, particularly tapering, yields superior results.

\begin{figure*}[!ht] 
    \centering
    \begin{minipage}[b]{0.45\textwidth}
        \caption*{\footnotesize (a) \ac{S-COFI-PL} PO}
        \centering
        \includegraphics[trim={0 5cm 0 5cm}, clip, width=\columnwidth, height=0.5\textwidth]{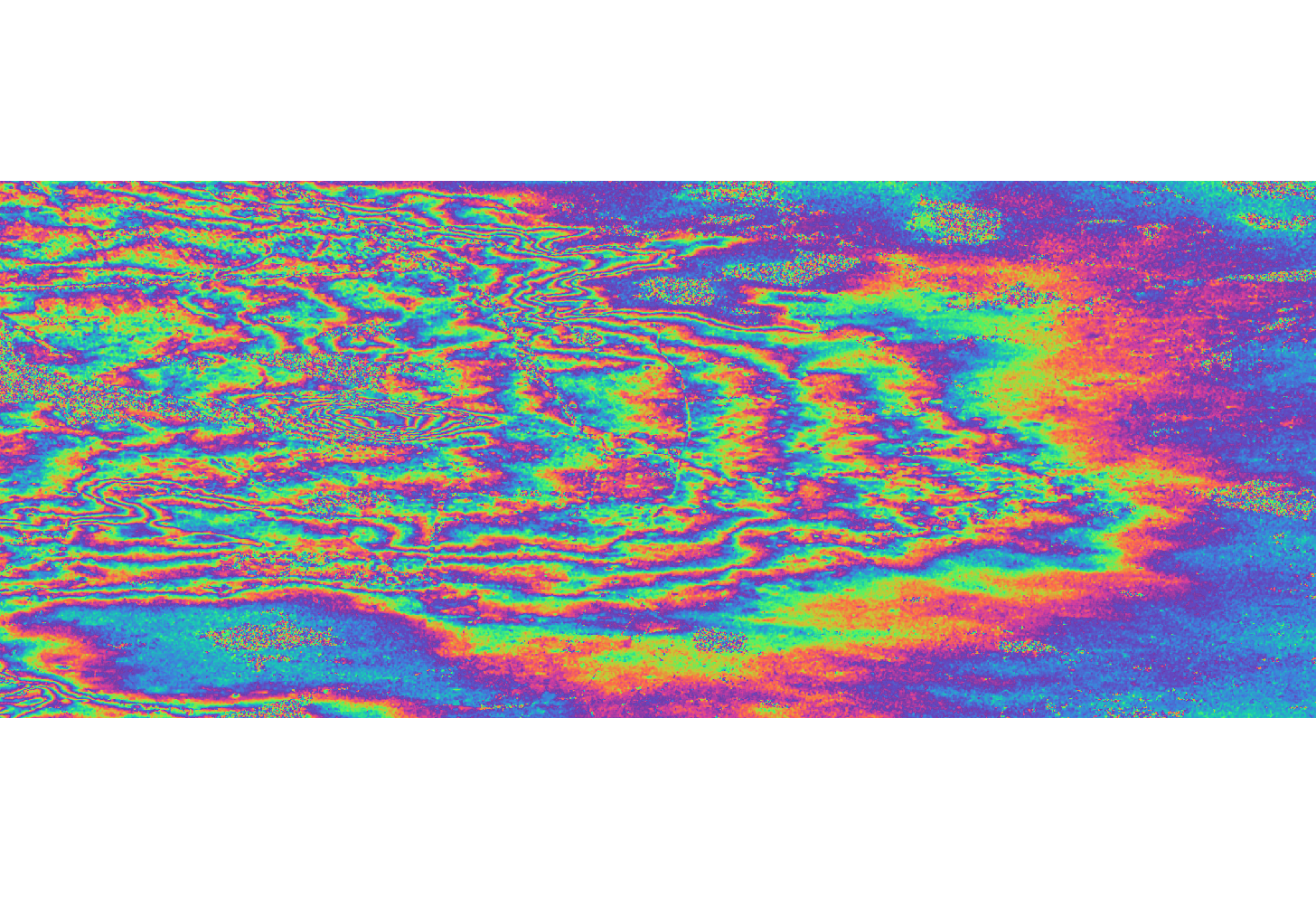} 
    \end{minipage}
    \hspace{1cm} 
    \begin{minipage}[b]{0.45\textwidth}
        \caption*{\footnotesize (b) \ac{COFI-PL} PO}
        \centering
        \includegraphics[trim={0 5cm 0 5cm}, clip, width=\columnwidth, height=0.5\textwidth]{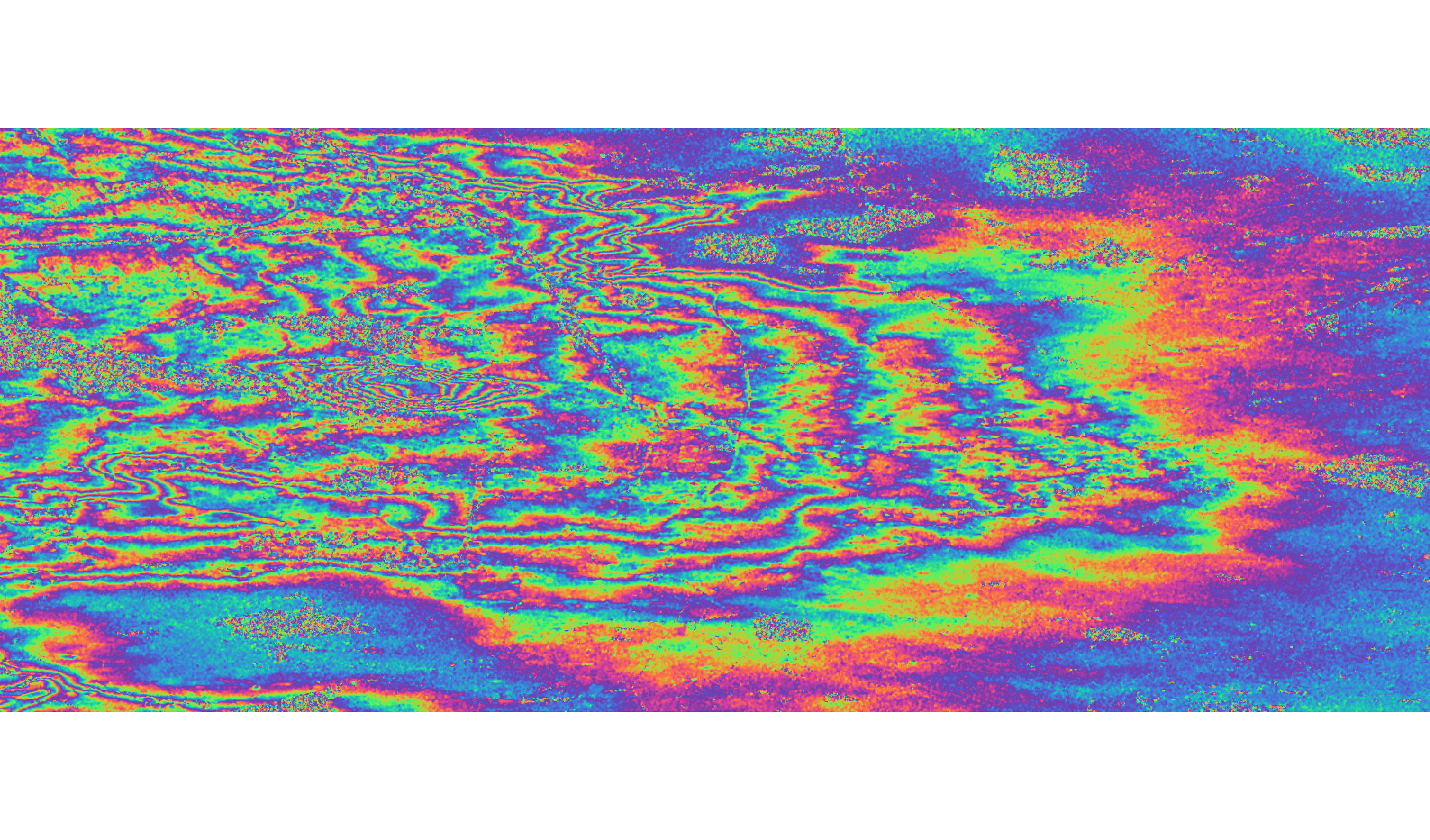} 
    \end{minipage}
    \begin{minipage}[b]{0.45\textwidth}
        \caption*{\footnotesize (c) \ac{S-COFI-PL} SK-PO}
        \centering
        \includegraphics[trim={0 2cm 0 2cm}, clip, width=\columnwidth, height=0.5\textwidth]{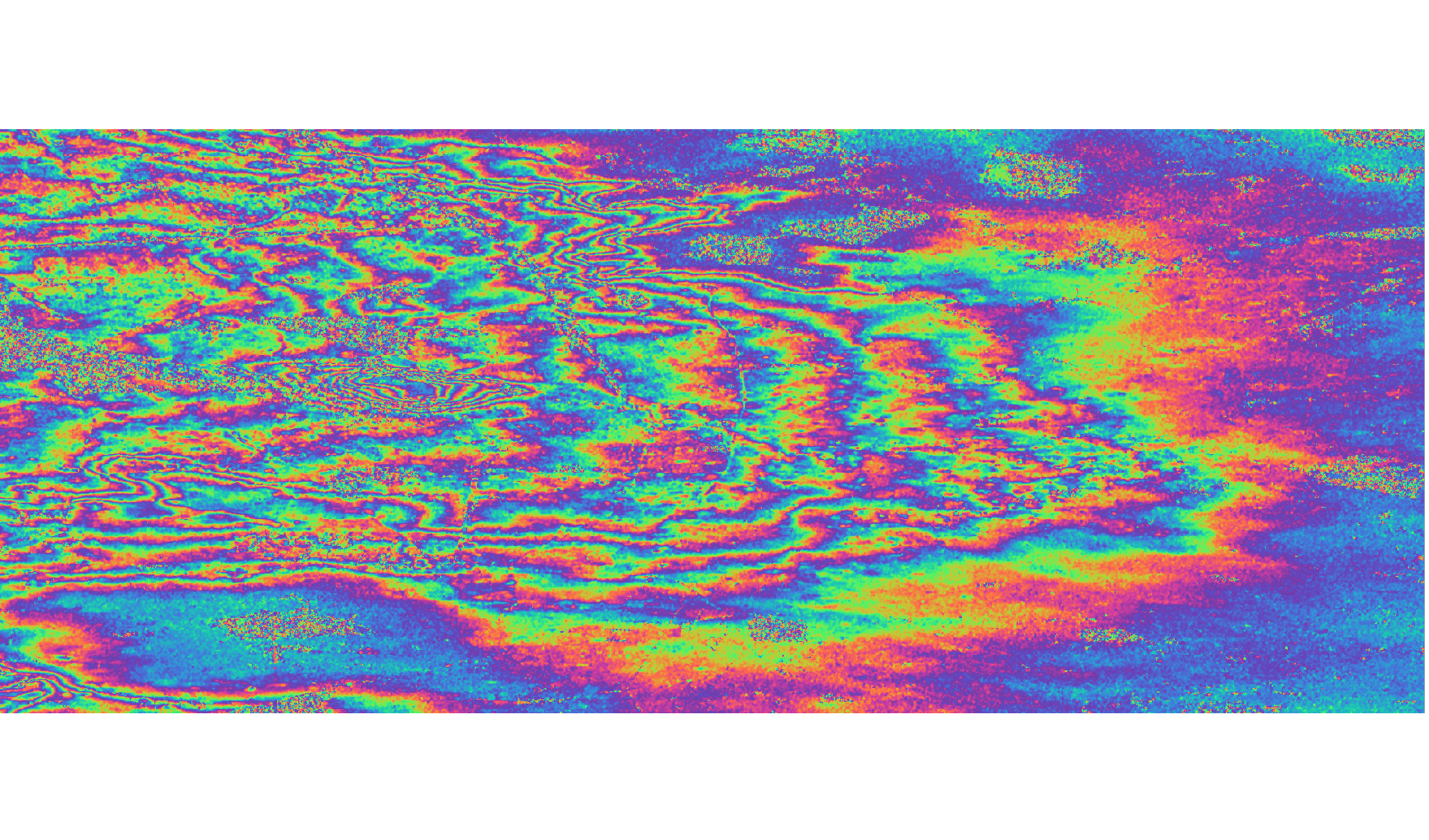} 
    \end{minipage}
    \hspace{1cm} 
    \begin{minipage}[b]{0.45\textwidth}
        \caption*{\footnotesize (d) \ac{COFI-PL} SK-PO}
        \centering
        \includegraphics[trim={0 5cm 0 5cm}, clip, width=\columnwidth, height=0.5\textwidth]{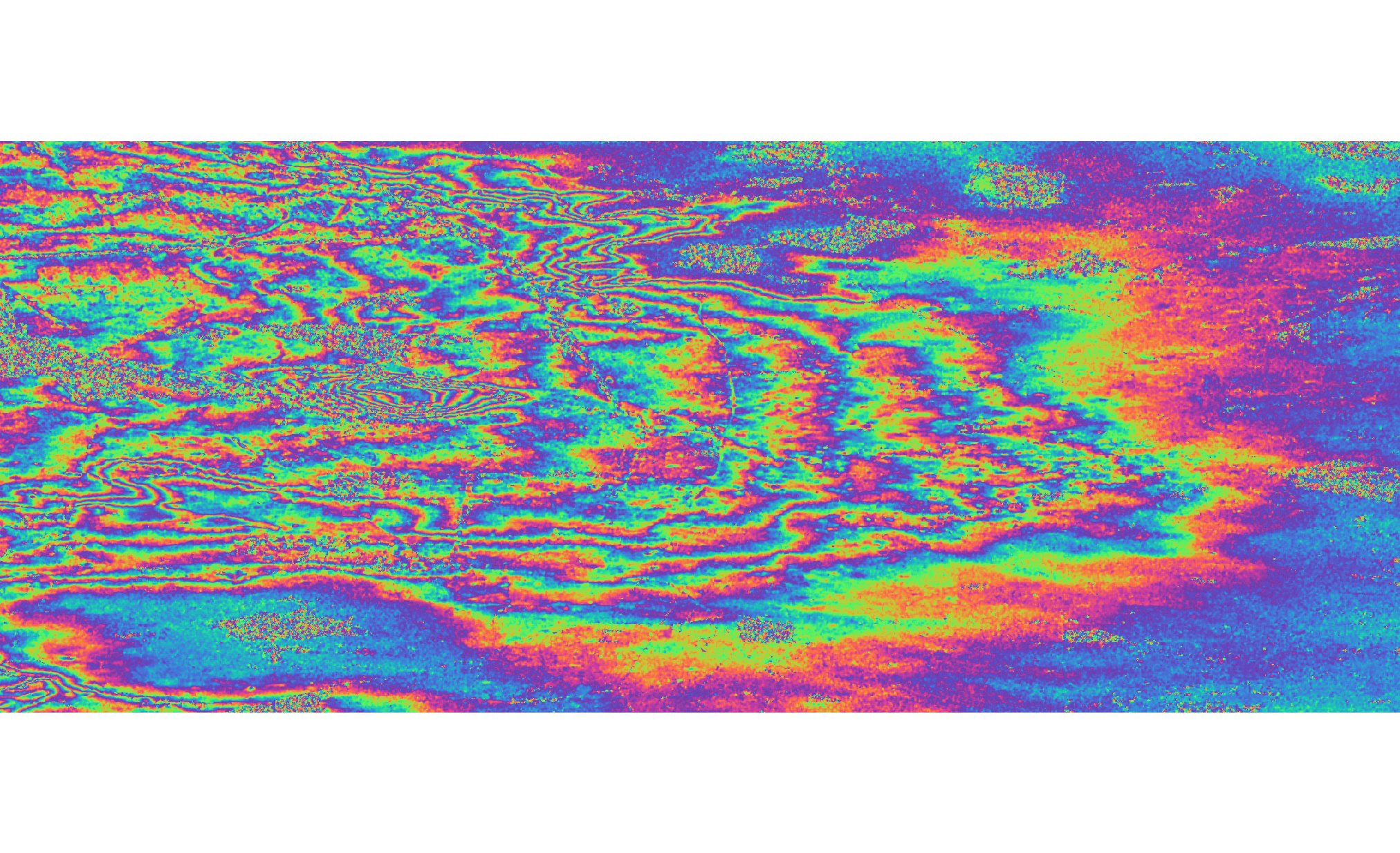} 
    \end{minipage}
    \begin{minipage}[b]{0.45\textwidth}
        \caption*{\footnotesize (e) \ac{S-COFI-PL} BW-PO}
        \centering
        \includegraphics[trim={0 5cm 0 5cm}, clip, width=\columnwidth, height=0.5\textwidth]{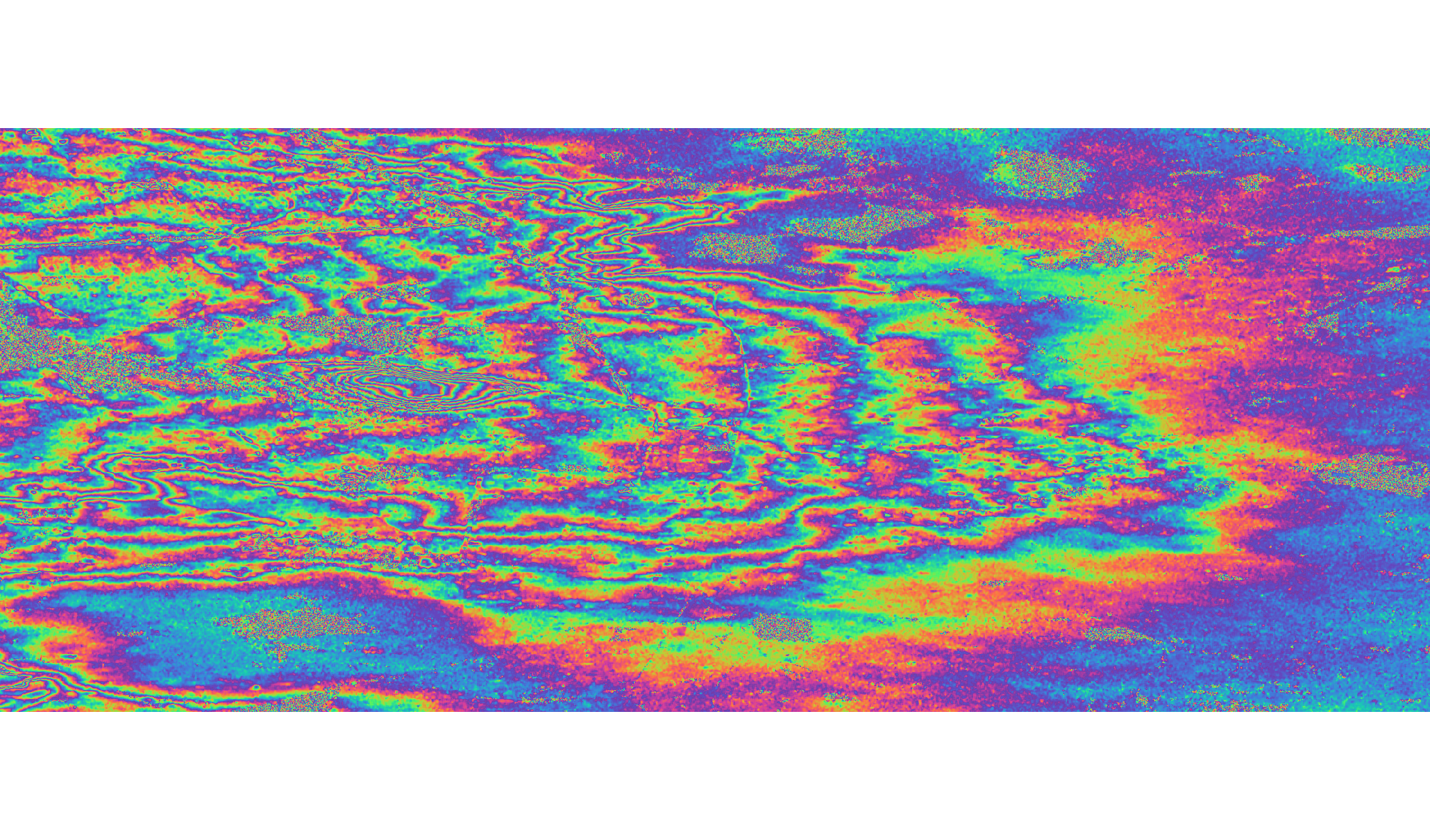} 
    \end{minipage}
    \hspace{1cm} 
    \begin{minipage}[b]{0.45\textwidth}
        \caption*{\footnotesize (f) \ac{COFI-PL} BW-PO}
        \centering
        \includegraphics[trim={0 5cm 0 5cm}, clip, width=\columnwidth, height=0.5\textwidth]{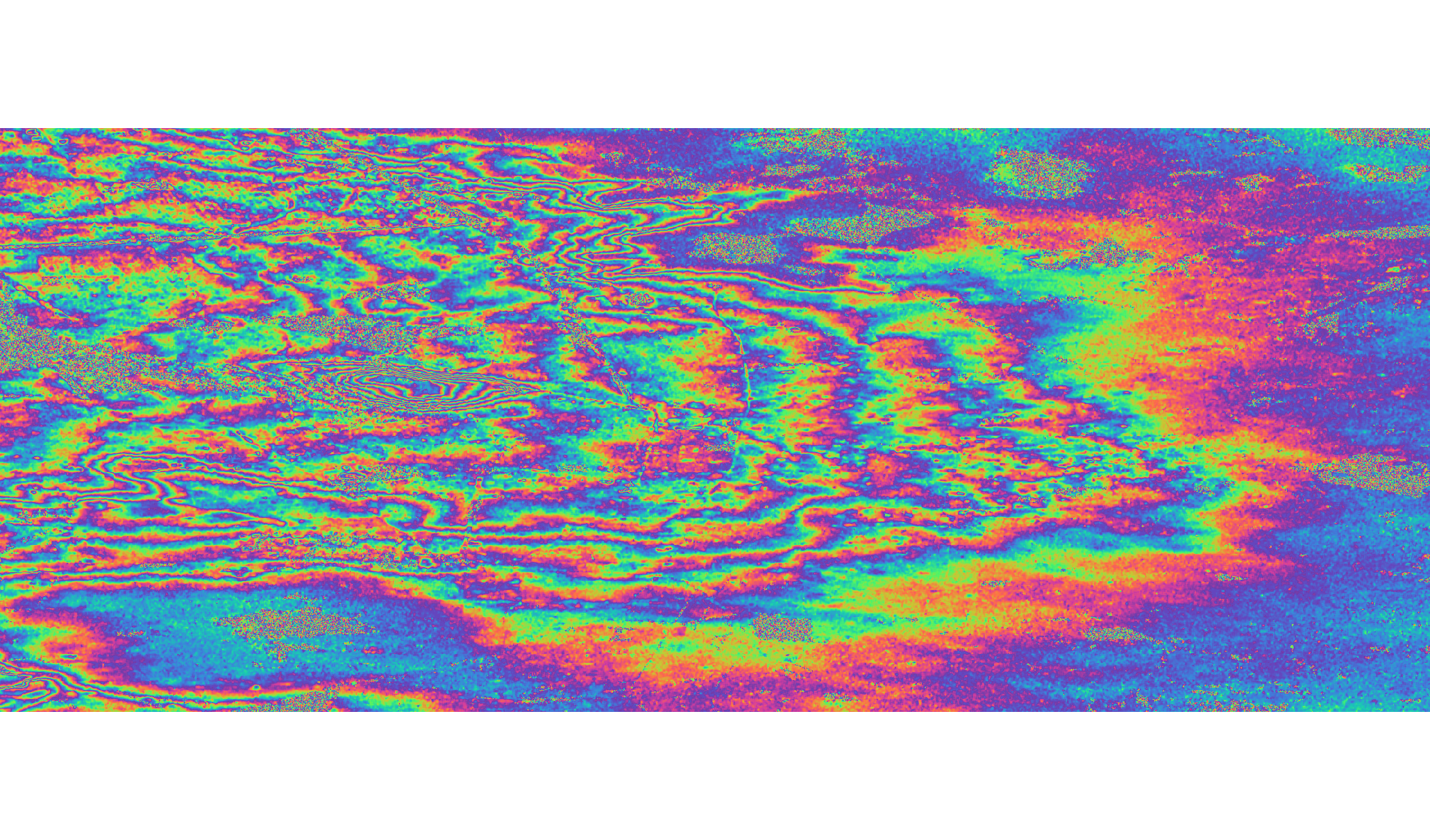} 
    \end{minipage}
    \hspace{1.5cm} 
    \begin{minipage}[b]{\textwidth}
        \centering
        \includegraphics[width=0.5\textwidth, height=4em]{images/real_data/legend_phase.png}
    \end{minipage}
    
    \caption{The longest temporal baseline interferograms estimated by \ac{S-COFI-PL} ($1^{\text{st}}$ column) and \ac{COFI-PL} \cite{vu2024covariance} ($2^{\text{nd}}$ column) with various plug-in estimators for the covariance matrix with Frobenius norm. $1^{\text{st}}$ row: \ac{SCM}, $2^{\text{nd}}$ row: SK-\ac{PO}, $3^{\text{rd}}$ row: BW-\ac{PO}}
    \label{fig:real_data_frob}
\end{figure*}

\begin{figure}[!h]
    \centering
    \begin{minipage}[b]{\columnwidth}
        \includegraphics[trim={0 2cm 0 2cm}, clip,
        width=\columnwidth]{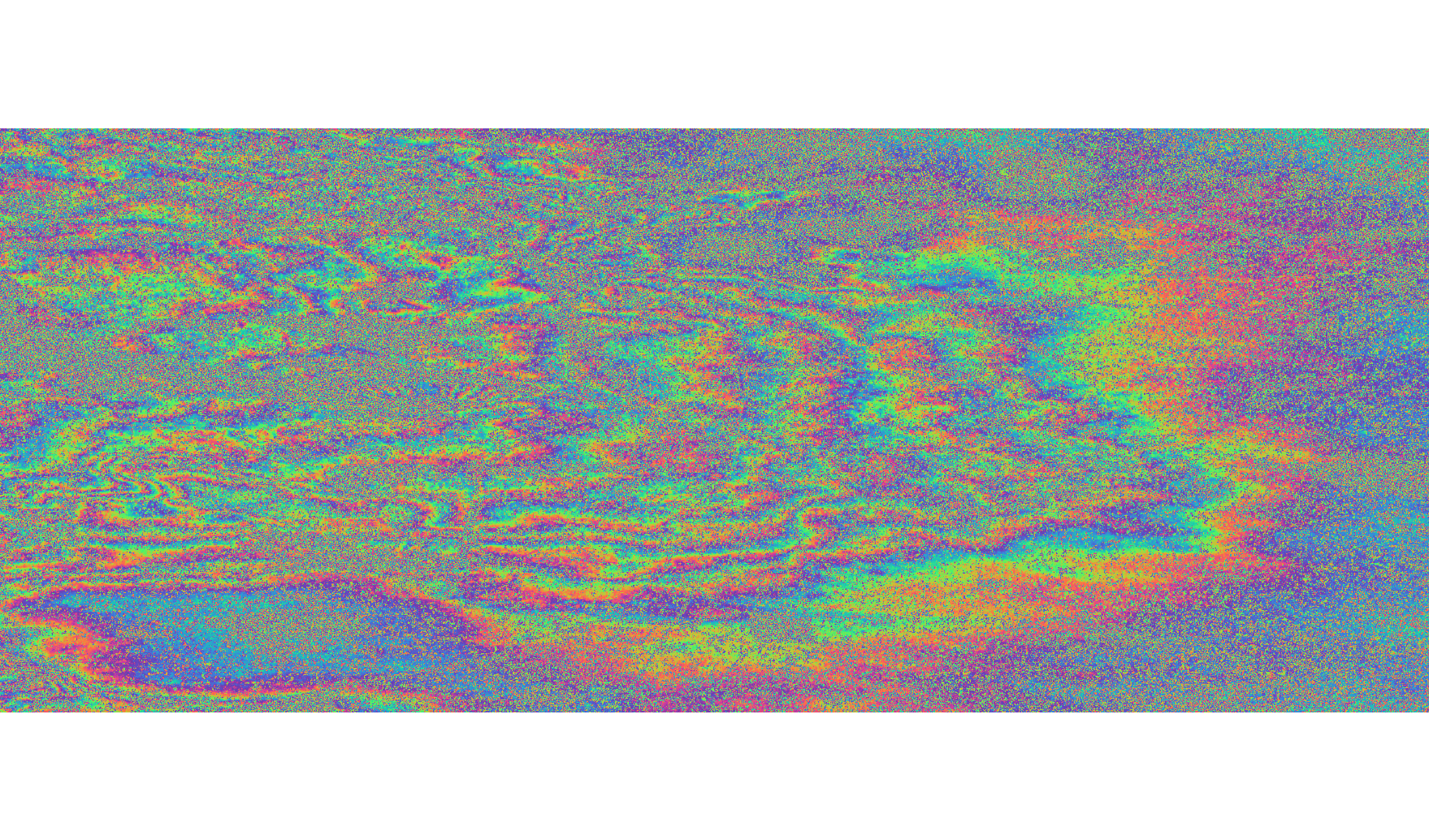}
        \caption*{(a) S-MLE-PL}
    \end{minipage}
    \vspace{2mm} 
    \begin{minipage}[b]{\columnwidth}
        \includegraphics[trim={0 5cm 0 5cm}, clip, width=\columnwidth]{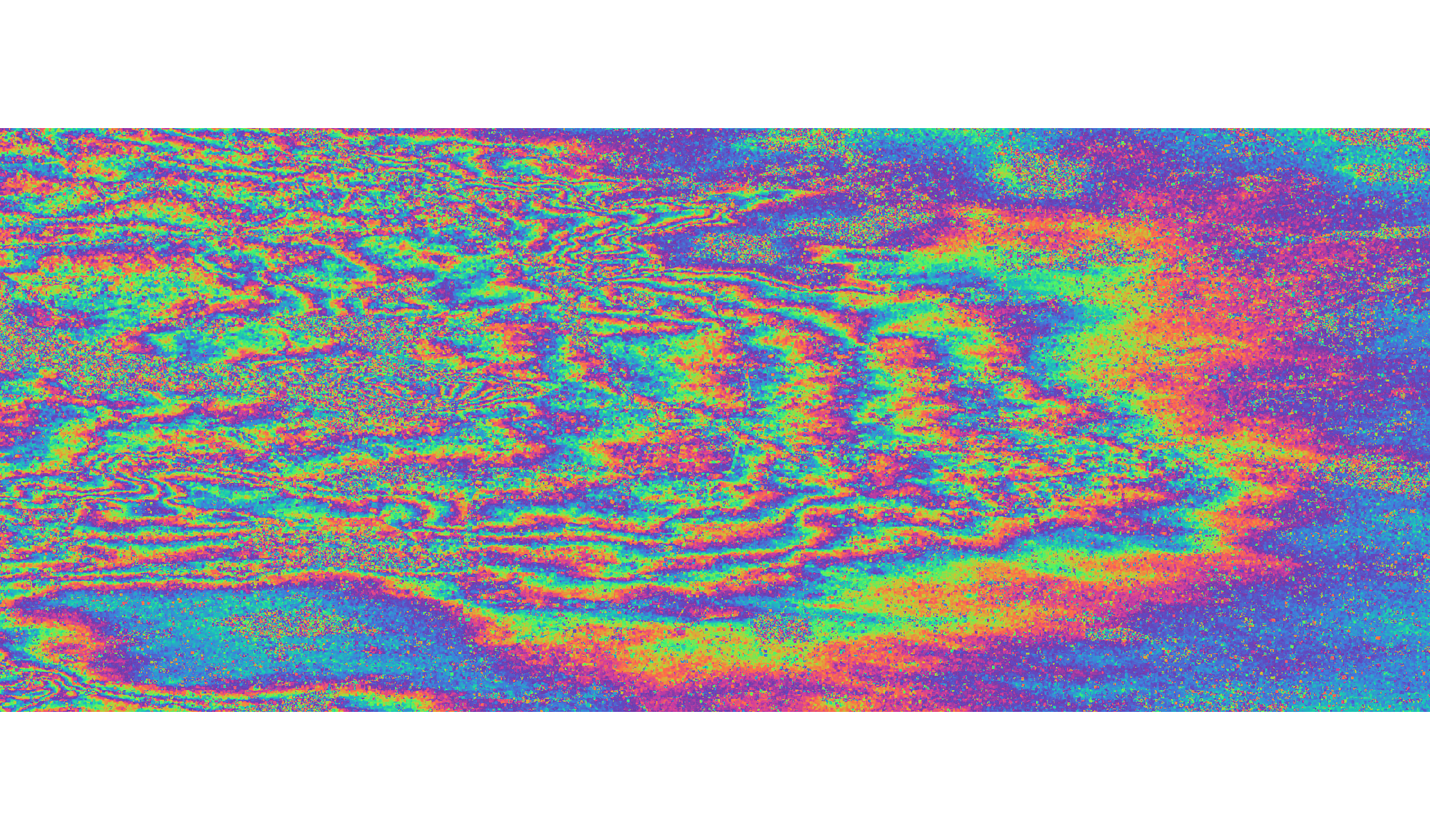} 
        \caption*{(b) Sequential Estimator}
    \end{minipage}
    \includegraphics[width=0.5\textwidth, height=4em]{images/real_data/legend_phase.png}
    \caption{The longest temporal baseline interferogram estimated by (a) S-MLE-PL \cite{elhajjar2024} and (b) Sequential Estimator \cite{ansari2017sequential} with a stack of size $k = 5$.}
    \label{fig:seq_realdata}
\end{figure}
\noindent
Fig. \ref{fig:seq_realdata} representes the results of the sequential approaches in the state-of-the-art \cite{ansari2017sequential, elhajjar2024}. 
The Sequential Estimator \cite{ansari2017sequential} shows performance comparable to that of shrinking the \ac{PO} to the identity using the \ac{KL} divergence, though the latter remains slightly noisier. This result is not surprising, as the sequential estimator can be viewed as a low-rank approximation. In \cite{vu2024covariance}, it was shown that the low-rank structure of the covariance matrix yields results similar to those obtained with shrinkage regularization to the identity. However, the sequential estimator appears to be outperformed by the \ac{PO}, SK-\ac{PO}, and BW-\ac{PO} estimators using the Frobenius norm, which are less noisy compared to the sequential estimator. The \ac{SMLEPL} proposed in \cite{elhajjar2024} is expected to yield similar results to \ac{S-COFI-PL} with \ac{SCM} as a plug-in with the \ac{KL} divergence. 

\subsection{Quantitative assessment of the estimated phase}

In Table \ref{tab:quantitative}, we show values of several quantitative criteria such as  \ac{RMSE}, \ac{SCC}, \ac{UQI}, \ac{SSIM} and the colinearity criterion \cite{pinel2012multi}.  The colinearity criterion evaluates whether the results are noisy and allows spatial comparison of the phase with its surroundings. A value closer to $1$ indicates better results.
We select the results of the offline approach \ac{COFI-PL} with the Frobenius norm and the \ac{BW-PO} regularization as our baseline, which outperforms state-of-art approaches as demonstrated in \cite{vu2024covariance}. 
The first axis of comparison is between the offline and sequential processing. For the \ac{KL} divergence, comparable results are observed between the two processing approaches, where in most cases, \ac{S-COFI-PL} outperforms \ac{COFI-PL}. As shown in Fig. \ref{fig:real_data_KL} ((a) and (b)), the sequential processing with \ac{PO} yields better results than the offline processing.

For the Frobenius norm, similar values are observed across all indicators for both sequential and offline processing with each plug-in. The colinearity of outputs is also similar in both sequential and offline approaches, consequently achieving values greater than $0.85$.
Compared to state-of-the-art sequential methods, \ac{S-COFI-PL} with the Frobenius norm delivers superior performance irrespective of the chosen plug-in. For the \ac{KL} divergence, \ac{S-COFI-PL} achieves better results with the shrinking-to-identity regularization. For other metrics, the sequential estimator \cite{ansari2017sequential} generally performs better.
With the sequential \ac{SMLEPL} method \cite{elhajjar2024}, \ac{S-COFI-PL} using the Frobenius norm yields better results across all plug-ins. For \ac{S-COFI-PL} with the \ac{KL} divergence, similar performance to \ac{BW-SCM} is observed, which is better than \ac{SCM} and \ac{BW-PO}. Otherwise, with other plug-ins and regularizations, \ac{S-COFI-PL} demonstrates superior results.

\begin{table*}[]
\centering
\begin{tabular}{|cc||c|c|c|c|c|}
\hline
\multicolumn{2}{|l|}{}                    & \ac{RMSE} & \ac{SCC} & \ac{UQI} & \ac{SSIM} & colinearity \cite{pinel2012multi} \\ \hline
\multicolumn{1}{|l||}{\multirow{6}{*}{\ac{KL} divergence}} &  \ac{S-COFI-PL} \ac{SCM}  & 2.63 & 1.00 & 0.05 & 0.0003 & 0.21\\ \cline{2-7}
\multicolumn{1}{|l||}{}                  & \ac{COFI-PL} \ac{SCM} &  2.51  & 0.93 & 0.07 & 0.001 & 0.24 \\ \cline{2-7}
\multicolumn{1}{|l||}{}                  & \ac{S-COFI-PL} \ac{PO} &  2.01  & 0.55 & 0.36 & 0.11 & 0.55\\ \cline{2-7} 
\multicolumn{1}{|l||}{}                  & \ac{COFI-PL} \ac{PO} &  2.11  & 0.63 & 0.27 & 0.06 & 0.42 \\ \cline{2-7}
\multicolumn{1}{|l||}{}                  & \ac{S-COFI-PL} \ac{SK-SCM} & 2.03  & 0.55 & 0.39 & 0.13 & 0.61\\ \cline{2-7}
\multicolumn{1}{|l||}{}                  & \ac{COFI-PL} \ac{SK-SCM} & 2.03 & 0.56 & 0.38 & 0.12 & 0.60 \\ \cline{2-7}
\multicolumn{1}{|l||}{}                  & \ac{S-COFI-PL} \ac{SK-PO} &  1.35 & 0.23 &  0.73 & 0.58 & 0.90\\ \cline{2-7}
\multicolumn{1}{|l||}{}                  & \ac{COFI-PL} \ac{SK-PO} & 1.40 & 0.25 & 0.70 & 0.53 & 0.87 \\ \cline{2-7}
\multicolumn{1}{|l||}{}                  & \ac{S-COFI-PL} \ac{BW-SCM} &  2.63 & 0.98 & 0.08 & 0.001 & 0.31\\ \cline{2-7}
\multicolumn{1}{|l||}{}                  & \ac{COFI-PL} \ac{BW-SCM} & 2.63 & 1.01 & 0.09 & 0.002 & 0.32 \\ \cline{2-7}
\multicolumn{1}{|l||}{}                  & \ac{S-COFI-PL} \ac{BW-PO} & 2.61  & 0.96 & 0.03 & 0.001 & 0.17\\ \cline{2-7}
\multicolumn{1}{|l||}{}                  & \ac{COFI-PL} \ac{BW-PO} & 2.56 & 0.96 & 0.04 & 0.001 & 0.19 \\ \hline
\multicolumn{1}{|l||}{\multirow{6}{*}{Frobenius norm}} & \ac{S-COFI-PL} \ac{SCM} &  1.80  & 0.42  & 0.56 & 0.28 & 0.87 \\ \cline{2-7}
\multicolumn{1}{|l||}{}                  & \ac{COFI-PL} \ac{SCM} &  1.80 & 0.43 & 0.56 & 0.28 & 0.87 \\ \cline{2-7}
\multicolumn{1}{|l||}{}                  & \ac{S-COFI-PL} \ac{PO} &  1.31   &  0.22 & 0.74  &  0.61 & 0.94 \\ \cline{2-7}
\multicolumn{1}{|l||}{}                  & \ac{COFI-PL} \ac{PO} & 1.32 & 0.23 & 0.74 & 0.60 & 0.94 \\ \cline{2-7}
\multicolumn{1}{|l||}{}                  & \ac{S-COFI-PL} \ac{SK-SCM} &  1.80  & 0.42 &  0.56 & 0.28  & 0.87 \\ \cline{2-7}
\multicolumn{1}{|l||}{}                  & \ac{COFI-PL} \ac{SK-SCM} &  1.80 & 0.43 & 0.56 & 0.28 & 0.87 \\ \cline{2-7}
\multicolumn{1}{|l||}{}                  & \ac{S-COFI-PL} \ac{SK-PO} &  1.31  & 0.22  & 0.74 & 0.61 & 0.94 \\ \cline{2-7}
\multicolumn{1}{|l||}{}                  & \ac{COFI-PL} \ac{SK-PO} & 1.32 & 0.23 & 0.74 & 0.60 & 0.94 \\ \cline{2-7}
\multicolumn{1}{|l||}{}                  & \ac{S-COFI-PL} \ac{BW-SCM} &  1.76 & 0.41  & 0.57  & 0.30 & 0.85 \\ \cline{2-7}
\multicolumn{1}{|l||}{}                  & \ac{COFI-PL} \ac{BW-SCM} &  1.76 & 0.41 & 0.57 &  0.30 & 0.85 \\ \cline{2-7}
\multicolumn{1}{|l||}{}                  & \ac{S-COFI-PL} \ac{BW-PO} &  0.57 & 0.04 & 0.92 & 0.91 & 0.92 \\ \cline{2-7}
\multicolumn{1}{|l||}{}                  & \ac{COFI-PL} \ac{BW-PO} & \texttimes & \texttimes & \texttimes & \texttimes & 0.91 \\ \hline
\multicolumn{2}{|l||}{Sequential Estimator \cite{ansari2017sequential}} & 1.82  & 0.44 & 0.53 & 0.24 & 0.81 \\ \hline
\multicolumn{2}{|l||}{S-MLE-PL \cite{elhajjar2024}} &  2.34 & 0.74 & 0.18 & 0.01 & 0.33 \\ \hline
\end{tabular}
\caption{Quantitative evaluation and comparison of different approaches}
\label{tab:quantitative}
\end{table*}

\subsection{Computation time comparison}

\begin{table}[H]
\centering
\begin{tabular}{|cc|c|}
\hline
\multicolumn{2}{|c|}{Method}  & Computation time \\ \hline
\multicolumn{2}{|c|}{S-MLE-PL \cite{elhajjar2024}}  & $2.18$h \\ \hline 
\multicolumn{2}{|c|}{Sequential Estimator \cite{ansari2017sequential}} & $0.6$h \\ \hline 
\multicolumn{1}{|c|}{\multirow{2}{*}{\ac{S-COFI-PL}}} & \ac{KL} divergence   &  $0.44$h \\ \cline{2-3}  
\multicolumn{1}{|c|}{} & Frobenius norm  & $0.37$h \\ \hline 
\multicolumn{1}{|c|}{\multirow{2}{*}{\ac{COFI-PL} \cite{vu2023covariance} }} & \ac{KL} divergence & $0.52$h \\ \cline{2-3}  
\multicolumn{1}{|c|}{} &  Frobenius norm & $0.43$h \\ \hline 
\end{tabular}
\caption{Computation time comparison for S-MLE-PL \cite{elhajjar2024}, Sequential Estimator \cite{ansari2017sequential}, \ac{S-COFI-PL} and \ac{COFI-PL} for both \ac{KL} divergence and the Frobenius norm}
\label{tab:computation}
\end{table}

As shown in the complexity comparison (section \ref{subsec:complexity}), the proposed approach is less costly than the sequential approaches proposed in the state-of-the-art as well as the offline approaches. In this section, we compare the computation time of each approach applied in Mexico City data set of $40$ images of size $(3638, 16709)$ pixels.
We use a machine with a $95$-core CPU running at $2.2$ GHz and $125$ G of RAM with calculations executed in parallel across the CPUs. We show the computation time of each approach in the table \ref{tab:computation}.

\noindent
For a comparison between the offline processing and our approach, the latter is faster than the COFI-PL  \cite{vu2023covariance} offline algorithm for KL ($17\%$) and Frobenius norm ($13\%$). In comparison with the sequential approaches in the state of the art, our approach is faster than the S-MLE-PL method \cite{elhajjar2024} ($78\%$ and $83\%$ for \ac{KL} and Frobenius, respectively), as well as the sequential estimator \cite{ansari2017sequential} ($63\%$ and $68\%$ for \ac{KL} and Frobenius, respectively).

\section*{Acknowledgment}
This work is funded by the ANR REPED-SARIX project (ANR-21-CE23-0012-01) of the French national Agency of research.

\section{Conclusion}

In this paper, we propose a novel sequential approach based on covariance fitting interferometric \ac{PL}. The proposed framework accommodates the \ac{KL} divergence and the Frobenius norm as well as various plug-in for the covariance matrix estimation, which, through different regularization methods, ensure the robustness of the approach. We provide a \ac{MM} algorithm to solve the different optimization problems. Numerical experiments and real-world study shows the efficiency of the proposed approach in terms of performances and computation time.

\ifCLASSOPTIONcaptionsoff
  \newpage
\fi

\bibliographystyle{IEEEbib}
\bibliography{biblio.bib}

\begin{thebibliography}{10}

\bibitem{osmanouglu2016time}
B.~Osmano{\u{g}}lu, F.~Sunar, S.~Wdowinski, and E.~Cabral-Cano,
\newblock ``{Time series analysis of InSAR data: Methods and trends},''
\newblock {\em ISPRS Journal of Photogrammetry and Remote Sensing}, vol. 115,
  pp. 90--102, 2016.

\bibitem{ferretti2011new}
A.~Ferretti, A.~Fumagalli, F.~Novali, C.~Prati, F.~Rocca, and A.~Rucci,
\newblock ``{A new algorithm for processing interferometric data-stacks:
  SqueeSAR},''
\newblock {\em IEEE transactions on geoscience and remote sensing}, vol. 49,
  no. 9, pp. 3460--3470, 2011.

\bibitem{ansari2018efficient}
H.~Ansari, F.~De Zan, and R.~Bamler,
\newblock ``{Efficient phase estimation for interferogram stacks},''
\newblock {\em IEEE Transactions on Geoscience and Remote Sensing}, vol. 56,
  no. 7, pp. 4109--4125, 2018.

\bibitem{guarnieri2008exploitation}
A.~Guarnieri and S.~Tebaldini,
\newblock ``{On the exploitation of target statistics for SAR interferometry
  applications},''
\newblock {\em IEEE Transactions on Geoscience and Remote Sensing}, vol. 46,
  no. 11, pp. 3436--3443, 2008.

\bibitem{vu2022new}
P.~Vu, F.~Brigui, A.~Breloy, Y.~Yan, and G.~Ginolhac,
\newblock ``{A new phase linking algorithm for multi-temporal InSAR based on
  the Maximum Likelihood Estimator},''
\newblock in {\em IGARSS 2022 IEEE International Geoscience and Remote Sensing
  Symposium}. IEEE, 2022, pp. 76--79.

\bibitem{9763551}
C.~Wang, X.~Wang, Y.~Xu, Bochen Zhang, M.~Jiang, S.~Xiong, Q.~Zhang, W.~Li, and
  Q.~Li,
\newblock ``{A New Likelihood Function for Consistent Phase Series Estimation
  in Distributed Scatterer Interferometry},''
\newblock {\em IEEE Transactions on Geoscience and Remote Sensing}, vol. 60,
  pp. 1--14, 2022.

\bibitem{10261889}
D.~Minh and S.~Tebaldini,
\newblock ``{Interferometric Phase Linking: Algorithm, application, and
  perspective},''
\newblock {\em IEEE Geoscience and Remote Sensing Magazine}, vol. 11, no. 3,
  pp. 46--62, 2023.

\bibitem{cao2015mathematical}
N.~Cao, H.~Lee, and H.~Jung,
\newblock ``{Mathematical framework for phase-triangulation algorithms in
  distributed-scatterer interferometry},''
\newblock {\em IEEE Geoscience and Remote Sensing Letters}, vol. 12, no. 9, pp.
  1838--1842, 2015.

\bibitem{vu2023robust}
P.~Vu, A.~Breloy, F.~Brigui, Y.~Yan, and G.~Ginolhac,
\newblock ``{Robust Phase Linking in InSAR},''
\newblock {\em IEEE Transactions on Geoscience and Remote Sensing}, vol. 61,
  pp. 1--11, 2023.

\bibitem{mian2018new}
A.~Mian, G.~Ginolhac, J-P. Ovarlez, and A.M. Atto,
\newblock ``{New robust statistics for change detection in time series of
  multivariate SAR images},''
\newblock {\em IEEE Transactions on Signal Processing}, vol. 67, no. 2, pp.
  520--534, 2018.

\bibitem{ottersten1998covariance}
B.~Ottersten, P.~Stoica, and R.~Roy,
\newblock ``{Covariance matching estimation techniques for array signal
  processing applications},''
\newblock {\em Digital Signal Processing}, vol. 8, no. 3, pp. 185--210, 1998.

\bibitem{vu2024covariance}
P.~Vu, A.~Breloy, F.~Brigui, Y.~Yan, and G.~Ginolhac,
\newblock ``{Covariance Fitting Interferometric Phase Linking: Modular
  Framework and Optimization Algorithms},''
\newblock {\em arXiv preprint arXiv:2403.08646}, 2024.

\bibitem{elhajjar2024}
D.~El Hajjar, Y.~Yan, G.~Ginolhac, and M.N.~El Korso,
\newblock ``{Sequential phase linking: progressive integration of SAR images
  for operational phase estimation},''
\newblock in {\em IGARSS IEEE International Geoscience and Remote Sensing
  Symposium}. IEEE, 2024.

\bibitem{ansari2017sequential}
H.~Ansari, F.~De Zan, and R.~Bamler,
\newblock ``{Sequential Estimator: Toward efficient InSAR time series
  analysis},''
\newblock {\em IEEE Transactions on Geoscience and Remote Sensing}, vol. 55,
  no. 10, pp. 5637--5652, 2017.

\bibitem{meriaux2019recursions}
B.~M{\'e}riaux, C.~Ren, M.N.~El Korso, A.~Breloy, P.~Forster, and J-P. Ovarlez,
\newblock ``{On the recursions of Robust COMET algorithm for convexly
  structured shape matrix},''
\newblock in {\em 2019 27th European Signal Processing Conference (EUSIPCO)}.
  IEEE, 2019, pp. 1--5.

\bibitem{meriaux2019robust}
B.~M{\'e}riaux, C.~Ren, M.N.~El Korso, A.~Breloy, and P.~Forster,
\newblock ``{Robust estimation of structured scatter matrices in (MIS) matched
  models},''
\newblock {\em Signal Processing}, vol. 165, pp. 163--174, 2019.

\bibitem{hu2010psf}
W.~Hu, J.~Xue, and N.~Zheng,
\newblock ``{PSF Estimation via Covariance Matching},''
\newblock in {\em ICASSP}, 2010, pp. 1354--1357.

\bibitem{werner2008estimation}
K.~Werner, M.~Jansson, and P.~Stoica,
\newblock ``{On estimation of covariance matrices with Kronecker product
  structure},''
\newblock {\em IEEE Transactions on Signal Processing}, vol. 56, no. 2, pp.
  478--491, 2008.

\bibitem{meriaux2017robust}
B.~M{\'e}riaux, C.~Ren, M.N.~El Korso, A.~Breloy, and P.~Forster,
\newblock ``{Robust-COMET for covariance estimation in convex structures:
  algorithm and statistical properties},''
\newblock in {\em 2017 IEEE 7th International Workshop on Computational
  Advances in Multi-Sensor Adaptive Processing (CAMSAP)}. IEEE, 2017, pp. 1--5.

\bibitem{meriaux2020matched}
B.~M{\'e}riaux, C.~Ren, A.~Breloy, M.N.~El Korso, and P.~Forster,
\newblock ``{Matched and mismatched estimation of kronecker product of linearly
  structured scatter matrices under elliptical distributions},''
\newblock {\em IEEE Transactions on Signal Processing}, vol. 69, pp. 603--616,
  2020.

\bibitem{bai2023lamie}
Y.~Bai, J.~Kang, X.~Ding, A.~Zhang, Z.~Zhang, and N.~Yokoya,
\newblock ``{LaMIE: Large-Dimensional Multipass InSAR Phase Estimation for
  Distributed Scatterers},''
\newblock {\em IEEE Transactions on Geoscience and Remote Sensing}, vol. 61,
  pp. 1--15, 2023.

\bibitem{ferretti2001permanent}
A.~Ferretti, C.~Prati, and F.~Rocca,
\newblock ``{Permanent Scatterers in SAR Interferometry},''
\newblock {\em IEEE Transactions on geoscience and remote sensing}, vol. 39,
  no. 1, pp. 8--20, 2001.

\bibitem{fornaro2014caesar}
G.~Fornaro, S.~Verde, D.~Reale, and A.~Pauciullo,
\newblock ``{CAESAR: An approach based on covariance matrix decomposition to
  improve multibaseline--multitemporal interferometric SAR processing},''
\newblock {\em IEEE Transactions on Geoscience and Remote Sensing}, vol. 53,
  no. 4, pp. 2050--2065, 2014.

\bibitem{ledoit2004well}
O.~Ledoit and M.~Wolf,
\newblock ``{A well-conditioned estimator for large-dimensional covariance
  matrices},''
\newblock {\em Journal of multivariate analysis}, vol. 88, no. 2, pp. 365--411,
  2004.

\bibitem{chen2010shrinkage}
Y.~Chen, A.~Wiesel, Y.C. Eldar, and A.O. Hero,
\newblock ``{Shrinkage algorithms for MMSE covariance estimation},''
\newblock {\em IEEE transactions on signal processing}, vol. 58, no. 10, pp.
  5016--5029, 2010.

\bibitem{rucci2010skp}
A.~Rucci, S.~Tebaldini, and F.~Rocca,
\newblock ``{Skp-shrinkage estimator for SAR multi-baselines applications},''
\newblock in {\em 2010 IEEE Radar Conference}. IEEE, 2010, pp. 701--706.

\bibitem{chen2012shrinkage}
X.~Chen, Z.J. Wang, and M.J. McKeown,
\newblock ``{Shrinkage-to-tapering estimation of large covariance matrices},''
\newblock {\em IEEE Transactions on Signal Processing}, vol. 60, no. 11, pp.
  5640--5656, 2012.

\bibitem{pascal2014generalized}
F.~Pascal, Y.~Chitour, and Y.~Quek,
\newblock ``{Generalized robust shrinkage estimator and its application to STAP
  detection problem},''
\newblock {\em IEEE Transactions on Signal Processing}, vol. 62, no. 21, pp.
  5640--5651, 2014.

\bibitem{ollila2022regularized}
E.~Ollila and A.~Breloy,
\newblock ``{Regularized tapered sample covariance matrix},''
\newblock {\em IEEE Transactions on Signal Processing}, vol. 70, pp.
  2306--2320, 2022.

\bibitem{even2018insar}
M.~Even and K.~Schulz,
\newblock ``{InSAR deformation analysis with distributed scatterers: A review
  complemented by new advances},''
\newblock {\em Remote Sensing}, vol. 10, no. 5, pp. 744, 2018.

\bibitem{zwieback2022cheap}
S.~Zwieback,
\newblock ``{Cheap, valid regularizers for improved interferometric phase
  linking},''
\newblock {\em IEEE Geoscience and Remote Sensing Letters}, vol. 19, pp. 1--4,
  2022.

\bibitem{liang2024coherence}
H.~Liang, L.~Zhang, X.~Li, and J.~Wu,
\newblock ``{Coherence bias mitigation through regularized tapered coherence
  matrix for phase linking in decorrelated environments},''
\newblock {\em ISPRS Journal of Photogrammetry and Remote Sensing}, vol. 215,
  pp. 369--382, 2024.

\bibitem{zhao2024regularized}
C.~Zhao, H.~Yu, and Y.~Wang,
\newblock ``{A Regularized Coherence Matrix Estimation Method for Phase Linking
  in Distributed Scatterer Interferometry},''
\newblock in {\em IGARSS IEEE International Geoscience and Remote Sensing
  Symposium}. IEEE, 2024, pp. 2403--2406.

\bibitem{chen2011robust}
Y.~Chen, A.~Wiesel, and A.O. Hero,
\newblock ``{Robust shrinkage estimation of high-dimensional covariance
  matrices},''
\newblock {\em IEEE Transactions on Signal Processing}, vol. 59, no. 9, pp.
  4097--4107, 2011.

\bibitem{bickel2008regularized}
P.J. Bickel and E.Levina,
\newblock ``{Regularized Estimation of Large Covariance Matrices},''
\newblock {\em The Annals of Statistics}, vol. 36, no. 1, pp. 199--227, 2008.

\bibitem{bickel2008covariance}
P.J. Bickel and E.Levina,
\newblock ``{Covariance Regularization by Thresholding},''
\newblock {\em The Annals of Statistics}, vol. 36, no. 6, pp. 2577--2604, 2008.

\bibitem{sun2016majorization}
Y.~Sun, P.~Babu, and D.P. Palomar,
\newblock ``{Majorization-Minimization algorithms in signal processing,
  communications, and machine learning},''
\newblock {\em IEEE Transactions on Signal Processing}, vol. 65, no. 3, pp.
  794--816, 2016.

\bibitem{hunter2004tutorial}
D.R Hunter and K.~Lange,
\newblock ``{A tutorial on MM algorithms},''
\newblock {\em The American Statistician}, vol. 58, no. 1, pp. 30--37, 2004.

\bibitem{razaviyayn2013unified}
M.~Razaviyayn, M.~Hong, and Z.Q. Luo,
\newblock ``{A unified convergence analysis of block successive minimization
  methods for nonsmooth optimization},''
\newblock {\em SIAM Journal on Optimization}, vol. 23, no. 2, pp. 1126--1153,
  2013.

\bibitem{soltanalian2014designing}
M.~Soltanalian and P.~Stoica,
\newblock ``{Designing unimodular codes via quadratic optimization},''
\newblock {\em IEEE Transactions on Signal Processing}, vol. 62, no. 5, pp.
  1221--1234, 2014.

\bibitem{breloy2021majorization}
A.~Breloy, S.~Kumar, Y.~Sun, and D.P. Palomar,
\newblock ``{Majorization-minimization on the Stiefel manifold with application
  to robust sparse PCA},''
\newblock {\em IEEE Transactions on Signal Processing}, vol. 69, pp.
  1507--1520, 2021.

\bibitem{bamler1998synthetic}
R.~Bamler and P.~Hartl,
\newblock ``{Synthetic Aperture Radar Interferometry},''
\newblock {\em Inverse problems}, vol. 14, no. 4, pp. R1, 1998.

\bibitem{vu2023covariance}
P.V.H. Vu, A.~Breloy, F.~Brigui, Y.~Yan, and G.~Ginolhac,
\newblock ``{Covariance fitting based InSAR Phase Linking},''
\newblock in {\em IGARSS IEEE International Geoscience and Remote Sensing
  Symposium}. IEEE, 2023, pp. 8234--8237.

\bibitem{pascal2008empirical}
F.~Pascal, J.P. Barbot, H.~Harari-Kermadec, R.~Suyama, and P.~Larzabal,
\newblock ``{An empirical likelihood method for data aided channel
  identification in unknown noise field},''
\newblock in {\em 2008 16th European Signal Processing Conference}. IEEE, 2008,
  pp. 1--5.

\bibitem{harari2008use}
H.~Harari-Kermadec and F.~Pascal,
\newblock ``{On the use of empirical likelihood for non-gaussian clutter
  covariance matrix estimation},''
\newblock in {\em 2008 IEEE Radar Conference}. IEEE, 2008, pp. 1--6.

\bibitem{esa_snap}
{European Space Agency (ESA)},
\newblock ``{SNAP},'' \url{https://step.esa.int/main/toolboxes/snap}.

\bibitem{pinel2012multi}
B{\'e}atrice Pinel-Puyss{\'e}gur, R{\'e}mi Michel, and Jean-Philippe Avouac,
\newblock ``{Multi-link InSAR time series: Enhancement of a wrapped
  interferometric database},''
\newblock {\em IEEE Journal of Selected Topics in Applied Earth Observations
  and Remote Sensing}, vol. 5, no. 3, pp. 784--794, 2012.

\bibitem{petersen2008matrix}
K.B. Petersen, M.S. Pedersen, et~al.,
\newblock ``{The matrix cookbook},''
\newblock {\em Technical University of Denmark}, vol. 7, no. 15, pp. 510, 2008.

\end{thebibliography}

\clearpage

\appendices
\section{Calculation details of the cost functions}
\label{app_CF}

\noindent
We recall the block structure of the covariance matrix 
\begin{align*}
    \mathbf{\widetilde{\Sigma}}
    &= \left( \begin{array}{c}
        \begin{tabular}{cc}
           $\mathbf{\Sigma}_{p}$ & $(\mathbf{\Sigma}_{pn})^H$ \\
                $\mathbf{\Sigma}_{pn}$ &  $\mathbf{\Sigma}_{n}$ \\
        \end{tabular}
    \end{array} \right)
\end{align*}
where the different notations are provided in Table \ref{tab:tab_plug_in_app}.
\begin{table}[!h]
\renewcommand{\arraystretch}{1.3} 
\centering
\begin{tabular}{|c|c|c|c|}
\hline
\multirow{2}{*}{} &  \makecell{unconstrained  \\ plug-in  \\ (section \ref{subsubsec:unconstrained_seq})} 
                  & \makecell{shrinkage to \\  identity \\ regularization \\ (section \ref{subsubsec:shrinkage_seq})} 
                  &  \makecell{tapering  \\ regularization  \\ (section \ref{subsubsec:tapering_seq})} \\ \cline{2-4}
                  &  \acs{SCM} $\setminus$ \acs{PO} & \acs{SK-SCM} $\setminus$ \acs{SK-PO} &  \acs{BW-SCM} $\setminus$ \acs{BW-PO} \\ \hline
$\mathbf{\Sigma}_{p}$ & $\mathbf{\Sigma}_{p}^{U}$ & $\mathbf{\Sigma}_{p}^{SK}$ &  $\mathbf{\Sigma}_{p}^{BW}$ \\ \hline
$\mathbf{\Sigma}_{pn}$         & $\mathbf{\Sigma}_{pn}^{U}$ & $\mathbf{\Sigma}_{pn}^{SK}$ & $\mathbf{\Sigma}_{pn}^{BW}$ \\ \hline
$\mathbf{\Sigma}_{n}$    & $\mathbf{\Sigma}_{n}^{U}$ & $\mathbf{\Sigma}_{n}^{SK}$ & $\mathbf{\Sigma}_{n}^{BW}$ \\ \hline
\end{tabular}
\caption{$\mathbf{\Sigma}_{p}$, $\mathbf{\Sigma}_{pn}$ and $\mathbf{\Sigma}_{n}$ values depending on the choice of the covariance matrix plug-in.}
\label{tab:tab_plug_in_app}
\end{table}

\noindent
In this appendix, we provide the various calculation steps to derive the generic form of the cost function for the \acs{KL} divergence and the Frobenius norm respectively.

\subsection{Kullback-Leibler divergence}
\label{app_CF_KL}


\noindent
By  designating 
\begin{itemize}
    \item[-] $\mathbf{F} = |\mathbf{\Sigma}_{p}| - (|\mathbf{\Sigma}_{pn}|)^T |\mathbf{\Sigma}_{n}|^{-1}|\mathbf{\Sigma}_{pn}|$
    \item[-] $\mathbf{D} = |\mathbf{\Sigma}_{n}| - |\mathbf{\Sigma}_{pn}||\mathbf{\Sigma}_{p}|^{-1} (|\mathbf{\Sigma}_{pn}|)^T$
    \item[-] $\mathbf{A} = - \mathbf{D}^{-1} |\mathbf{\Sigma}_{pn}||\mathbf{\Sigma}_{p}|^{-1}$
    \item[-] $\mathbf{M} = \mathbf{D}^{-1} \circ \mathbf{\Sigma}_{n}$
\end{itemize}
\noindent
the inverse of a block structured matrix \cite{petersen2008matrix} has the following form 
\begin{align*}
    \mathbf{\widetilde{\Psi}}^{-1} &= 
    \begin{pmatrix}
        \mathbf{F}^{-1} & (\mathbf{A})^T \\
        \mathbf{A} & \mathbf{D}^{-1} \\
    \end{pmatrix}
\end{align*}
\noindent
The cost function in equation  (\ref{eq:CF_KL}) for the \acs{KL} divergence is obtained as follow

\noindent
\begin{minipage}{0.5\columnwidth}
\begin{align*}
    &f^{\text{KL}}_{\mathbf{\widetilde{\Sigma}}}(\mathbf{\widetilde{w}}_{\theta}) = \mathbf{\widetilde{w}}_{\theta}^H (\mathbf{\widetilde{\Psi}}^{-1} \circ \mathbf{\widetilde{\Sigma}}) \mathbf{\widetilde{w}}_{\theta} \\
    &= \begin{pmatrix}
        \mathbf{w}_{\theta}  \\
        \mathbf{\Bar{w}}_{\theta}
    \end{pmatrix} ^H 
    \begin{pmatrix}
    \begin{pmatrix}
        \mathbf{F}^{-1} & (\mathbf{A})^T \\
        \mathbf{A} & \mathbf{D}^{-1} \\
    \end{pmatrix}
    \circ 
    \begin{pmatrix}
      \mathbf{\Sigma}_{p} &   (\mathbf{\Sigma}_{pn})^H \\
      \mathbf{\Sigma}_{pn} & \mathbf{\Sigma}_{n} \\
    \end{pmatrix}
    \end{pmatrix} 
    \begin{pmatrix}
        \mathbf{w}_{\theta} \\
        \mathbf{\Bar{w}}_{\theta}
        \end{pmatrix} \\
    &= \begin{pmatrix}
        \mathbf{w}_{\theta} \\
        \mathbf{\Bar{w}}_{\theta} 
    \end{pmatrix}^H 
    \begin{pmatrix}
       \mathbf{F}^{-1} \circ \mathbf{\Sigma}_{p} & (\mathbf{A})^T  \circ  (\mathbf{\Sigma}_{pn})^H \\
        \mathbf{A} \circ  \mathbf{\Sigma}_{pn} &  \mathbf{M} \\
    \end{pmatrix}
    \begin{pmatrix}
        \mathbf{w}_{\theta} \\
        \mathbf{\Bar{w}}_{\theta}
        \end{pmatrix} \\
    &=  \begin{pmatrix}
        \mathbf{w}_{\theta}  \\
        \mathbf{\Bar{w}}_{\theta} 
    \end{pmatrix} ^H 
    \begin{pmatrix}
       \left( \mathbf{F}^{-1} \circ \mathbf{\Sigma}_{p}\right) \mathbf{w}_{\theta} + \left(  (\mathbf{A})^T  \circ  (\mathbf{\Sigma}_{pn})^H \right)\mathbf{\Bar{w}}_{\theta} \\
        \left(  \mathbf{A} \circ  \mathbf{\Sigma}_{pn} \right) \mathbf{w}_{\theta} +  \mathbf{M} \mathbf{\Bar{w}}_{\theta}\\
    \end{pmatrix} \\
    &= \mathbf{w}_{\theta}^H \left( \mathbf{F}^{-1} \circ \mathbf{\Sigma}_{p}\right) \mathbf{w}_{\theta} + \mathbf{w}_{\theta}^H \left(  (\mathbf{A})^T \circ  (\mathbf{\Sigma}_{pn})^H \right)\mathbf{\Bar{w}}_{\theta} \\
    &\qquad+ \mathbf{\Bar{w}}_{\theta}^H \left(  \mathbf{A} \circ  \mathbf{\Sigma}_{pn} \right) \mathbf{w}_{\theta} + \mathbf{\Bar{w}}_{\theta}^H  \mathbf{M}  \mathbf{\Bar{w}}_{\theta}
\end{align*}
\end{minipage}

\subsection{Frobenius norm}
\label{app_CF_LS}

\noindent
The generic cost function in equation (\ref{eq:CF_Frob}) for the Frobenius norm is obtained as follow

\begin{align*}
    f^{\text{FN}}_{\mathbf{\widetilde{\Sigma}}}(\mathbf{\widetilde{w}}_{\theta}) &= -2 \mathbf{\widetilde{w}}_{\theta}^H (\mathbf{\widetilde{\Psi}} \circ \mathbf{\widetilde{\Sigma}}) \mathbf{\widetilde{w}}_{\theta} \\
    &= -2 \begin{pmatrix}
        \mathbf{w}_{\theta} \\
        \mathbf{\Bar{w}}_{\theta}
    \end{pmatrix}^H  \\
    &\quad
    \left( 
    \ \begin{pmatrix}
            |\mathbf{\Sigma}_{p}| & (|\mathbf{\Sigma}_{pn}|)^T \\
            |\mathbf{\Sigma}_{pn}| &  |\mathbf{\Sigma}_{n}| \\
    \end{pmatrix} \circ 
    \begin{pmatrix}
      \mathbf{\Sigma}_{p} &   (\mathbf{\Sigma}_{pn})^H \\
      \mathbf{\Sigma}_{pn} & \mathbf{\Sigma}_{n} \\
    \end{pmatrix}
    \right) 
    \begin{pmatrix}
        \mathbf{w}_{\theta} \\
        \mathbf{\Bar{w}}_{\theta}
    \end{pmatrix} \\
    &= -2 \begin{pmatrix}
        \mathbf{w}_{\theta} \\
        \mathbf{\Bar{w}}_{\theta}
    \end{pmatrix}^H \\
    &\quad
    \begin{pmatrix}
        (|\mathbf{\Sigma}_{p}| \circ \mathbf{\Sigma}_{p}) \mathbf{w}_{\theta} + (|\mathbf{\Sigma}_{pn}|)^T \circ (\mathbf{\Sigma}_{pn})^H ) \mathbf{\Bar{w}}_{\theta} \\
        (|\mathbf{\Sigma}_{pn}| \circ \mathbf{\Sigma}_{pn} )\mathbf{w}_{\theta} + (|\mathbf{\Sigma}_{n}| \circ \mathbf{\Sigma}_{n}) \mathbf{\Bar{w}}_{\theta}
    \end{pmatrix} \\
    &= - 2 \mathbf{w}_{\theta}^H (|\mathbf{\Sigma}_{p}| \circ \mathbf{\Sigma}_{p}) \mathbf{w}_{\theta} \\
    &\qquad- 2 \mathbf{w}_{\theta}^H (|\mathbf{\Sigma}_{pn}|)^T \circ (\mathbf{\Sigma}_{pn})^H ) \mathbf{\Bar{w}}_{\theta} \\
    &\qquad - 2 \mathbf{\Bar{w}}_{\theta}^H (|\mathbf{\Sigma}_{pn}| \circ \mathbf{\Sigma}_{pn} ) \mathbf{w}_{\theta} \\
    &\qquad - 2 \mathbf{\Bar{w}}_{\theta}^H (|\mathbf{\Sigma}_{n}| \circ \mathbf{\Sigma}_{n}) \mathbf{\Bar{w}}_{\theta}
\end{align*}

\section{Calculations details for the \texorpdfstring{\ac{MM}}{MM} algorithm}
\label{app_MM}

In this appendix, we present the steps to derive the surrogate function for each of the distances, the \acs{KL} divergence and the Frobenius norm respectively.

\subsection{Kullback-Leibler divergence}
\label{app_MM_KL}

The cost function in equation (\ref{eq:CF_KL})  can be reformulated as follow 
\begin{align*}
   f^{\text{KL}}_{\mathbf{\widetilde{\Sigma}}}(\mathbf{\widetilde{w}}_{\theta}) &= \mathbf{w}_{\theta}^H \left( \mathbf{F}^{-1} \circ \mathbf{\Sigma}_p \right) \mathbf{w}_{\theta} \\
   &\qquad + \mathbf{w}_{\theta}^H \left( \left( - |\mathbf{\Sigma}_{p}|^{-1} (|\mathbf{\Sigma}_{pn}|)^T  \mathbf{D}^{-1} \right) \circ  (\mathbf{\Sigma}_{pn})^H \right)\mathbf{\Bar{w}}_{\theta} \\
   &\qquad + \mathbf{\Bar{w}}_{\theta}^H \left( \left( - \mathbf{D}^{-1} |\mathbf{\Sigma}_{pn}||\mathbf{\Sigma}_{p}|^{-1}\right)\circ  \mathbf{\Sigma}_{pn} \right) \mathbf{w}_{\theta} \\
   &\qquad + \mathbf{\Bar{w}}_{\theta}^H \mathbf{M} \mathbf{\Bar{w}}_{\theta}\\
    &\propto \left(\mathbf{\Bar{w}}_{\theta}^H \left( \left( - \mathbf{D}^{-1} |\mathbf{\Sigma}_{pn}||\mathbf{\Sigma}_{p}|^{-1}\right)\circ  \mathbf{\Sigma}_{pn} \right) \mathbf{w}_{\theta} \right)^H \\
    &\qquad + \mathbf{\Bar{w}}_{\theta}^H \left( \left( - \mathbf{D}^{-1} |\mathbf{\Sigma}_{pn}||\mathbf{\Sigma}_{p}|^{-1}\right)\circ  \mathbf{\Sigma}_{pn} \right) \mathbf{w}_{\theta} \\
    &\qquad + \mathbf{\Bar{w}}_{\theta}^H \mathbf{M} \mathbf{\Bar{w}}_{\theta}\\
    &\propto 2 \textit{Re}(\mathbf{\Bar{w}}_{\theta}^H \left( \left( - \mathbf{D}^{-1} |\mathbf{\Sigma}_{pn}||\mathbf{\Sigma}_{p}|^{-1}\right)\circ  \mathbf{\Sigma}_{pn} \right) \mathbf{w}_{\theta}) \\
    &\qquad + \mathbf{\Bar{w}}_{\theta}^H \mathbf{M} \mathbf{\Bar{w}}_{\theta} 
\end{align*}
Using Lemma \ref{lemma1}, the above cost function can be majorized by the following expression
\begin{align*}
    f^{\text{KL}}_{\mathbf{\widetilde{\Sigma}}}(\mathbf{\widetilde{w}}_{\theta}) &\leq 2 \text{Re}\left(\mathbf{\Bar{w}}_{\theta}^H \left( \left( - \mathbf{D}^{-1} |\mathbf{\Sigma}_{pn}||\mathbf{\Sigma}_{p}|^{-1}\right)\circ  \mathbf{\Sigma}_{pn} \right) \mathbf{w}_{\theta}\right)  \\
    &\qquad + 2 \text{Re}\left(\mathbf{\Bar{w}}_{\theta}^H \left[ \mathbf{M} - \lambda_{\max}^{\mathbf{M}} \mathbf{I}_k \right] \mathbf{\Bar{w}}_{\theta}^{(t)} \right) \\
    &= \text{Re}\left( 2 \mathbf{\Bar{w}}_{\theta}^H \left( \left( - \mathbf{D}^{-1} |\mathbf{\Sigma}_{pn}||\mathbf{\Sigma}_{p}|^{-1}\right)\circ  \mathbf{\Sigma}_{pn} \right) \mathbf{w}_{\theta}  \right. \\
    &\left. \qquad + 2  \mathbf{\Bar{w}}_{\theta}^H \left[  \mathbf{M} - \lambda_{\max}^{\mathbf{M}} \mathbf{I}_k \right] \mathbf{\Bar{w}}_{\theta}^{(t)} \right) \\
    &= - \text{Re}\left( \mathbf{\Bar{w}}_{\theta}^H  2 \left[ \left( \left( \mathbf{D}^{-1} |\mathbf{\Sigma}_{pn}||\mathbf{\Sigma}_{p}|^{-1}\right)\circ  \mathbf{\Sigma}_{pn} \right) \mathbf{w}_{\theta} \right. \right. \\
    &\left. \left. \qquad - \left[  \mathbf{M} - \lambda_{\max}^{\mathbf{M}} \mathbf{I}_k \right] \mathbf{\Bar{w}}_{\theta}^{(t)} \right] \right)
\end{align*}

\subsection{Frobenius norm}
\label{app_MM_LS}

\noindent
The cost function in equation (\ref{eq:CF_Frob}) can be written as follow
\begin{align*}
    f^{\text{FN}}_{\mathbf{\widetilde{\Sigma}}}(\mathbf{\widetilde{w}}_{\theta}) &=  - 2 \mathbf{w}_{\theta}^H (|\mathbf{\Sigma}_{p}| \circ \mathbf{\Sigma}_p) \mathbf{w}_{\theta}  \\
    &\qquad - 2 \mathbf{w}_{\theta}^H (|\mathbf{\Sigma}_{pn}|^T \circ (\mathbf{\Sigma}_{pn})^H ) \mathbf{\Bar{w}}_{\theta} \\
    &\qquad - 2 \mathbf{\Bar{w}}_{\theta}^H (|\mathbf{\Sigma}_{pn}| \circ \mathbf{\Sigma}_{pn} ) \mathbf{w}_{\theta} \\
    &\qquad - 2 \mathbf{\Bar{w}}_{\theta}^H (|\mathbf{\Sigma}_{n}| \circ \mathbf{\Sigma}_{n}) \mathbf{\Bar{w}}_{\theta} \\
    &\propto - 2 (\mathbf{\Bar{w}}_{\theta}^H (|\mathbf{\Sigma}_{pn}| \circ \mathbf{\Sigma}_{pn} ) \mathbf{w}_{\theta})^H \\
    &\qquad - 2 \mathbf{\Bar{w}}_{\theta}^H (|\mathbf{\Sigma}_{pn}| \circ \mathbf{\Sigma}_{pn} ) \mathbf{w}_{\theta} \\
    &\qquad - 2 \mathbf{\Bar{w}}_{\theta}^H (|\mathbf{\Sigma}_{n}| \circ \mathbf{\Sigma}_{n}) \mathbf{\Bar{w}}_{\theta} \\
    &\propto - 4 \textit{Re}(\mathbf{\Bar{w}}_{\theta}^H (|\mathbf{\Sigma}_{pn}| \circ \mathbf{\Sigma}_{pn} ) \mathbf{w}_{\theta}) \\
    &\qquad - 2 \mathbf{\Bar{w}}_{\theta}^H (|\mathbf{\Sigma}_{n}| \circ \mathbf{\Sigma}_{n}) \mathbf{\Bar{w}}_{\theta}
\end{align*}

\noindent
Using Lemma \ref{lemma3}, the cost function can be majorized by the following expression
\begin{align*}
    f^{\text{FN}}_{\mathbf{\widetilde{\Sigma}}}(\mathbf{\widetilde{w}}_{\theta}) &\leq - 4 \textit{Re}\left(\mathbf{\Bar{w}}_{\theta}^H (|\mathbf{\Sigma}_{pn}| \circ \mathbf{\Sigma}_{pn} ) \mathbf{w}_{\theta}\right) \\
    &\qquad - 4 \text{Re}\left(\mathbf{\Bar{w}}_{\theta}^H (|\mathbf{\Sigma}_{n}| \circ \mathbf{\Sigma}_{n}) \mathbf{\Bar{w}}_{\theta}^{(t)} \right)\\
    &= - 4 \textit{Re}\left(\mathbf{\Bar{w}}_{\theta}^H (|\mathbf{\Sigma}_{pn}| \circ \mathbf{\Sigma}_{pn} ) \mathbf{w}_{\theta} \right. \\
    &\left. \qquad + \mathbf{\Bar{w}}_{\theta}^H (|\mathbf{\Sigma}_{n}| \circ \mathbf{\Sigma}_{n}) \mathbf{\Bar{w}}_{\theta}^{(t)}\right) \\
    &= - 4 \textit{Re}\left(\mathbf{\Bar{w}}_{\theta}^H \left[(|\mathbf{\Sigma}_{pn}| \circ \mathbf{\Sigma}_{pn} ) \mathbf{w}_{\theta} \right. \right.\\
    &\left. \left. \qquad +  (|\mathbf{\Sigma}_{n}| \circ \mathbf{\Sigma}_{n}) \mathbf{\Bar{w}}_{\theta}^{(t)} \right] \right) \\
    &= - \textit{Re}\left(\mathbf{\Bar{w}}_{\theta}^H  4 \left[(|\mathbf{\Sigma}_{pn}| \circ \mathbf{\Sigma}_{pn} ) \mathbf{w}_{\theta} \right. \right. \\
    &\left. \left. \qquad +  (|\mathbf{\Sigma}_{n}|\circ \mathbf{\Sigma}_{n}) \mathbf{\Bar{w}}_{\theta}^{(t)} \right] \right) \\
\end{align*} 

\end{document}